\theoremstyle{definition}
\newtheorem{dfn}{Definition}
\newtheorem{con}{Condition}
\theoremstyle{plain}
\newtheorem{thm}{Theorem}
\newtheorem{lem}{Lemma}
\newcommand{\mO}{\mathcal{O}}
\newcommand{\tr}{\text{Tr}}
\newcommand{\mcA}{\mathcal{A}}
\newcommand{\mcB}{\mathcal{B}}
\newcommand{\mcC}{\mathcal{C}}
\newcommand{\mcS}{\mathcal{S}}
\newcommand{\mcL}{\mathcal{L}}
\newcommand{\mcP}{\mathcal{P}}
\newcommand{\mcQ}{\mathcal{Q}}
\newcommand{\mc}{\mathcal}
\newcommand{\mf}{\mathfrak}
\newcommand{\mfg}{\mathfrak{g}}
\newcommand{\grad}{\Gamma}
\newcommand{\mcGc}{\mathcal{G}}
\newcommand{\mcG}{\mathcal{G}_{\rm Lie}}
\newcommand{\mcPn}{\mathcal{P}_n}
\newcommand{\LB}{L_{\rm fin}}
\newcommand{\Lbulk}{L_{\rm ini}}
\newcommand{\bt}{\bm{\theta}}
\newcommand{\ex}{\mathcal{X}_{\rm exp}}
\newcommand{\ef}{\mathcal{F}_{\rm eff}}
\newcommand{\x}{x}
\newcommand{\y}{y}
\newcommand{\conn}{ {\stackrel{\mfg}{\longleftrightarrow}} }
\titleformat*{\section}{\raggedright\large\bfseries\sffamily}
\titleformat*{\subsection}{\raggedright\bfseries\sffamily}
\titleformat*{\subsubsection}{\raggedright\bfseries\sffamily}
\begin{document}
	
	\title{Trade-off between Gradient Measurement Efficiency and Expressivity \\ in Deep Quantum Neural Networks}
	
	\author{Koki Chinzei}
	\thanks{chinzei.koki@fujitsu.com}
	\affiliation{Quantum Laboratory, Fujitsu Research, Fujitsu Limited, 4-1-1 Kawasaki, Kanagawa 211-8588, Japan}
	
	\author{Shinichiro Yamano}
	\thanks{yamano@qi.t.u-tokyo.ac.jp \\ These authors contributed equally to this work.}
	\affiliation{Quantum Laboratory, Fujitsu Research, Fujitsu Limited, 4-1-1 Kawasaki, Kanagawa 211-8588, Japan}
	\affiliation{Department of Applied Physics, Graduate School of Engineering, The University of Tokyo, 7-3-1 Hongo Bunkyo-ku, Tokyo 113-8656, Japan}
	
	\author{Quoc Hoan Tran}
	\affiliation{Quantum Laboratory, Fujitsu Research, Fujitsu Limited, 4-1-1 Kawasaki, Kanagawa 211-8588, Japan}
	
	\author{Yasuhiro Endo}
	\affiliation{Quantum Laboratory, Fujitsu Research, Fujitsu Limited, 4-1-1 Kawasaki, Kanagawa 211-8588, Japan}
	
	\author{Hirotaka Oshima}
	\affiliation{Quantum Laboratory, Fujitsu Research, Fujitsu Limited, 4-1-1 Kawasaki, Kanagawa 211-8588, Japan}

	\date{\today}
	
	\begin{abstract}
		
		Quantum neural networks (QNNs) require an efficient training algorithm to achieve practical quantum advantages.
		A promising approach is gradient-based optimization, where gradients are estimated by quantum measurements.
		However, QNNs currently lack general quantum algorithms for efficiently measuring gradients, which limits their scalability.
		To elucidate the fundamental limits and potentials of efficient gradient estimation, we rigorously prove a trade-off between gradient measurement efficiency (the mean number of simultaneously measurable gradient components) and expressivity in deep QNNs. 
		This trade-off indicates that more expressive QNNs require higher measurement costs per parameter for gradient estimation, while reducing QNN expressivity to suit a given task can increase gradient measurement efficiency. 
		We further propose a general QNN ansatz called the stabilizer-logical product ansatz (SLPA), which achieves the trade-off upper bound by exploiting the symmetric structure of the quantum circuit.
		Numerical experiments show that the SLPA drastically reduces the sample complexity needed for training while maintaining accuracy and trainability compared to well-designed circuits based on the parameter-shift method.
		
	\end{abstract}

	\maketitle

	
	\begin{figure*}[t]
		\centering
		\includegraphics[width=\linewidth]{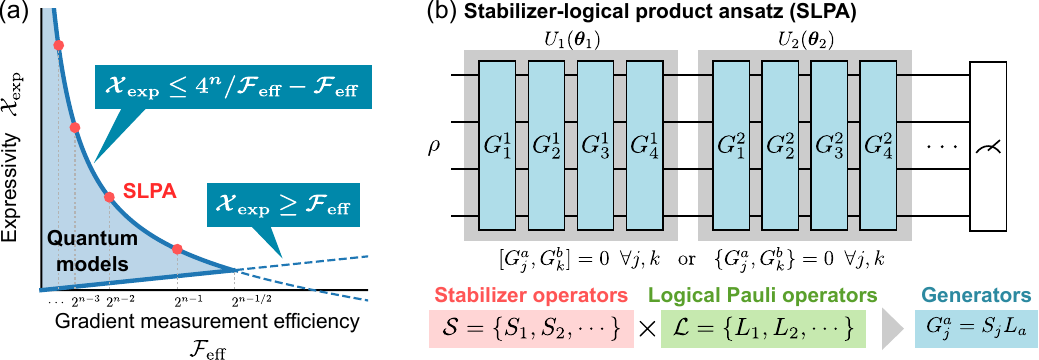}
		\caption{
			{\bf Overview of two main results.}
			(a) A trade-off relation between gradient measurement efficiency and expressivity.
			Any quantum model can only exist in the blue region.
			Gradient measurement efficiency, $\ef$, is defined as the number of simultaneously measurable components in the gradient, and expressivity $\ex$ is the dimension of the dynamical Lie algebra of the parameterized quantum circuit.
			The red circles denote the SLPA, where $2^s$ gradient components can be measured simultaneously (i.e., $\ef=2^s$, $s$ is an integer), reaching the upper bound of the trade-off inequality.
			(b) The circuit structure of SLPA.
			The generators of SLPA are constructed by taking the products of stabilizers and logical Pauli operators.
		}
		\label{fig: overview}
	\end{figure*}

	\noindent
	Deep learning is a breakthrough technology that has a significant impact on a variety of fields~\cite{Hinton2006-lq}.
	In deep learning, neural networks are trained to approximate unknown target functions and represent input-output relationships in various tasks, such as image recognition~\cite{Krizhevsky2012-ya}, natural language processing~\cite{Vaswani2017-ya}, predictions of protein structure~\cite{Jumper2021-ch}, and quantum state approximation~\cite{Carleo2017-nw}.
	Behind the great success of deep learning lies an efficient training algorithm equipped with backpropagation~\cite{Rumelhart1986-qt, Baydin2017-qv}.
	Backpropagation allows us to efficiently evaluate the gradient of an objective function at approximately the same computational cost as a single model evaluation, optimizing the neural network based on the evaluated gradient.
	This technique is essential for training large-scale neural networks with enormous parameters, including large language models~\cite{Vaswani2017-ya}.

	In analogy to deep learning, variational quantum algorithms (VQAs) have emerged as a promising technology for solving classically intractable problems, e.g., in quantum chemistry and physics~\cite{Peruzzo2014-oj}, optimization~\cite{Farhi2014-as}, and machine learning~\cite{Farhi2018-nt, Liu2018-bd, Mitarai2018-ap, Benedetti2019-ph, Schuld2020-vz}.
	In VQAs, a parameterized quantum circuit, called a quantum neural network (QNN) in the context of quantum machine learning (QML)~\cite{Wiebe2014-un, Schuld2015-su, Biamonte2017-fz, Dunjko2018-jv}, is optimized by minimizing an objective function with quantum and classical computers~\cite{Cerezo2021-un}.
	As in classical neural networks, gradient-based optimization algorithms are effective for training QNNs, where gradients are estimated by quantum measurements.
	However, unlike classical models, efficient gradient estimation is challenging in QNNs because quantum states collapse upon measurement~\cite{Gilyen2019-qb}.
	In fact, general QNNs lack an efficient gradient measurement algorithm that achieves the same computational cost scaling as classical backpropagation when only one copy of the quantum data is accessible at a time~\cite{Abbas2023-hy}.
	Instead, the gradient is usually estimated using the parameter-shift method~\cite{Mitarai2018-ap, Schuld2019-rr}, where each gradient component is measured independently.
	This method is easy to implement but leads to a high measurement cost that is proportional to the number of parameters, which prevents QNNs from scaling up in near-term quantum devices without sufficient computational resources.

	Despite the lack of general algorithms to achieve backpropagation scaling, a commuting block circuit (CBC) has recently been proposed as a well-structured QNN for efficient gradient estimation~\cite{Bowles2023-vf, Coyle2024-ff}.
	The CBC consists of $B$ blocks containing multiple variational rotation gates, and the generators of rotation gates in two blocks are either all commutative or all anti-commutative.
	This specific structure allows us to estimate the gradient using only $2B-1$ types of quantum measurements, which is independent of the number of rotation gates in each block, potentially achieving the backpropagation scaling.
	Therefore, the CBC can be trained more efficiently than conventional QNNs based on the parameter-shift method.

	Inspired by these developments, several key questions about gradient measurement in QNNs arise. 
	What factor fundamentally limits gradient measurement efficiency in QNNs? 
	More specifically, what is the theoretically most efficient QNN model for measuring gradients? 
	The specific structure of CBC suggests that there is a trade-off for its high gradient measurement efficiency, but this remains unclear. 
	Answering these questions is crucial not only for the theoretical understanding of efficient training in QNNs but also for the realization of practical QML.

	\newpage
	This work answers these questions for general QNNs.
	We first formulate gradient measurement efficiency in terms of the simultaneous measurability of gradient components, proving a general trade-off between gradient measurement efficiency and expressivity in a wide class of deep QNNs [Fig.~\ref{fig: overview} (a)].
	This trade-off implies that more expressive QNNs require higher measurement costs per parameter for gradient estimation, whereas the gradient measurement efficiency can be increased by reducing the QNN expressivity to suit a given task.
	Based on this trade-off, we further propose a general ansatz of CBC, the stabilizer-logical product ansatz (SLPA), as an optimal model for gradient estimation.
	This ansatz is inspired by the stabilizer code in quantum error correction~\cite{Gottesman1997-uo}, exploiting the symmetric structure of the circuit to enhance gradient measurement efficiency [Fig.~\ref{fig: overview} (b)].
	Remarkably, the SLPA can reach the upper bound of the trade-off inequality, i.e., it enables gradient estimation with theoretically the fewest types of quantum measurements for a given expressivity.
	Owing to its symmetric structure, the SLPA can be applied to various problems involving symmetry, which are common in quantum chemistry, physics, and machine learning.
	As a demonstration, we consider the task of learning an unknown symmetric function.
	Numerical experiments show that the SLPA can drastically reduce the number of data samples needed for training without sacrificing accuracy and trainability compared to several QNNs based on the parameter-shift method.
	These results reveal the theoretical limits and possibilities of efficient training in parameterized quantum circuits, thus paving the way for realizing practical quantum advantages in various research fields related to VQAs, including QML.

	\section*{Results}
	\vspace{-0.42cm}
	\subsection*{Model, efficiency, and expressivity} \label{sec: problem setting}
	\vspace{-0.42cm}
	
	\noindent
	We consider the following QNN on an $n$-qubit system:
	\begin{align}
		U(\bt) = \prod_{j=1}^L \exp(i\theta_j G_j), \label{eq: model}
	\end{align}
	where $G_j$ is a Pauli operator in $\mcPn=\{I, X, Y, Z\}^{\otimes n}$, $\bt = (\theta_1,\theta_2,\cdots)$ are variational rotation angles, and $L$ is the number of rotation gates.
	Let $\mcGc=\{G_j\}_{j=1}^L$ be the set of generators. 
	We also consider a cost function defined as the expectation value of a Pauli observable $O$:
	\begin{align}
		C(\bt) = \tr\left[ \rho U^\dag(\bt) O U(\bt) \right],
	\end{align}
	where $\rho$ is an input quantum state.

	To reveal the theoretical limit of efficient gradient estimation, let us first define the simultaneous measurability of two different components in the gradient.
	The derivative of the cost function by $\theta_j$ is written as
	\begin{align}
		&\partial_j C(\bt) = \tr \left[ \rho \grad_j(\bt) \right],  \label{eq: Cost gradient}
	\end{align}
	where we have introduced a {\it gradient operator} $\grad_j$: 
	\begin{align}
		&\grad_j(\bt) = \partial_j \left[ U^\dag(\bt)OU(\bt) \right]. \label{eq: gradj}
	\end{align}
	In other words, $\partial_j C$ is identical to the expectation value of the gradient operator $\grad_j$ for the input state $\rho$.
	In quantum mechanics, two observables $A$ and $B$ can be simultaneously measured for any input state $\rho$ if and only if $[A,B]=0$.
	Therefore, we define the simultaneous measurability of two gradient components $\partial_j C$ and $\partial_k C$ as $[\grad_j(\bt),\grad_k(\bt)]=0$ for all $\bt$.

	Based on this definition, we partition the gradient operators $\{\grad_j\}_{j=1}^L$ into $M_L$ simultaneously measurable sets.
	In this partitioning, all operators in the same set are simultaneously measurable in the sense that they are mutually commuting.
	This partitioning enables us to estimate all gradient components using $M_L$ types of quantum measurements in principle.
	Thereby, we define gradient measurement efficiency for finite-depth QNNs as 
	\begin{align}
		\ef^{(L)} = \frac{L}{\text{min}(M_L)},
	\end{align}
	where $\text{min}(M_L)$ is the minimum number of sets among all possible partitions.
	We also define the gradient measurement efficiency in the deep circuit limit:
	\begin{align}
		\ef = \lim_{L\to\infty} \ef^{(L)}.
	\end{align}
	In this definition, $\ef$ indicates the mean number of simultaneously measurable components in the gradient.
	Therefore, the larger $\ef$ is, the more efficiently the gradient can be measured.
	This work mainly focuses on the deep circuit limit, but our results also have implications for efficient gradient measurement in finite-depth QNNs.

	Another key factor in this work is expressivity, which is crucial in understanding quantum circuits and realizing universal computation.
	In particular, whether a given class of quantum circuits can express all unitaries, i.e., whether it is universal, has been investigated in various contexts, such as fault-tolerant quantum computation~\cite{Nielsen2010-pf, DiVincenzo1995-qe, Lloyd1995-uk, Barenco1995-nx}, quantum dynamics~\cite{Lloyd2018-ca, Marvian2022-ir, Morales2020-tv}, random quantum circuits~\cite{Gross2007-ts, Dankert2005-dr, Dankert2009-rj, Ji2018-gk}, and VQAs~\cite{Sim2019-am, Larocca2022-so, Zheng2023-ze}, with and without symmetry constraints.
	Here, to quantify expressivity, we employ the dynamical Lie algebra (DLA)~\cite{Albertini2001-xk, Zeier2011-bq, D-Alessandro2007-kk}.
	To this end, we consider the Lie closure $i \mcG= \braket{i\mcGc}_\text{Lie}$, which is defined as the set of Pauli operators obtained by repeatedly taking the nested commutator between the circuit generators in $i\mcGc=\{iG_j\}_{j=1}^L$.
	The DLA is defined from $\mcG$ as
	\begin{align}
		\mfg = \text{span}\left( \mcG \right),
	\end{align}
	which is the subspace of $\mf{su}(2^n)$.
	The DLA characterizes which unitaries the QNN can express in the overparameterized regime of $L\gtrsim \text{dim}(\mfg)$.
	That is, $U(\bt) \in e^{\mfg}$ for all $\bt$~\cite{Larocca2023-ll}.
	Therefore, we define the QNN expressivity in the deep circuit limit as the dimension of the DLA:
	\begin{align}
		\ex = \text{dim}(\mfg).
	\end{align}
	For example, a hardware-efficient ansatz consisting of local Pauli rotations $\mcGc=\{X_j, Y_j, Z_jZ_{j+1}\}_{j=1}^n$ has the DLA of $\ex=4^n-1$~\cite{Larocca2022-so}, indicating that this ansatz is universal and can express all unitaries in the deep circuit limit.
	Note that $\ex\leq 4^n-1$ as $\mfg$ is a subset of $\mf{su}(2^n)$.
	
	\subsection*{Efficiency-expressivity trade-off} \label{sec: main theorem}
	\vspace{-0.42cm}
	
	\noindent
	Remarkably, the upper bound of gradient measurement efficiency depends on expressivity.
	Here, we present the main theorem on the relationship between gradient measurement efficiency and expressivity (the proof is provided in Methods and Supplementary Sections~\ref{secap: App proof}--\ref{secap: proof eqs}):
	\begin{thm}[Informal] \label{thm: main_informal}
		In deep QNNs, gradient measurement efficiency and expressivity obey the following inequalities:
		\begin{align}
			&\ex \leq \frac{4^n}{\ef} - \ef, \label{eq: main_inequality} 
		\end{align} 
		and
		\begin{align}
			&\ex \geq \ef. \label{eq: main_inequality2}
		\end{align} 
	\end{thm}

	The inequality~\eqref{eq: main_inequality} represents a trade-off relation between gradient measurement efficiency and expressivity as shown in Fig.~\ref{fig: overview} (a).
	This trade-off indicates that more expressive QNNs require higher measurement costs per parameter for gradient estimation.
	For example, when a QNN has maximum expressivity $\ex=4^n-1$ (e.g., a hardware-efficient ansatz with $\mcGc=\{X_j, Y_j, Z_jZ_{j+1}\}_{j=1}^n$), the gradient measurement efficiency must be $\ef=1$, which implies that two or more gradient components cannot be measured simultaneously in this model.

	The trade-off inequality~\eqref{eq: main_inequality} also suggests that we can increase gradient measurement efficiency by reducing the expressivity to fit a given problem, i.e., by encoding prior knowledge of the problem into the QNN as an inductive bias.
	In the context of QML, such a problem-tailored model is considered crucial to achieving quantum advantages~\cite{Kubler2021-xk}.
	For instance, an equivariant QNN, where the symmetry of a problem is encoded into the QNN structure, is one of the promising problem-tailored quantum models and exhibits high trainability and generalization by reducing the parameter space to search~\cite{Bronstein2021-is, Verdon2019-pi, Zheng2023-ze, Larocca2022-mj, Meyer2023-vx, Skolik2023-lq, Ragone2022-va, Nguyen2024-nt, Schatzki2024-il, Sauvage2024-vd, Chinzei2024-nm, Chinzei2024-tl}.
	Later, we will introduce a general QNN ansatz called SLPA that leverages prior knowledge about symmetry to reach the upper bound of the trade-off inequality~\eqref{eq: main_inequality}.

	On the other hand, the inequality~\eqref{eq: main_inequality2} shows that gradient measurement efficiency is bounded by expressivity.
	This bound is achieved, for example, in the commuting generator circuit~\cite{Bowles2023-vf}, where all generators, $G_1, \cdots, G_L$, are mutually commuting.
	This circuit allows us to measure all gradient components simultaneously, implying $\ef=L$.
	Also, since all generators are commutative, the expressivity is $\ex=L$ by definition.
	Consequently, the commuting generator circuit achieves the bound of the inequality $\ex\geq \ef$.
	As a result of this inequality, a deeply overparameterized QNN with $L\gg \ex$ parameters requires $L/\ef \gg \ex/\ef \geq 1$ types of quantum measurements to estimate the gradient, leading to high measurement costs for training.
	Furthermore, combining inequalities~\eqref{eq: main_inequality} and \eqref{eq: main_inequality2}, we have $\ef\leq 2^{n-1/2}$, which is the upper bound of gradient measurement efficiency in QNNs.

	We remark on the relationship between gradient measurement efficiency and trainability.
	In deep QNNs, it is known that the variance of the cost function in the parameter space decays inversely proportional to $\text{dim}(\mfg)$: $\text{Var}_{\bt}[C(\bt)] \sim 1/\text{dim}(\mfg)=1/\ex$~\cite{Larocca2022-so, Ragone2024-hl, Fontana2024-ky}.
	Thus, exponentially high expressivity results in an exponentially flat landscape of the cost function (so-called the barren plateaus~\cite{McClean2018-qf}), which makes efficient training of QNNs impossible.
	In other words, expressivity has a trade-off with trainability.
	By contrast, high gradient measurement efficiency and high trainability are compatible.
	From the trade-off inequality~\eqref{eq: main_inequality} and $\text{Var}_{\bt}[C(\bt)] \sim 1/\ex$, we have $\ef \lesssim 4^n \text{Var}_{\bt}[C(\bt)]$.
	This inequality means that we can increase $\ef$ and $\text{Var}_{\bt}[C(\bt)]$ simultaneously, indicating the compatibility of high gradient measurement efficiency and high trainability.

	\subsection*{Stabilizer-logical product ansatz} \label{sec: Most efficient model}
	\vspace{-0.42cm}
	
	\noindent
	While Theorem~\ref{thm: main_informal} reveals the general trade-off relation between gradient measurement efficiency and expressivity, designing optimal quantum models that reach the upper bound of the trade-off inequality is a separate issue for realizing practical QML.
	Here, we propose a general ansatz of CBC called SLPA, which is constructed from stabilizers and logical Pauli operators to reach the upper bound of the trade-off (see Methods for the details of CBC).
	This ansatz is also insightful in understanding the trade-off relation from a symmetry perspective.

	The SLPA is inspired by the stabilizer code in quantum error correction~\cite{Gottesman1997-uo}.
	The stabilizer code is characterized by a stabilizer group $\mcS=\{S_1,S_2,\cdots\}\subset \mcPn$, which satisfies
	\begin{align}
		[S_j, S_k] &= 0 \quad \forall j,k, \label{eq: S and S}
	\end{align}
	where $S_j\in\mcS$ is a Pauli operator.
	Given that $s$ is the number of independent stabilizers in $\mcS$, the order of $\mcS$ is given by $|\mcS|=2^s$, where an arbitrary element of $\mcS$ is written as a product of the $s$ independent stabilizers.
	For this stabilizer group, we consider logical Pauli operators $\mcL=\{L_1,L_2,\cdots\}\subset \mcPn$, which commute with the stabilizers:
	\begin{align}
		[S_j, L_a] &= 0 \quad \forall j,a, \label{eq: S and L}
	\end{align}
	where $L_a\in\mcL$ is a Pauli operator.
	In quantum error correction, the stabilizer code encodes quantum information into $n-s$ logical qubits in the code space (the Hilbert subspace in which the eigenvalues of the stabilizers are all $+1$), detecting errors that flip the eigenvalues of the stabilizers.
	The logical operators correspond to operations on the encoded logical qubits.
	This stabilizer code establishes the foundation of fault-tolerant quantum computation based on promising quantum error-correcting codes, including the surface code~\cite{Kitaev2003-vs, Horsman2012-bu} and more general quantum low-density parity-check codes~\cite{Breuckmann2021-ug}.

	The SLPA leverages the algebraic structure of the stabilizer code to improve gradient measurement efficiency without focusing on fault tolerance.
	Suppose that we are given stabilizers $\mcS$ and logical Pauli operators $\mcL$.
	The SLPA comprises block unitaries containing multiple variational rotation gates, as shown in Fig.~\ref{fig: overview} (b).
	To construct the SLPA, we take the product of the stabilizer $S_j$ and the logical Pauli operator $L_a$, defining the $j$th generator of the $a$th block as
	\begin{align}
		G_j^a = S_jL_a.
		\label{eq: CBC construction}
	\end{align}
	Using these generators, we construct the SLPA
	\begin{align}
		U_\text{SLPA}(\bt)=\prod_a U_a(\bt_a) \label{eq: SLPA unitary}
	\end{align}
	with block unitaries
	\begin{align}
		U_a(\bt_a) = \prod_j \exp(i\theta_j^a G_j^a), \label{eq: SLPA unitary2}
	\end{align}
	where each block can contain at most $|\mcS|=2^s$ variational rotation gates.
	The multi-Pauli rotation gates in the circuit can be implemented using, for example, a single-Pauli rotation and Clifford gates~\cite{Gottesman1997-uo}.

	This SLPA enables efficient gradient estimation due to the following commutation relations between the generators $G_j^a$.
	First, the generators within the same block are all commutative:
	\begin{align}
		[G^a_j, G_k^a] = 0 \,\,\, \forall j,k. 
	\end{align}
	Second, the generators in any two distinct blocks are either all commutative or all anti-commutative:
	\begin{align}
		&[G^a_j, G_k^b] = 0 \,\,\, \forall j,k, \quad\text{or}\quad \{G^a_j, G_k^b\} = 0 \,\,\, \forall j,k. 
	\end{align}
	These commutation relations satisfy the requirements of CBC, ensuring that all gradient components of each block can be measured using only two types of quantum circuits.
	Specifically, the gradient can be measured efficiently using the linear combination of unitaries with an ancilla qubit (Supplementary Section~\ref{secap: CBC measure}).
	Conversely, any CBC can be formulated using a stabilizer group and logical Pauli operators, implying that the SLPA is a general ansatz of CBC (Supplementary Section~\ref{secap: stabilizer formalism}).
	This stabilizer formalism also allows us to understand a necessary condition for the backpropagation scaling of CBC in general (Supplementary Section~\ref{secap: backprop SLPA}).

	It is noteworthy that the SLPA has symmetry $\mcS$: 
	\begin{align}
		[U_\text{SLPA},\mcS]=0,
	\end{align}
	which is derived from $[G_j^a,\mcS]=[S_jL_a, \mcS]=0$.
	As shown in the following numerical experiment section, leveraging this property allows us to solve problems with symmetry $\mcS$ efficiently and accurately.
	Furthermore, when the stabilizers $\mcS$ commute with the observable $O$, we can measure all gradient components within a block simultaneously (i.e., only one type of quantum circuit is sufficient to measure all gradient components of each block) because the generators of the block are either all commutative or all anti-commutative with $O$: $[G_j^a,O]=0$ $\forall j$ or $\{G_j^a,O\}=0$ $\forall j$ (see also Methods).

	Let us revisit the efficiency-expressivity trade-off from the perspective of SLPA.
	We first consider the gradient measurement efficiency of SLPA.
	When the stabilizers $\mcS$ commute with the observable $O$, all gradient components of each block can be measured simultaneously, as discussed above.
	Therefore, given that each block can contain at most $|\mcS|=2^s$ rotation gates, the gradient measurement efficiency, which is the mean number of simultaneously measurable gradient components, obeys
	\begin{align}
		\ef \leq 2^s.
		\label{eq: CBC eff}
	\end{align}
	The equality holds if all blocks of the SLPA have $2^s$ rotation gates.

	The stabilizers limit expressivity instead of providing high gradient measurement efficiency.
	Since all the generators of the SLPA commute with the stabilizers (i.e., $[\mcGc,\mcS]=0$), the DLA generated by them also commute with the stabilizers (i.e., $[\mfg,\mcS]=0$).
	Then, the dimension of the subspace in $\mathfrak{su}(2^n)$ stabilized by $\mcS$ (the centralizer of $\mcS$) is $4^n/|\mcS|=4^n/2^s$.
	Furthermore, given that $[\mfg,\mcS]=[O,\mcS]=0$, we can assume that the stabilizers $\mcS$ are not included in the generators $\mcGc$ and thus the DLA $\mfg$ because rotation gates generated by stabilizers, $e^{i\theta S_j}$, never affect the result of the cost function.
	This consideration provides the upper bound of the DLA dimension, namely the expressivity:
	\begin{align}
		\ex \leq \frac{4^n}{2^s} - 2^s,
		\label{eq: CBC exp}
	\end{align}
	where the equality holds if the DLA covers the entire subspace stabilized by $\mcS$ except for $\mcS$ (i.e., the ansatz is universal under the symmetry constraint).
	Specifically, given $k=n-s$ logical qubits for the stabilizer group $\mcS$, if $\mcL$ contains all $4^k -1$ logical Pauli operators except for the identity, the generators $\mcGc=\{S_jL_a\}_{j,a}$ are sufficient to reach the upper bound of the above inequality as $\ex=(4^k-1)2^s =4^n/2^s - 2^s$.
	For example, in the toric code on the torus with $k=2$~\cite{Kitaev2003-vs}, two logical $X$ and $Z$ operators, which correspond to the two types of $X$ and $Z$ chains that wrap around the torus in the two directions, and their products (logical Pauli strings) are sufficient for universal ansatz.
	We note that $4^k -1$ logical Pauli operators can be systematically obtained for any stabilizer group by considering the standard form of the check matrix~\cite{Nielsen2010-pf}.

	Combining Eqs.~\eqref{eq: CBC eff} and \eqref{eq: CBC exp}, we obtain the trade-off inequality~\eqref{eq: main_inequality} for the SLPA, $\ex\leq 4^n/\ef-\ef$.
	Then, the upper bound of the inequality is attained if the numbers of rotation gates in all blocks are $|\mcS|=2^s$ and the DLA covers the entire subspace under the symmetry constraint $\mcS$, indicating that the SLPA can be optimal for gradient estimation.
	This result also describes the trade-off relation between gradient measurement efficiency and expressivity from the perspective of symmetry; we can measure several gradient components simultaneously by exploiting the symmetric structure of the quantum circuit, but this instead constrains expressivity.

	\subsection*{Numerical demonstration: learning a symmetric function} \label{sec: demo}
	\vspace{-0.42cm}

	\begin{figure*}[t]
		\centering
		\includegraphics[width=\linewidth]{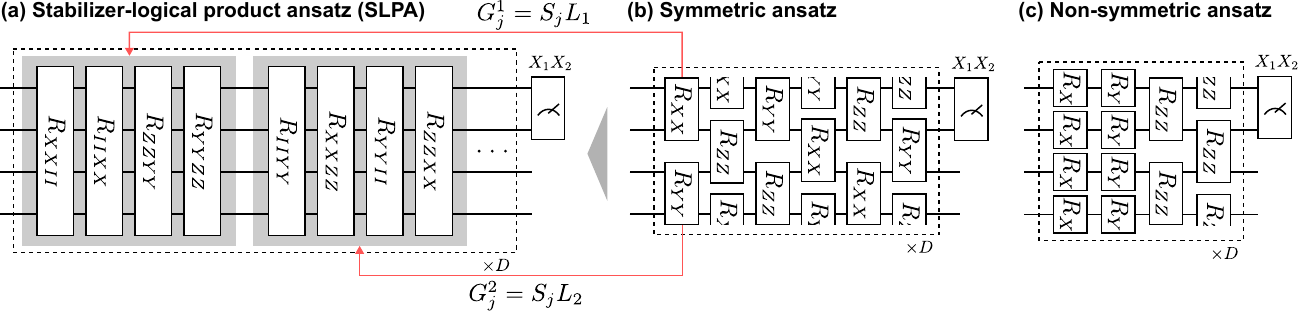}
		\caption{
			{\bf Quantum circuits used in numerical experiments.}
			The SLPA is constructed from the symmetric ansatz by taking the products of the stabilizers and the generators of the symmetric ansatz.
		}
		\label{fig: demo_circ}
	\end{figure*}

	\begin{figure*}[t]
		\centering
		\includegraphics[width=\linewidth]{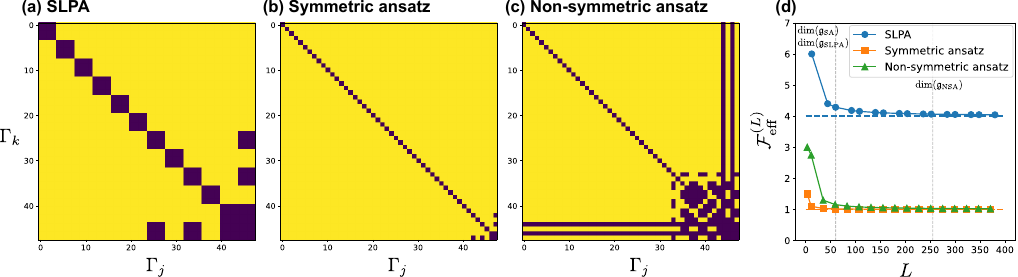}
		\caption{
			{\bf Gradient measurement efficiency.}
			(a)--(c) Commutators between two gradient operators $\grad_j(\bt)$ and $\grad_k(\bt)$ in the SLPA and symmetric and non-symmetric ansatzes for $n=4$ qubits and $L=48$ parameters.
			The black and yellow regions represent $[\grad_j(\bt),\grad_k(\bt)]=0$ and $[\grad_j(\bt),\grad_k(\bt)]\neq 0$ for random $\bt$, respectively.
			(d) Changes in gradient measurement efficiency when the number of parameters $L$ is varied for $n=4$.
			Their values are computed by minimizing the number of simultaneously measurable sets of $\grad_j(\bt)$'s for random $\bt$.
			The blue circles, orange squares, and green triangles are the results of SLPA and symmetric and non-symmetric ansatzes, approaching four and one in the limit of $L\to\infty$, respectively.
			The dashed gray lines represent the DLA dimension of each model, $\text{dim}(\mfg)$.
		}
		\label{fig: commutator}
	\end{figure*}

	\noindent
	Due to its symmetric structure, the SLPA is particularly effective in problems involving symmetry.
	Such problems are common in quantum chemistry, physics, and machine learning, which are the main targets of quantum computing.
	As a demonstration, we consider the task of learning an unknown symmetric function $f:\rho \mapsto y$, where $\rho$ is an input quantum state, $y$ is a real scalar, and $f(\rho)$ is assumed to be linear with respect to $\rho$.
	Here, suppose we know in advance that $f(\rho)=f(S_j\rho S_j^\dag)$ holds for $\forall S_j\in\mcS$, where $\mcS$ is a stabilizer group.
	This type of learning task is standard in the context of geometric deep learning~\cite{Bronstein2021-is} and has broad applications, including molecular dynamics~\cite{Batzner2022-xy}, electronic structure calculations~\cite{Gong2023-fw}, and computer vision~\cite{Cohen2016-mv}.
	To learn the unknown function, we use a quantum model $h_{\bt}(\rho)=\tr[U(\bt)\rho U^\dag(\bt) O]$ to approximate $f(\rho)$.
	For the symmetric function, an equivariant QNN is effective in achieving high accuracy, trainability, and generalization~\cite{Bronstein2021-is, Verdon2019-pi, Zheng2023-ze, Larocca2022-mj, Meyer2023-vx, Skolik2023-lq, Ragone2022-va, Nguyen2024-nt, Schatzki2024-il, Sauvage2024-vd, Chinzei2024-nm, Chinzei2024-tl}.
	The equivariant QNN consists of an $\mcS$-symmetric circuit $U(\bt)$ and an $\mcS$-symmetric observable $O$ (i.e., $[U(\bt),\mcS]=[O,\mcS]=0$), satisfying the same invariance as the target function $h_{\bt}(\rho)=h_{\bt}(S_j\rho S_j^\dag)$.
	The SLPA can be viewed as an equivariant QNN due to its symmetry, allowing us to improve accuracy, trainability, and generalization as well as gradient measurement efficiency.

	For concreteness, we consider a stabilizer group
	\begin{align}
		\mcS=\left\{I,\prod_{j=1}^nX_j,\prod_{j=1}^nY_j,\prod_{j=1}^nZ_j \right\}
	\end{align}
	with even $n$, where the number of independent operators in $\mcS$ is $s=2$.
	Suppose that the target symmetric function is given by $f(\rho)=\tr[U_\text{tag}\rho U_\text{tag}^\dag O]$ with an $\mcS$-symmetric random unitary $U_\text{tag}$ and the $\mcS$-symmetric observable $O$ (i.e., $[U_\text{tag},\mcS]=[O,\mcS]=0$).
	Here, we set $O=X_1X_2$.
	To learn this function, we use $N_t$ training data $\{\ket{x_i},y_i\}_{i=1}^{N_t}$, where $\ket{x_i}=\bigotimes_{j=1}^n \ket{s_{i}^j}$ and $y_i=f(\ket{x_i})$ are the product state of single qubit Haar-random states and its label, respectively.
	The model is optimized by minimizing the mean squared error loss function.
	We also prepare $N_t$ test data for validation, which are sampled independently of training data. 
	Below, we set $N_t=50$.

	We employ three types of variational quantum circuits,  SLPA and local symmetric and non-symmetric ansatzes, to learn the function $f(\rho)$ via $h_{\bt}(\rho)=\tr[U(\bt)\rho U^\dag(\bt) O]$ (see Fig.~\ref{fig: demo_circ} and Methods for detailed descriptions).
	First, the generators of the local symmetric ansatz (SA) are given by $\mcGc_\text{SA}=\{X_j X_{j+1},Y_j Y_{j+1},Z_j Z_{j+1}\}_{j=1}^n$, which commutes with the stabilizers $\mcS$.
	The DLA generated by $\mcGc_\text{SA}$ covers the entire subspace stabilized by $\mcS$, except for $\mcS$, indicating that $\ex=4^n/4 - 4$ (Supplementary Section~\ref{secap: DLA of our ansatz}).
	Then, although the upper bound of the gradient measurement efficiency is $\ef=4$ by Theorem~\ref{thm: main_informal}, this symmetric ansatz cannot reach it but instead exhibits $\ef=1$, as shown later.
	We can construct the SLPA from this symmetric ansatz via Eqs.~\eqref{eq: CBC construction}--\eqref{eq: SLPA unitary2} by regarding the generator of the symmetric ansatz as the logical Pauli operator $L_a$.
	In other words, we construct a block of the SLPA from each rotation gate of the symmetric ansatz by taking the product of the generator $L_a$ and the stabilizers $\mcS$, where the generators are given by $\mcGc_\text{SLPA}=\mcGc_\text{SA}\times\mcS$.
	The DLA dimension of SLPA is the same as the symmetric ansatz, $\ex=4^n/4 - 4$.
	Meanwhile, given that the generators of each block are either all commutative or all anti-commutative with the observable $O$ (which is derived by $[O,\mcS]=0$), we can measure all four gradient components of each block simultaneously.
	Therefore, in contrast to the symmetric ansatz, this SLPA can reach the upper bound of the gradient measurement efficiency $\ef=4$.
	For comparison, we also consider the local non-symmetric ansatz (NSA), where the generators $\mcGc_\text{NSA}=\{X_j,Y_j,Z_jZ_{j+1}\}_{j=1}^n$ do not commute with the stabilizers $\mcS$ and lead to the maximum expressivity in the full Hilbert space $\ex=4^n-1$~\cite{Larocca2022-so}.
	Hence, the gradient measurement efficiency of this model must be $\ef=1$ by Theorem~\ref{thm: main_informal}.
	In gradient estimation, we use the parameter-shift method for the symmetric and non-symmetric ansatzes and the linear combination of unitaries for the SLPA, where $1000$ measurement shots are used per circuit.
	The numerical experiments in this work use Qulacs, an open-source quantum circuit simulator~\cite{Suzuki2021-uh}.

	\begin{figure}[t]
		\centering
		\includegraphics[width=\linewidth]{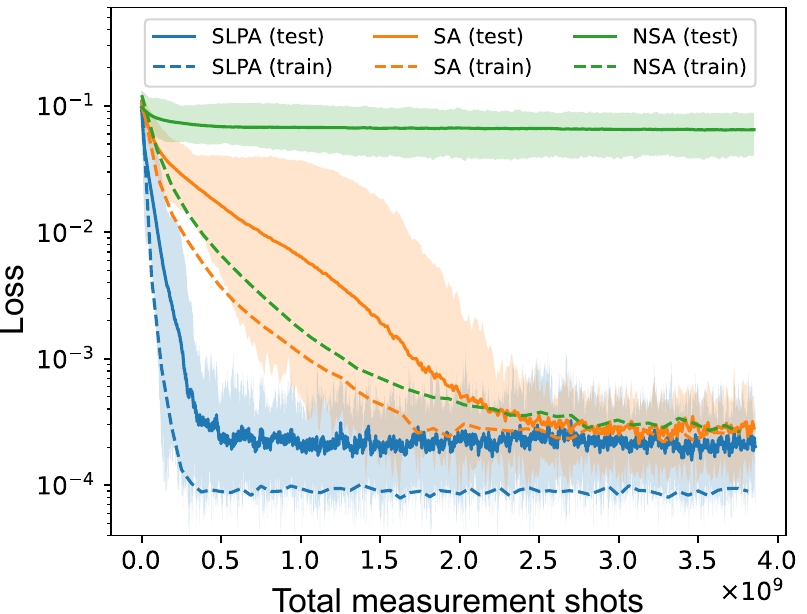}
		\caption{
			{\bf Changes in losses during training.}
			The horizontal axis is the cumulative number of measurement shots.
			The blue, orange, and green solid (dashed) lines represent the test (training) losses for the SLPA, symmetric ansatz (SA), and non-symmetric ansatz (NSA), respectively.
			The shaded areas are the maximum and minimum of the test losses for 20 sets of random initial parameters.
			The numbers of qubits and parameters are $n=4$ and $L=96$.
		}
		\label{fig: main result}
	\end{figure}

	First of all, we investigate the gradient measurement efficiency of these three models.
	In Figs.~\ref{fig: commutator} (a)--(c), we numerically compute the commutation relations between all pairs of the gradient operators for random $\bt$, where the black and yellow regions indicate $[\grad_j(\bt),\grad_k(\bt)]=0$ and $[\grad_j(\bt),\grad_k(\bt)]\neq0$, respectively.
	While most pairs of gradient operators are not commutative in the symmetric and non-symmetric ansatzes, we observe a $4\times4$ block structure in the SLPA.
	This means that the SLPA allows us to measure four gradient components simultaneously, implying the high efficiency of the SLPA.
	Figure~\ref{fig: commutator} (d) shows the gradient measurement efficiency $\ef^{(L)}$ for finite-depth circuits with varying the number of parameters $L$.
	In the symmetric and non-symmetric ansatzes, the efficiency decreases as the circuit becomes deeper and converges to $\ef=1$ after overparameterized.
	The SLPA shows a similar behavior, but the efficiency is larger than those of the other two models for the entire region of $L$, eventually approaching the theoretical upper bound of $\ef=4$ in the deep circuit limit.
	These results demonstrate the validity of our trade-off relation and the high gradient measurement efficiency of the SLPA in both shallow and deep circuits.
	In Supplementary Section~\ref{secap: disentangled circuit}, we also investigate another type of quantum circuit, which has low expressivity despite the absence of stabilizer-type symmetry, supporting the validity of the trade-off.

	The high gradient measurement efficiency of SLPA can reduce the number of data samples needed for training.
	In Fig.~\ref{fig: main result}, the SLPA shows significantly faster convergence of loss functions than the other models in terms of the cumulative number of measurement shots, i.e., training samples.
	This fast convergence stems from the high gradient measurement efficiency.
	As discussed above, the SLPA has $\ef=4$ in the deep circuit limit, which is four times greater than the other two models with $\ef=1$.
	Furthermore, the parameter-shift method used in the symmetric and non-symmetric ansatzes requires twice the number of circuits for estimating a gradient component than the linear combination of unitaries used in the SLPA.
	Thus, the number of measurement shots per epoch for the SLPA is one-eighth that of the other models in total, leading to a drastic reduction in the sample complexity for training.

	Besides high training efficiency, the SLPA can learn the target function with high accuracy (i.e., low test loss).
	As shown in Fig.~\ref{fig: main result}, the test loss of SLPA after training is comparable to that of the symmetric ansatz, implying high generalization performance of SLPA for this problem.
	This is due to encoding the symmetry of the target function into the SLPA as prior knowledge, similar to the symmetric ansatz.
	In the overparameterized SLPA and symmetric ansatz, the same DLA results in similar generalization performance.
	In contrast, the non-symmetric ansatz sufficiently reduces the training loss but does not reduce the test loss, indicating its low generalization performance.
	These results show the importance of symmetry encoding in this problem.

	We highlight the trainability of SLPA.
	A potential concern is that the multi-Pauli rotations within the SLPA, which can be global, could lead to a barren plateau~\cite{McClean2018-qf} even in shallow circuits.
	However, this concern can be ignored; the global operators do not directly induce a barren plateau in the SLPA. 
	Rather, our model is less prone to the barren plateau phenomenon than the symmetric ansatz when the circuit depth is $L/n\sim \mO(\log(n))$, where the SLPA exhibits a larger variance in the cost function compared to the symmetric ansatz.
	We discuss the high trainability of SLPA in more detail in Supplementary Section~\ref{secap: barren plateau}.

	Finally, while this section has investigated the symmetric function learning task, the SLPA can be applied to other types of problems associated with symmetry.
	In Supplementary Section~\ref{secap: QPR}, we tackle a quantum phase recognition task as an example of learning problems without an explicit symmetric target function, demonstrating the high training efficiency and accuracy of SLPA.
	In this task, the data distribution (not the target function) is invariant under the action of symmetry.
	This demonstration suggests the broad applicability of SLPA beyond the symmetric function learning.

	\subsection*{Beyond stabilizer-logical product ansatz} \label{sec: remark on SLPA}
	\vspace{-0.42cm}
	
	\noindent
	Here, we discuss some further potentials of the SLPA beyond the results presented in this work.

	One intriguing potential is to extend the SLPA to encompass more general symmetries beyond the stabilizer group. 
	These general symmetries, which include both Abelian and non-Abelian groups, can lead to more complex and exotic structures within the Hilbert space, introducing a challenge for extending our theory to accommodate such broader situations.
	To this end, there are several possible scenarios in which the SLPA can exploit other types of symmetries.
	First, when a group characterizing symmetry contains a stabilizer group as its subgroup, we can exploit the stabilizer group to construct an SLPA. 
	While this SLPA cannot fully incorporate symmetry into the circuit, it allows for efficient gradient estimation without imposing unnecessary symmetry constraints.
	Second, eliminating assumptions from our theory could break through the limitations of SLPA.
	For instance, we could broaden the types of symmetries applicable in the SLPA by utilizing general generators $G_j$ and general observable $O$ (though this work is limited to Pauli operators) or by performing classical post-processing of the outputs from multiple SLPAs~\cite{Huang2023-ie, Coyle2024-ff}.
	Understanding the efficiency-expressivity trade-off in such situations and elucidating the fundamental limit of efficient gradient estimation for QNNs with general symmetries remains an important open problem in realizing efficient QML.

	Another potential is applying the SLPA to problems that lack symmetry.
	Regardless of the high gradient measurement efficiency of SLPA, the symmetry constraint may result in low accuracy in some problems without symmetry.
	To address this challenge, we can eliminate the symmetry constraint and reinforce the capability of SLPA by combining it with another parameterized quantum circuit.
	For instance, we consider $U(\bt)=U_{\rm SLPA}(\bt_2)U_{\rm SB}(\bt_1)$, where $U_{\rm SLPA}$ and $U_{\rm SB}$ are an SLPA and a non-symmetric parameterized circuit breaking the symmetry of the SLPA, respectively.
	In this circuit, the gradient of $\bt_2$ can be measured efficiently due to the commutation relations of the SLPA, whereas the gradient of $\bt_1$ must be measured with the parameter-shift method in general.
	If $U_{\rm SB}$ is sufficiently shallow compared to $U_{\rm SLPA}$, the gradient measurement efficiency of $U$ is still high, although additional cost is required for the parameter-shift method for the gradient of $\bt_1$.
	Meanwhile, $U_{\rm SB}$ can break the symmetry constraint of the SLPA, reinforcing the capability of the circuit.
	Therefore, combining the SLPA with another symmetry-breaking circuit can eliminate the symmetry constraint while maintaining high gradient measurement efficiency.
	We will leave further investigation of the capabilities of this model for future work.

	\section*{Discussion} \label{sec: conclusions}
	\vspace{-0.42cm}
	
	\noindent
	This work has proven the general trade-off relation between gradient measurement efficiency and expressivity in deep QNNs.
	Furthermore, based on this trade-off relation, we have proposed a general ansatz of CBC called the SLPA, which can reach the upper bound of the trade-off inequality by leveraging the symmetric structure of the quantum circuit.
	These results provide a guiding principle for designing efficient QNNs, fully unleashing the potential of QNNs.

	While the SLPA allows for the most efficient gradient estimation, the theoretical limit of gradient measurement efficiency motivates further investigation of other approaches for training quantum circuits.
	A promising one is the use of gradient-free optimization algorithms, such as Powell's method~\cite{Powell1964-uz} and simultaneous perturbation stochastic approximation~\cite{Spall1992-rp}.
	Such algorithms use only a few quantum measurements to update the circuit parameters, potentially speeding up the training process.
	However, it remains unclear how effective the gradient-free algorithms are for large-scale problems.
	Thorough verification and refinement of these algorithms could overcome the challenge of high computational costs in VQAs.
	Another promising direction is the coherent manipulation of multiple copies of input data.
	Since our theory implicitly assumes that only one copy of input data is available at a time, the existence of more efficient algorithms surpassing our trade-off inequality is not prohibited in multi-copy settings.
	In fact, there is an efficient algorithm for measuring the gradient in multi-copy settings, where $\mO(\text{polylog}(L))$ copies of input data are coherently manipulated to be measured by shadow tomography~\cite{Abbas2023-hy, Aaronson2020-uc}.
	However, this algorithm is hard to implement in near-term quantum devices due to the requirements of many qubits and long execution times.
	Exploring more efficient algorithms in multi-copy settings is an important open issue.

	\section*{Methods}
	\vspace{-0.42cm}

	\subsection*{Proof sketch of Theorem~\ref{thm: main_informal}}
	\vspace{-0.42cm}
	
	\noindent
	To prove Theorem~\ref{thm: main_informal}, we introduce a graph representation of DLA called DLA graph to clarify the relationship between gradient measurement efficiency and expressivity.
	The DLA graph consists of nodes and edges.
	Each node $P$ corresponds to the Pauli basis of the DLA $\mcG$ (i.e., $P\in\mcG$), and two nodes $P\in\mcG$ and $Q\in\mcG$ are connected by an edge if they anti-commute, $\{P,Q\}=0$.
	Remarkably, the commutation relations between the Pauli bases of the DLA are closely related to the simultaneous measurability of gradient components.
	Therefore, the DLA graph visualizes the commutation relations of the DLA, allowing us to clearly understand the relationship between the gradient measurement efficiency and the DLA structure.

	The proof of Theorem~\ref{thm: main_informal} requires understanding the gradient measurement efficiency $\ef$ and the expressivity $\ex$ in terms of the DLA graph.
	By definition, the expressivity $\ex$ corresponds to the total number of nodes in the DLA graph.
	On the other hand, the relationship between the gradient measurement efficiency $\ef$ and the DLA graph is nontrivial.
	The first step to obtaining the gradient measurement efficiency is to consider whether given two gradient operators $\grad_j$ and $\grad_k$ are simultaneously measurable (i.e., whether $[\grad_j, \grad_k]=0$).
	The gradient operator $\grad_j$ is defined from the generator of the quantum circuit $G_j\in\mcGc$ [see Eq.~\eqref{eq: gradj}], corresponding to a node of the DLA graph.
	In Supplementary Section~\ref{secap: lemmas1--5}, we will present Lemmas~\ref{thm_ap: not connected}--\ref{thm_ap: Gj=Gk} to show that the simultaneous measurability of $\grad_j$ and $\grad_k$ is determined by some structural relations between the nodes $G_j, G_k$, and the observable $O$ in the DLA graph.
	Hence, how many gradient components can be simultaneously measured, namely $\ef$, is also determined by the structure of the DLA graph.

	Based on Lemmas~\ref{thm_ap: not connected}--\ref{thm_ap: Gj=Gk}, we decompose the DLA graph into several subgraphs, where nodes belonging to different subgraphs cannot be measured simultaneously.
	Therefore, the number of nodes in each subgraph bounds the number of simultaneously measurable gradient components, i.e., the gradient measurement efficiency.
	Using this idea, we can map the problem on $\ef$ and $\ex$ to the problem on the DLA graph, or specifically on the relation between the total number of nodes and the size of the subgraphs in the DLA graph.
	Finally, by considering constraints on the number of nodes derived from the decomposition to the subgraphs, we prove the inequalities between $\ef$ and $\ex$.
	The detailed proof is provided in Supplementary Sections~\ref{secap: App proof}--\ref{secap: proof eqs}.

	\subsection*{Commuting block circuit}
	\vspace{-0.42cm}
	
	\noindent
	The CBC is a parameterized quantum circuit allowing for efficient gradient estimation~\cite{Bowles2023-vf}.
	It consists of $B$ block unitaries:
	\begin{align}
		U(\bt) = \prod_{a=1}^{B} U_a(\bt_a), \label{eq: CBC unitary}
	\end{align}
	where each block contains multiple variational rotation gates as
	\begin{align}
		U_a(\bt_a) = \prod_{j} \exp\left(i\theta^a_j G^a_j\right). \label{eq: CBC unitary2}
	\end{align}
	Here, $G_j^a\in\mcPn$ is the generator of the $j$th rotation gate in the $a$th block, and $\bt_a=(\theta^a_1, \theta^a_2, \cdots)$ is the variational rotation angles.
	The generators of CBC must satisfy the following two conditions.
	First, generators within the same block are commutative:
	\begin{align}
		[G^a_j, G^a_k] = 0 \quad \forall j,k.
		\label{eq: CBC sameBlock}
	\end{align}
	Second, generators in any two distinct blocks are either all commutative or all anti-commutative:
	\begin{align}
		[G^a_j,G^b_k] = 0  \quad \forall j,k  \quad \text{or} \quad \{G^a_j,G^b_k \} = 0  \quad \forall j,k.
		\label{eq: CBC difBlock}
	\end{align}
	We also consider a cost function $C=\text{tr}[U\rho U^\dag O]$ with a Pauli observable $O$.
	
	This specific structure of CBC allows us to measure the gradient components of the cost function with only two different quantum circuits for each block.
	To measure the gradient components for the $a$th block, we divide the generators of the rotation gates $\mcGc_a=\{G^a_j\}_j$ into the commuting and anti-commuting parts with the observable $O$: $\mcGc_a = \mcGc_a^{\rm com}\sqcup \mcGc_a^{\rm ant}$, where $[\mcGc_a^{\rm com}, O]=\{\mcGc_a^{\rm ant},O\}=0$.
	Then, the gradient components in $\mcGc_a^{\rm com}$ ($\mcGc_a^{\rm ant}$) can be measured simultaneously using the linear combination of unitaries with an ancilla qubit (see Supplementary Section~\ref{secap: CBC measure} for details).
	Therefore, the full gradient of $U(\bt)$ can be estimated with only $2B$ types of quantum measurements, which is independent of the number of rotation gates in each block (to be precise, $2B-1$ quantum measurements are sufficient because the commuting part of the final block does not contribute to the gradient).
	This allows us to measure the gradient more efficiently than conventional variational models based on the parameter-shift method, where the measurement cost is proportional to the number of parameters.

	While the basic framework for CBC has been proposed, there are still some challenges to be addressed.
	First, it remains unclear how the specific structure of the commuting block circuit affects the QNN expressivity.
	Second, a general method to construct the CBC or find the generators $G^a_j$ that satisfy the commutation relations of Eqs.~\eqref{eq: CBC sameBlock} and \eqref{eq: CBC difBlock} has not yet been established.
	The SLPA addresses these challenges by providing a general construction method with stabilizers and logical Pauli operators and uncovering its expressivity based on the stabilizer formalism.

	\subsection*{Quantum circuits used in numerical experiments}
	\vspace{-0.42cm}
	
	\noindent
	The unitary of the local symmetric ansatz is given by
	\begin{align}
		U_\text{SA}(\bt)
		= \prod_{d=1}^D \prod_{k=1}^{3n} \exp(i\theta_k^d L_k). \label{eq: U_SA}
	\end{align}
	Here, a local Pauli operator $L_k$ is defined as
	\begin{align}
		L_k = 
		\begin{dcases}
			P_{2k-1}^{\mu_k}P_{2k}^{\mu_k} & \text{for $k=an + b$}, \\
			P_{2k}^{\nu_k}P_{2k+1}^{\nu_k} & \text{for $k=\left(a+\frac{1}{2}\right)n + b$}
		\end{dcases}
	\end{align}
	with 
	\begin{align}
		P_{j}^{\mu}=
		\begin{cases}
			X_j & \text{for $\mu=3\ell+1$} \\
			Y_j & \text{for $\mu=3\ell+2$} \\
			Z_j & \text{for $\mu=3\ell$} 
		\end{cases},
	\end{align}
	where $a\in\{0,1,2\}, b\in\{1,\cdots,n/2\}, \mu_{k}=a+b$, $\nu_{k}=a+b+n/2$, $\ell\in\mathbb{Z}$, and $P^\mu_j = P^\mu_{j+n}$.
	The SLPA is constructed by taking the products of the stabilizers $\mcS$ and the generators of $U_\text{SA}(\bt)$ as
	\begin{align}
		U_\text{SLPA}(\bt)
		= \prod_{d=1}^D \prod_{k=1}^{3n} U_{k}(\bt_k^d),   \label{eq: U_SL}
	\end{align}
	where $U_{k}(\bt_k^d)$ is the block unitary defined as 
	\begin{align}
		U_{k}(\bt_k^d) = \prod_{j=1}^4 \exp(i\theta_{kj}^d S_j L_k).  
	\end{align}
	Finally, the unitary of the non-symmetric ansatz is given by
	\begin{align}
		U_\text{NSA}(\bt) = \prod_{d=1}^D V_\text{ent}(\bt^d_3) V_\text{rot}(\bt^d_1,\bt^d_2)
	\end{align}
	with
	\begin{align}
		&V_\text{rot}(\bt^d_1,\bt^d_2) = \prod_{j=1}^n \exp(i\theta_{2j}^d Y_j)\exp(i\theta_{1j}^d X_j), \\
		&V_\text{ent}(\bt^d_3) = \prod_{j=1}^n \exp(i\theta_{3j}^d Z_jZ_{j+1}).
	\end{align}
	We optimize these quantum circuits by minimizing the mean squared error loss function with the Adam optimizer~\cite{Kingma2014-db}.
	The hyper-parameter values used in this work are initial learning rate $=10^{-3}$, $\beta_1=0.9$, $\beta_2=0.999$, and $\epsilon=10^{-8}$.
	We also adopt the stochastic gradient descent~\cite{Robbins1951-ql}, where only one training data is used to estimate the gradient at each iteration.

	\section*{Acknowledgements}
	\vspace{-0.42cm}
	\noindent
	Fruitful discussions with Riki Toshio, Yuichi Kamata, Shintaro Sato, Snehal Raj, and Brian Coyle are gratefully acknowledged.
	S.Y. was supported by FoPM, WINGS Program, the University of Tokyo.

	\newpage
	
	\noindent 
	\textbf{\raggedright\Large\bfseries\sffamily Supplementary Information}{\Large\par}

	\vspace{0cm}
	\tableofcontents

	\section{Problem formulation} \label{secap: problem formulation}
	
	In this section, we describe the problem formulation with some additional assumptions on our model.
	We also derive a more specific form of the gradient operator for later use. 
	
	\subsection{Model, efficiency, and expressivity} 
	This work considers the following QNN on an $n$-qubit system: 
	\begin{align}
		U(\bt) = \prod_{j=1}^L \exp(i\theta_j G_j),
	\end{align}
	where $G_j$ is a Pauli operator in $\mcPn=\{I, X, Y, Z\}^{\otimes n}$, $\bt = (\theta_1,\theta_2,\cdots)$ are variational rotation angles, and $L$ is the number of rotation gates.
	Let $\mcGc=\{G_j\}_{j=1}^L$ be the set of generators. 
	We consider a cost function $C(\bt) = \tr\left[ \rho U^\dag(\bt) O U(\bt) \right]$ with a Pauli operator $O\in\mcPn$, where $\rho$ is an input quantum state.
	Here, even circuits with non-variational Clifford gates (e.g., CZ and CNOT gates) can be transformed into this form by swapping the Clifford gates and the Pauli rotation gates.
	Hence, our model includes a wide class of parameterized quantum circuits.

	The gradient of the cost function can be written as the expectation value of the gradient operator $\grad_j$:
	\begin{align}
		&\partial_j C(\bt) = \tr \left[ \rho \grad_j(\bt) \right],  \label{apeq: Cost gradient}
	\end{align}
	where $\grad_j$ is defined as 
	\begin{align}
		&\grad_j(\bt) = \partial_j \left[ U^\dag(\bt) O U(\bt) \right]. \label{apeq: gradj}
	\end{align}
	See Sec.~\ref{secap: gradient operator} for a more specific form of the gradient operator.
	In quantum mechanics, two observables $A$ and $B$ can be simultaneously measured if and only if $[A,B]=0$.
	Therefore, we define the simultaneous measurability of $\partial_j C(\bt)$ and $\partial_k C(\bt)$ by $[\grad_j(\bt),\grad_k(\bt)]=0$ for all $\bt$.
	Note that this definition assumes that the commutation relation holds for any $\bt$.
	Thus, even if $\partial_j C(\bt)$ and $\partial_k C(\bt)$ cannot be simultaneously measured in the sense of this definition, they may be simultaneously measurable for some $\bt$.

	Based on this definition, we partition the gradient operators $\{\grad_j\}_{j=1}^L$ into $M_L$ simultaneously measurable sets.
	This partitioning enables us to estimate all gradient components with $M_L$ types of quantum measurements in principle.
	Thus, we define gradient measurement efficiency in the deep circuit limit as
	\begin{align}
		\ef = \lim_{L\to\infty} \frac{L}{\text{min}(M_L)}, \label{apeq: Feff}
	\end{align}
	where $\text{min}(M_L)$ is the minimum number of sets among all possible partitions. 
	In this definition, $\ef$ indicates the mean number of simultaneously measurable components in the gradient.

	The QNN expressivity is defined using the dynamical Lie algebra (DLA)~\cite{Albertini2001-xk, Zeier2011-bq, D-Alessandro2007-kk}.
	To this end, we consider the following Lie closure:
	\begin{align}
		i\mcG = \braket{i\mcGc}_\text{Lie},
	\end{align}
	where $\braket{i\mcGc}_\text{Lie}$ is defined as the set of Pauli operators obtained by repeatedly taking the nested commutator between the circuit generators in $i\mcGc=\{iG_j\}_{j=1}^L$, i.e., $[\cdots[[iG_j,iG_k],iG_\ell],\cdots]\in i\mcG$ for $G_j,G_k,G_\ell,\cdots\in\mcGc$.
	Letting $\mcPn=\{I,X,Y,Z\}^{\otimes n}$ be the set of $n$-qubit Pauli operators, $\mcGc\subseteq \mcG \subseteq \mcPn$ holds by definition, where we ignore the coefficients of Pauli operators in $\mcG$.
	The DLA is defined by $\mcG$ as
	\begin{align}
		\mfg = \text{span}(\mcG),
	\end{align}
	which is the subspace of $\mf{su}(2^n)$ spanned by the Pauli operators in $\mcG$.
	As the DLA characterizes the types of unitaries that the QNN can express in the overparameterized regime, we define the expressivity of the QNN in the deep circuit limit as
	\begin{align}
		\ex = \text{dim}(\mfg),
	\end{align}
	which is the dimension of DLA.

	\subsection{Assumptions}
	
	Our model imposes several reasonable assumptions, which will be used in the proof of Theorem~\ref{thm: main_informal}.
	
	To remove redundant and irrelevant quantum gates, we assume the following conditions on the quantum circuit $U(\bt)$:
	\begin{con}[No redundant quantum gates] \label{cond: 1}
		For any $j<k$, if $G_j=G_k$, there exists $\ell$ ($j<\ell<k$) such that $\{G_j,G_\ell\}=0$.
	\end{con}
	\begin{con}[No irrelevant quantum gates] \label{cond: 2}
		For any $G_j\in \mcGc$ commuting with $O$ (i.e., $[G_j,O]=0$), there exist $Q_1,\cdots,Q_m \in \mcGc$ such that $\{G_j,Q_1\}=\{Q_1,Q_2\}=\cdots=\{Q_{m-1},Q_m\}=\{Q_m,O\}=0$.
	\end{con}
	The first condition removes the redundancy of quantum gates.
	If there is no $\ell$ satisfying the condition, the two rotation gates, $e^{i\theta_jG_j}$ and $e^{i\theta_kG_k}$, can be merged into one rotation gate, $e^{i(\theta_j+\theta_k)G_j}$, by swapping the positions of gates.
	This means that one of the two rotation gates is redundant.
	The second condition removes irrelevant quantum gates.
	If there is a quantum gate that violates the condition, the circuit is decomposed into two mutually commuting unitaries as $U=U_1U_2$, where $U_2$ commutes with the observable $O$ (i.e., $[U_1,U_2]=[U_2,O]=0$).
	Then, the cost function is written as $C=\tr[\rho U^\dag O U]=\tr[\rho U_1^\dag O U_1]$, which implies that $U_2$ is irrelevant in this QNN. 
	Therefore, we impose these two conditions on the quantum circuit to remove the redundant and irrelevant quantum gates without loss of generality.

	We assume another condition on the quantum circuit in relation to expressivity:
	\begin{con} \label{cond: 3}
		Let $U_\text{fin}$ be the final part of the circuit with depth $\LB$:
		\begin{align}
			U_\text{fin}=\prod_{j=L-\LB+1}^L \exp(i\theta_j G_j).
		\end{align}
		Then, there exists a constant $\LB$ (independent of $L$) such that $U_\text{fin}$ can express $e^{i\phi_1Q_1}e^{i\phi_2Q_2}$ for any $Q_1, Q_2\in\mcG$ and any $\phi_1,\phi_2\in\mathbb{R}$.
	\end{con}
	This condition is easily satisfied in the limit of $L\to\infty$ when all generators in $\mcGc$ appear uniformly in the circuit.
	Note that the contribution from $U_\text{fin}$ with depth $\LB=\mO(1)$ to the gradient measurement efficiency $\ef$ vanishes in the deep circuit limit $L\to\infty$.
	Thus, for evaluating $\ef$, it suffices to consider only the contribution from the initial part of the circuit with depth $\Lbulk= L-\LB = \mO(L)$.

	\subsection{Gradient operator} \label{secap: gradient operator}
	
	For later use, we derive the following form of the gradient operator:
	\begin{align}
		\grad_j(\bt) = -i [\tilde{G}_j(\bt), \tilde{O}(\bt)], \label{apeq: gradient operator}
	\end{align}
	where we have defined 
	\begin{align}
		&\tilde{O}(\bt) = U^\dag(\bt)OU(\bt), \\
		&\tilde{G}_j(\bt) = U_{j+}^\dag(\bt)G_jU_{j+}(\bt),
	\end{align}
	with the unitary circuit before the $j$th gate $U_{j+}(\bt)=\prod_{k=1}^{j-1}\exp(i\theta_kG_k)$.
	This form is used to prove the main theorem in the following sections.

	To derive this form, we first simplify the gradient operator $\grad_j(\bt)=\partial_j [U^\dag(\bt)OU(\bt)]$ to
	\begin{align}
		\grad_j
		&=  \frac{\partial U^\dag}{\partial \theta_j} O U  + U^\dag O \frac{\partial U}{\partial \theta_j}. \label{ap: djC}
	\end{align}
	In this equation, $\partial U/\partial \theta_j$ is written as
	\begin{align}
		\frac{\partial U}{\partial \theta_j} 
		&= i U_{j-} G_j U_{j+}, \label{ap: djU}
	\end{align}
	where $U_{j+}=\prod_{k=1}^{j-1} e^{i\theta_k G_k}$ and $U_{j-}=\prod_{k=j}^L e^{i\theta_k G_k}$ are the unitaries before and after the $j$th rotation gate.
	By inserting $U_{j+}U_{j+}^\dag = I$ into Eq.~\eqref{ap: djU}, we have
	\begin{align}
		\frac{\partial U}{\partial \theta_j} = i U_{j-} (U_{j+}U_{j+}^\dag) G_j U_{j+} = iU \tilde{G}_j, \label{ap: djU2}
	\end{align}
	with $\tilde{G}_j=U_{j+}^\dag G_j U_{j+}$ and $U_{j-} U_{j+}=U$.
	By taking the Hermitian conjugate of Eq.~\eqref{ap: djU2}, we also have
	\begin{align}
		\frac{\partial U^\dag}{\partial \theta_j} = -i\tilde{G}_j U^\dag. \label{ap: djU3}
	\end{align}
	Thereby, Eq.~\eqref{ap: djC} is reduced to
	\begin{align}
		\grad_j 
		&= -i \tilde{G}_j U^\dag O U + i U^\dag O U \tilde{G}_j = -i [\tilde{G}_j, \tilde{O}]
	\end{align}
	with $\tilde{O}=U^\dag O U$.

	\section{Proof of main theorem} \label{secap: App proof}
	
	The gradient measurement efficiency and expressivity defined above obey the following theorem:
	\begin{thm}[The formal version of Theorem~\ref{thm: main_informal}] \label{thm: main}
		In deep QNNs satisfying Conditions~\ref{cond: 1}--\ref{cond: 3}, gradient measurement efficiency and expressivity obey the following inequalities:
		\begin{align}
			&\ex \leq \frac{4^n}{\ef} - \ef, \label{apeq: main_inequality} 
		\end{align} 
		and
		\begin{align}
			&\ex \geq \ef, \label{apeq: main_inequality2}
		\end{align} 
		where $n$ is the number of qubits.
	\end{thm}   
	In this section,  we prove this Theorem by introducing a new theoretical concept, the DLA graph.
	The proof sketch is provided in Methods.
	Several Lemmas and inequalities used here will be proven in Secs.~\ref{secap: lemmas1--5} and \ref{secap: proof eqs}.

	\subsection{Notation}
	
	For convenience, we will use an abbreviation for operator sets $\mcA=\{A_1,A_2,\cdots\}$ and $\mcB=\{B_1,B_2,\cdots\}$ as
	\begin{align*}
		&[\mcA,\mcB]=0 \xLeftrightarrow{\text{def}} [A_j,B_k]=0 \quad \text{$\forall A_j\in\mcA$, $\forall B_k\in\mcB$}, \\
		&\{\mcA,\mcB\}=0 \xLeftrightarrow{\text{def}} \{A_j,B_k\}=0 \quad \text{$\forall A_j\in\mcA$, $\forall B_k\in\mcB$}.
	\end{align*}
	Similarly, we will use an abbreviation for an operator set $\mcA=\{A_1,A_2,\cdots\}$ and an operator $B_k$ as
	\begin{align*}
		&[\mcA,B_k]=0 \xLeftrightarrow{\text{def}} [A_j,B_k]=0 \quad \text{$\forall A_j\in\mcA$}, \\
		&\{\mcA,B_k\}=0 \xLeftrightarrow{\text{def}} \{A_j,B_k\}=0 \quad \text{$\forall A_j\in\mcA$}.
	\end{align*}
	In what follows, we ignore the coefficients of Pauli operators as we only consider the commutation and anti-commutation relations for the proof.

	\subsection{DLA graph and simultaneous measurability} \label{secap: DLA and simultaneous measurability}

	The proof of Theorem~\ref{thm: main} begins by considering the relationship between the simultaneous measurability of gradient components and the DLA structure.
	Here, we introduce the DLA graph to visualize the DLA.
	In this graph representation, each node $P$ corresponds to the Pauli basis of the DLA $\mcG$, and two nodes $P\in\mcG$ and $Q\in\mcG$ are connected by an edge if $\{P,Q\}=0$.
	For example, the following figure is the DLA graph with $\mcG=\{IX,XI,ZY,YZ,ZZ,YY\}$ for a two-qubit system:
	\begin{equation*}
		\begin{tikzpicture} [scale=1.5]
			\node[name=A][draw,circle,fill=white] at (+0.5, +0.866) {$XI$};
			\node[name=B][draw,circle,fill=white] at (-0.5, +0.866) {$IX$};
			\node[name=C][draw,circle,fill=white] at (+1,0) {$YZ$};
			\node[name=D][draw,circle,fill=white] at (-1,0) {$ZY$};
			\node[name=E][draw,circle,fill=white] at (+0.5, -0.866) {$YY$};
			\node[name=F][draw,circle,fill=white] at (-0.5, -0.866) {$ZZ$};
			\draw (A) -- (C);
			\draw (A) -- (D);
			\draw (A) -- (E);
			\draw (A) -- (F);
			\draw (B) -- (C);
			\draw (B) -- (D);
			\draw (B) -- (E);
			\draw (B) -- (F);
			\draw (C) -- (E);
			\draw (C) -- (F);
			\draw (D) -- (E);
			\draw (D) -- (F);
		\end{tikzpicture}.
	\end{equation*}

	Here, we define the connectivity on the DLA graph:
	\begin{dfn}[DLA-connectivity] \label{def: DLA-connectivity}
		We say that $P,Q\in\mc{P}_n$ ($P\neq Q$) are $\mfg$-connected and denote $P\conn Q$ when $\{P,Q\}=0$ or there exist $R_1,R_2,\cdots,R_{d-1}\in\mcG$ such that $\{P,R_1\}=\{R_1,R_2\}=\cdots=\{R_{d-1},Q\}=0$.
	\end{dfn}    
	In terms of the DLA graph, $P,Q\in\mc{P}_n$ are $\mfg$-connected if there exists a path connecting $P$ and $Q$ on the DLA graph.
	This connectivity satisfies a transitive relation that $P\conn Q$ and $Q\conn R$ lead to $P\conn R$ for $P,R\in \mc{P}_n$ and $Q\in \mcG$.
	Note that the $\mfg$-connectivity are defined for $P,Q\in\mc{P}_n=\{I,X,Y,Z\}^{\otimes n}$ not only for $P,Q\in\mcG$.
	We also define the separability of the DLA graph as follows:
	\begin{dfn}[Separability] \label{def: DLA-separability}
		We say that a DLA subgraph $\mcG{}_1 \subset \mcG$ is separated when $[\mcG{}_1, \mcG{}_2]=0$, where $\mcG{}_2=\mcG \smallsetminus \mcG{}_1$ is the complement of $\mcG{}_1$ in $\mcG$.
		We also say that the DLA graph is separable if it consists of two or more separated subgraphs.
	\end{dfn}  
	From a graph perspective, this separability means that a subgraph $\mcG{}_1$ is not connected to the rest of the DLA graph $\mcG{}_2$ by edges, i.e., $\forall P\in\mcG{}_1$ and $\forall Q\in\mcG{}_2$ are not $\mfg$-connected.
	Conversely, if the DLA graph is not separable, any two nodes are $\mfg$-connected.

	\begin{figure*}[t]
		\centering
		\includegraphics[width=\linewidth]{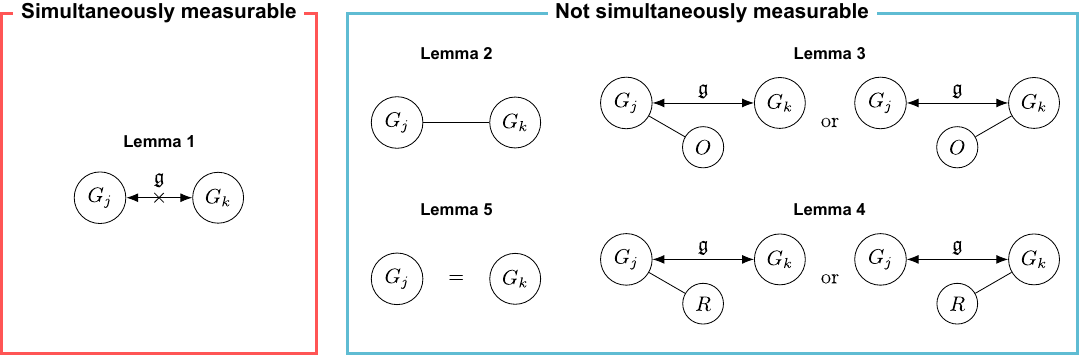}
		\caption{
			{\bf Summary of Lemmas~\ref{thm: not connected}--\ref{thm: Gj=Gk}.}
			Whether two gradient operators $\grad_j$ and $\grad_k$ are simultaneously measurable is determined by the structural relations between the corresponding nodes $G_j, G_k\in\mcGc$ and the observable $O$ in the DLA graph.
		}
		\label{fig: lemmas1--5}
	\end{figure*}

	The graph representations in this work are summarized below:
	\allowdisplaybreaks
	\def\arraystretch{3.5}
	\begin{align*}
		\raise 1.5ex \hbox{$[P,Q]=0$}
		&\quad\quad\quad
		\begin{tikzpicture} 
			\node[name=A][draw,circle,fill=white] at (-1,0) {$P$};
			\node[name=B][draw,circle,fill=white] at (+1,0) {$Q$};
		\end{tikzpicture} 
		\\ 
		\raise 1.5ex \hbox{$\{P,Q\}=0$}
		&\quad\quad\quad
		\begin{tikzpicture} 
			\node[name=A][draw,circle,fill=white] at (-1,0) {$P$};
			\node[name=B][draw,circle,fill=white] at (+1,0) {$Q$};
			\draw (A) -- (B);
		\end{tikzpicture} 
		\\ 
		\raise 1.5ex \hbox{$[P,Q]=0$ or $\{P,Q\}=0$}
		&\quad\quad\quad
		\begin{tikzpicture} 
			\node[name=A][draw,circle,fill=white] at (-1,0) {$P$};
			\node[name=B][draw,circle,fill=white] at (+1,0) {$Q$};
			\draw[dashed] (A) -- (B);
		\end{tikzpicture} 
		\\ 
		\raise 1.5ex \hbox{$P\conn Q$}
		&\quad\quad\quad
		\begin{tikzpicture} 
			\node[name=A][draw,circle,fill=white] at (-1,0) {$P$};
			\node[name=B][draw,circle,fill=white] at (+1,0) {$Q$};
			\node[name=conn] at (0,0.3) {$\mfg$};
			\draw[{Latex[length=2mm]}-{Latex[length=2mm]}] (A) -- (B); 
		\end{tikzpicture} 
		\\ 
		\raise 1.5ex \hbox{$P\conn Q$ and $[P,Q]=0$}
		&\quad\quad\quad
		\begin{tikzpicture} 
			\node[name=A][draw,circle,fill=white] at (-1,0) {$P$};
			\node[name=B][draw,circle,fill=white] at (+1,0) {$Q$};
			\node[name=conn] at (0,0.3) {$\mfg$};
			\draw[dotted, {Latex[length=2mm]}-{Latex[length=2mm]}] (A) -- (B); 
		\end{tikzpicture} 
		\\ 
		\raise 1.5ex \hbox{Not $P\conn Q$}
		&\quad\quad\quad
		\begin{tikzpicture} 
			\node[name=A][draw,circle,fill=white] at (-1,0) {$P$};
			\node[name=B][draw,circle,fill=white] at (+1,0) {$Q$};
			\node[name=conn] at (0,0.3) {$\mfg$};
			\node[name=conn][align=right] at (0,0) {$\times$};
			\draw[{Latex[length=2mm]}-{Latex[length=2mm]}] (A) -- (B); 
		\end{tikzpicture} 
		\\ 
	\end{align*}
	\allowdisplaybreaks[0]

	In this proof, we map the problem on $\ef$ and $\ex$ to the problem on the DLA graph.
	To this end, we need to interpret $\ef$ and $\ex$ in terms of the DLA graph.
	By definition, the expressivity $\ex$ corresponds to the total number of nodes in the DLA graph.
	On the other hand, identifying the gradient measurement efficiency $\ef$ from the DLA graph is not straightforward.
	The first step in evaluating $\ef$ is to understand when two gradient operators $\grad_j$ and $\grad_k$ (namely $\partial_j C$ and $\partial_k C$) are simultaneously measurable.
	The gradient operator $\grad_j$ is defined from the generator of the circuit $G_j\in\mcGc$ [see Eq.~\eqref{apeq: gradient operator}], which corresponds to a node of the DLA graph $\mcG$.
	Remarkably, whether $\grad_j$ and $\grad_k$ are simultaneously measurable is determined from the structural relations between the corresponding nodes $G_j,G_k$ and the observable $O$ on the DLA graph.
	The following lemmas show the relationship between the simultaneous measurability of gradient components and the DLA structure (see Sec.~\ref{secap: lemmas1--5} for their proofs):
	\begin{lem}\label{thm: not connected}
		If $G_j,G_k\in\mcGc$ are not $\mfg$-connected, $\partial_j C$ and $\partial_k C$ can be simultaneously measured.  
	\end{lem}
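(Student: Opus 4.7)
The plan is to rewrite each gradient operator in a form that exposes the DLA decomposition, and then reduce the claim to an elementary Pauli commutator identity. Using $[G_j, e^{i\theta_j G_j}]=0$, a short calculation converts $U\tilde{G}_j U^\dag$ into $W_{>j} G_j W_{>j}^\dag =: \bar{G}_j$, where $W_{>j} = \prod_{k>j}e^{i\theta_k G_k}$ is the part of the circuit after gate $j$. Conjugating by $U^\dag$ on both sides then gives
\begin{equation}
    \grad_j(\bt) = -i\, U^\dag(\bt)\,[\bar{G}_j(\bt),\,O]\,U(\bt).
\end{equation}
Since unitary conjugation preserves commutators, $[\grad_j,\grad_k]=0$ is equivalent to $[[\bar{G}_j,O],[\bar{G}_k,O]]=0$, which is what I will verify.

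Next, I would exploit the hypothesis $G_j \nconn G_k$. Because $\mfg$-connectivity is an equivalence relation on $\mcG$ (symmetry is immediate, and transitivity holds because intermediate nodes are required to lie in $\mcG$), I can split $\mcG = \mcG_1 \sqcup \mcG_2$ with $G_j\in\mcG_1$, $G_k\in\mcG_2$, and $[\mcG_1,\mcG_2]=0$. Setting $\mfg_i=\text{span}(\mcG_i)$, each $\mfg_i$ is closed under its own commutators while commuting pointwise with the other. An induction over the rotations making up $W_{>j}$ then shows $\bar{G}_j\in\mfg_1$: rotations with generator in $\mcG_2$ act trivially on elements of $\mfg_1$, while rotations with generator in $\mcG_1$ map $\mfg_1$ into itself via the Pauli conjugation formula. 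The symmetric argument gives $\bar{G}_k\in\mfg_2$.

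The remaining step is the Pauli computation. Expanding $\bar{G}_j = \sum_{Q\in\mcG_1} a_Q Q$, only summands with $\{Q,O\}=0$ survive the bracket with $O$, and for such $Q$ one has $[Q,O]=2QO$. Hence $[\bar{G}_j,O]=2\sum_Q a_Q\, QO$ over such $Q\in\mcG_1$, and similarly $[\bar{G}_k,O]=2\sum_{Q'} b_{Q'}\, Q'O$ with $Q'\in\mcG_2$, $\{Q',O\}=0$. For any such pair, using $O^2=I$ (since $O$ is a Pauli observable) together with $\{Q,O\}=\{Q',O\}=0$ to slide the $O$'s through $Q$ and $Q'$ with a pair of sign flips yields
\begin{equation}
    [QO,\,Q'O] \;=\; -QQ' + Q'Q \;=\; [Q',Q] \;=\; 0,
\end{equation}
since $Q\in\mcG_1$ and $Q'\in\mcG_2$ commute by separation. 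Linearity then delivers $[[\bar{G}_j,O],[\bar{G}_k,O]]=0$, establishing the lemma.

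The main obstacle is the structural step in the middle paragraph: justifying $\bar{G}_j\in\mfg_1$. This is where the ``not $\mfg$-connected'' assumption must be converted into an honest closure statement about the later-circuit conjugation, and it relies on $\mfg_1$ being an ideal inside $\mfg$ when the DLA graph separates. Once that closure is in hand, the algebra in the last paragraph is essentially a one-line Pauli manipulation.
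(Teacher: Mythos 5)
Your proof is correct, and it rests on the same two structural pillars as the paper's proof of this lemma: the partition of the DLA graph into separated components $\mcG{}_1\ni G_j$ and $\mcG{}_2\ni G_k$ with $[\mcG{}_1,\mcG{}_2]=0$, and a closure argument showing that conjugating a generator by part of the circuit keeps it inside the span of its own component. Where you genuinely diverge is the frame in which the final commutator is checked. By writing $\grad_j=-i\,U^\dag[\bar G_j,O]U$ with $\bar G_j=W_{>j}G_jW_{>j}^\dag$ (conjugating forward by the tail of the circuit rather than backward by the head), you keep the observable as the bare Pauli $O$, and the claim collapses to the one-line identity $[QO,Q'O]=[Q',Q]=0$ for Paulis $Q\in\mcG{}_1$, $Q'\in\mcG{}_2$ that anticommute with $O$. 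The paper instead works with $\tilde G_j=U_{j+}^\dag G_jU_{j+}$ and $\tilde O=U^\dag OU$, expands $[\grad_j,\grad_k]$ into four terms, and must separately verify that conjugation by $\tilde O$ preserves $\mfg_2$ (via $OAO\in\mfg_2$ for $A\in\mfg_2$); your packaging avoids that extra step and is the more economical of the two. The one point you should make explicit to close your induction on $W_{>j}$ is that for anticommuting $P,Q\in\mcG{}_1$ the product $PQ$ lands back in $\mcG{}_1$: it is proportional to $[P,Q]$, hence lies in $\mcG$ by Lie closure, and it anticommutes with both $P$ and $Q$, hence sits in the same connected component. With that remark added, the argument is complete.
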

	
	\begin{lem}\label{thm: PjPk}
		For $j,k \leq \Lbulk$, if $G_j,G_k\in\mcGc$ anti-commute, $\partial_j C$ and $\partial_k C$ cannot be simultaneously measured.    
	\end{lem}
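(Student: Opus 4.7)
The plan is to exhibit a specific $\bt^*$ at which $[\grad_j(\bt^*),\grad_k(\bt^*)]\neq 0$, which by Definition~\ref{def: Simultaneous measurability} rules out simultaneous measurability. I exploit Condition~\ref{cond: 3}: set every bulk parameter to zero, so that since $j,k\leq \Lbulk$, every gate at a position $\leq \max(j,k)$ acts as the identity. This gives $\tilde G_j(\bt^*)=G_j$, $\tilde G_k(\bt^*)=G_k$, and $\tilde O(\bt^*)=U_\text{fin}^\dagger O U_\text{fin}$, with Condition~\ref{cond: 3} letting me choose $U_\text{fin}=e^{i\phi_1 Q_1}e^{i\phi_2 Q_2}$ for any $Q_1,Q_2\in\mcG$ and any $\phi_1,\phi_2\in\mathbb{R}$.

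The core of the argument is the following identity. If $\tilde O$ anti-commutes with both $G_j$ and $G_k$, then $\grad_j=-2iG_j\tilde O$ and $\grad_k=-2iG_k\tilde O$, and a short manipulation using $\tilde O^2=I$ together with $\{G_j,G_k\}=0$ gives
\begin{equation*}
[\grad_j,\grad_k]=8\,G_jG_k,
\end{equation*}
a nonzero Pauli (the product of distinct anti-commuting Paulis). Thus it suffices to show that one can always choose $U_\text{fin}$, in the form allowed by Condition~\ref{cond: 3}, so that $\tilde O$ anti-commutes with both $G_j$ and $G_k$.

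To construct such $\tilde O$, note that at $\phi_i=\pi/4$ each rotation $e^{i(\pi/4)Q_i}$ acts as a Clifford, sending a Pauli $P$ anti-commuting with $Q_i$ to $\pm Q_iP$ and fixing $P$ if they commute. Hence $\tilde O$ can be driven to one of $O$, $\pm Q_1O$, or $\pm Q_2Q_1O$ (as Paulis, up to a Hermitian phase). The anti-commutation of $\tilde O$ with $G_j$ and with $G_k$ then reduces to purely algebraic conditions on $Q_1$, $Q_2$, and the product $Q_2Q_1$ relative to $O$, $G_j$, and $G_k$. Condition~\ref{cond: 2} ensures $O$ is $\mfg$-connected to every element of $\mcG$, so chains of anti-commuting DLA elements from $O$ toward $G_j$ and $G_k$ are available; moreover, since $\mcG$ is closed under nested commutators, $G_jG_k\in\mcG$, and this product already anti-commutes with both $G_j$ and $G_k$. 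Together these provide enough candidates to realize the target $\tilde O$.

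The main obstacle is the case analysis in this last step. The case $\{G_j,O\}=\{G_k,O\}=0$ is immediate with $\tilde O=O$. When $O$ commutes with one or both of $G_j,G_k$, however, one must carefully pick $Q_1,Q_2$ along the anti-commuting chains provided by Condition~\ref{cond: 2} and verify, subcase by subcase, that the resulting Clifford conjugate $\pm Q_2Q_1O$ (or $\pm Q_1O$) genuinely anti-commutes with both $G_j$ and $G_k$, rather than only one of them. The mixed subcase (say $\{G_j,O\}=0$ but $[G_k,O]=0$) is the most delicate, because the structural requirements on $Q_1,Q_2$ from $\mcG$ must simultaneously support the Clifford rotation conditions and the two anti-commutation targets with $G_j$ and $G_k$.
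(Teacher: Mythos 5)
Your overall strategy is the same as the paper's: set the bulk parameters to zero, use the constant-depth final block permitted by Condition~\ref{cond: 3} to conjugate $O$ into a Pauli $\tilde O$ that anti-commutes with both $G_j$ and $G_k$, and then observe that $\grad_j=-2iG_j\tilde O$ and $\grad_k=-2iG_k\tilde O$ give $[\grad_j,\grad_k]=8G_jG_k\neq0$. That reduction, and the easy case $\{G_j,O\}=\{G_k,O\}=0$, are correct. However, the step you yourself flag as ``the main obstacle'' --- producing $Q_1,Q_2\in\mcG$ so that the conjugated observable anti-commutes with both generators --- is exactly where the content of the lemma lies, and you do not supply it. Worse, the one concrete candidate you name, $G_jG_k\in\mcG$, does not do the job. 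For a single $\pi/4$-rotation by $R$ with $\{R,O\}=0$ one gets $\tilde O\propto RO$, and $\{G_j,RO\}=0$ holds iff the number of anti-commutations of $G_j$ with $R$ and with $O$ is \emph{odd}. In the mixed case $\{G_j,O\}=[G_k,O]=0$ this forces $[G_j,R]=0$ and $\{G_k,R\}=0$, which $R=G_jG_k$ violates (it anti-commutes with $G_j$); and when $[G_j,O]=[G_k,O]=0$ one has $[G_jG_k,O]=0$, so a rotation by $G_jG_k$ leaves $O$ fixed. So the ``algebraic conditions'' you defer are not satisfied by the candidates you point to, and the anti-commuting chains guaranteed by Condition~\ref{cond: 2} do not by themselves yield a single Pauli with the three required (anti-)commutation relations.

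The paper closes this gap with Lemma~\ref{lem: graph-anticommute}. In the mixed case the correct choice is simply $R=G_j$ (the generator that anti-commutes with $O$), which satisfies $[G_j,R]=\{G_k,R\}=\{O,R\}=0$. In the case where both generators commute with $O$, an $R\in\mcG$ anti-commuting with all of $G_j$, $G_k$, and $O$ is constructed via Lemma~\ref{lem: shortest distance}: any two $\mfg$-connected Paulis are at graph distance at most two, because the nested commutator along a chain collapses to a single element of $\mcG$; applying this to the pair $(G_j,O)$, which is $\mfg$-connected by Condition~\ref{cond: 2}, yields an $S\in\mcG$ with $\{G_j,S\}=\{O,S\}=0$, and then either $S$ or $G_jS\in\mcG$ works depending on its relation to $G_k$. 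An existence argument of this kind, genuinely using the Lie closure of $\mcG$, is what your proof is missing.
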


	\begin{lem}\label{thm: [P,O]}
		Consider $\mfg$-connected $G_j,G_k\in\mcGc$ for $j,k \leq \Lbulk$.
		If $\{G_j,O\}=[G_k,O]=0$ or $[G_j,O]=\{G_k,O\}=0$, $\partial_j C$ and $\partial_k C$ cannot be simultaneously measured.   
	\end{lem}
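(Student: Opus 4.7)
Without loss of generality take $\{G_j, O\} = 0$ and $[G_k, O] = 0$; the other case is symmetric. My plan is to exhibit $\bt$ at which $[\grad_j(\bt), \grad_k(\bt)] \neq 0$ as a Pauli operator, which by Definition~\ref{def: Simultaneous measurability} refutes simultaneous measurability. Setting all bulk angles $\theta_1,\ldots,\theta_{\Lbulk}$ to zero freezes $\tilde G_j = G_j$ and $\tilde G_k = G_k$, so the whole $\bt$-dependence of $\grad_j, \grad_k$ is carried by $\tilde O = U_{\rm fin}^\dag O U_{\rm fin}$; by Condition~\ref{cond: 3} I may take $U_{\rm fin} = e^{i\phi_1 Q_1} e^{i\phi_2 Q_2}$ with arbitrary $Q_1, Q_2 \in \mcG$. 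Since $\grad_k(\bm{0}) = -i[G_k, O] = 0$ while $\grad_j(\bm{0}) = -2iG_jO \neq 0$, the commutator vanishes trivially at $\bt = \bm{0}$, and the task is to force a nonzero leading-order contribution as the $\phi$'s turn on.

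The $\mfg$-connectivity supplies a path $G_j = R_0, R_1, \ldots, R_m = G_k$ of anti-commuting DLA elements, and because $\{R_0, O\} = 0$ and $[R_m, O] = 0$ there must be a parity flip: some index $a$ with $\{R_a, O\} = 0$, $[R_{a+1}, O] = 0$, and $\{R_a, R_{a+1}\} = 0$. Taking $U_{\rm fin} = e^{i\phi_1 R_a} e^{i\phi_2 R_{a+1}}$ and expanding,
\begin{align*}
\tilde O \approx O + 2i\phi_1 \, OR_a - 4\phi_1\phi_2 \, OR_a R_{a+1} + \cdots,
\end{align*}
since $R_{a+1}$ commutes with $O$ but anti-commutes with $OR_a$. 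The composite $R_a R_{a+1}$ acts as an effective generator whose commutation with $G_k$ combines those of $R_a$ and $R_{a+1}$, so $[G_k, \tilde O]$ acquires a nonzero Pauli contribution at order $\phi_1$ (when $\{G_k, R_a\} = 0$) or at order $\phi_1\phi_2$ (when $R_a$ and $R_{a+1}$ have opposite commutation with $G_k$). Substituting into $[\grad_j, \grad_k] = -[[G_j, \tilde O], [G_k, \tilde O]]$, the leading surviving term reduces to a single Pauli commutator such as $[G_j O, \, G_k O R_a R_{a+1}]$, which a short direct calculation using $O^2 = I$, $\{G_j, O\} = 0$, $[G_k, O] = 0$, and $\{R_a, R_{a+1}\} = 0$ simplifies to a nonzero multiple of a Pauli operator.

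I expect the main obstacle to be the case analysis at the parity flip: the commutation relations of $G_j$ and $G_k$ with $R_a$ and $R_{a+1}$ are not fully pinned down by the hypotheses, and one must verify, combination by combination, that either the single-rotation $U_{\rm fin} = e^{i\phi R_a}$ or the two-rotation $U_{\rm fin} = e^{i\phi_1 R_a} e^{i\phi_2 R_{a+1}}$ produces a nonvanishing leading term. If none of the on-path choices works for the chosen flip, one can shift the flip location along the path or leverage the fact that $R_{m-1}$ always anti-commutes with $G_k = R_m$, which provides additional structural flexibility near the endpoint. The two-rotation freedom granted by Condition~\ref{cond: 3} is essential throughout, since the effective generator $R_a R_{a+1}$ is generically not in $\mcG$ and hence cannot be realized by a single rotation.
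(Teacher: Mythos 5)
Your overall strategy---freeze the bulk at $\bt=\bm 0$ so that $\tilde G_j=G_j$, $\tilde G_k=G_k$, and use Condition~\ref{cond: 3} to dress $O$ through $U_{\rm fin}$ until both gradient operators become nonzero, non-commuting Pauli operators---is exactly the paper's strategy. But what you defer as "the main obstacle" (the case analysis at the parity flip) is the entire content of the lemma, and your path-local search is not guaranteed to succeed: at the flip you may well have $[G_k,R_a]=[G_k,R_{a+1}]=0$, in which case $\grad_k$ vanishes to every order you compute, and "shifting the flip" is no remedy since the path may contain only one flip. The paper closes this gap by first proving a purely graph-theoretic fact (Lemma~\ref{lem: graph-POQO}): after reducing to $[G_j,G_k]=0$ via Lemma~\ref{thm: PjPk}, there exists a \emph{single} $R\in\mcG$ with $\{G_j,R\}=\{G_k,R\}=\{O,R\}=0$. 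Then one rotation $U_{\rm fin}=e^{i\phi R}$ gives exactly $\grad_j=-2i\cos(2\phi)\,G_jO$ and $\grad_k=2\sin(2\phi)\,G_kOR$, and $[G_jO,G_kOR]\neq 0$ because the two products anti-commute (three anti-commuting pairs among their factors). No perturbative expansion or combinatorial enumeration is needed.

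The tool you are missing is the closure of $\mcG$ under commutators, and your closing remark shows the misconception concretely: since consecutive path elements anti-commute, $[iR_a,iR_{a+1}]=-2R_aR_{a+1}$, so $R_aR_{a+1}$ \emph{is} in $\mcG$ (up to phase), contrary to your claim that it "is generically not in $\mcG$ and hence cannot be realized by a single rotation." This closure is precisely what lets the paper collapse any $\mfg$-connecting path to distance at most two (Lemma~\ref{lem: shortest distance}), extract a common anti-commuting neighbor of $G_j$ and $G_k$ (Lemma~\ref{lem: shortest distance2}), and then multiply by path elements to force anti-commutation with $O$ as well. Without importing these product operators from the DLA---rather than restricting to nodes lying on the original path---your case analysis does not terminate, so as written the proof is incomplete.
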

	
	\begin{lem}\label{thm: PjR&PkR}
		Consider $\mfg$-connected $G_j, G_k\in\mcGc$ for $j,k \leq \Lbulk$.
		If there exists $R\in\mcG$ such that $\{G_j,R\}=[G_k,R]=0$ or $[G_j,R]=\{G_k,R\}=0$, $\partial_j C$ and $\partial_k C$ cannot be simultaneously measured.    
	\end{lem}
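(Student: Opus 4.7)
My plan is to exhibit an explicit parameter assignment $\bt^\star$ for which $[\grad_j(\bt^\star),\grad_k(\bt^\star)]\neq 0$, violating Definition~\ref{def: Simultaneous measurability}. Without loss of generality I assume $\{G_j,R\}=0$ and $[G_k,R]=0$ (the other case follows by swapping $j\leftrightarrow k$), together with $j<k$. Setting $\theta_\ell=0$ for every $\ell\leq\Lbulk$, I obtain $U_{j+}=U_{k+}=I$, hence $\tilde G_j=G_j$, $\tilde G_k=G_k$, and $\tilde O=U_\text{fin}^\dagger O U_\text{fin}$; by Condition~\ref{cond: 3} I may realize $U_\text{fin}=e^{i\phi_1 Q_1}e^{i\phi_2 Q_2}$ for any $Q_1,Q_2\in\mcG$ and $\phi_1,\phi_2\in\mathbb{R}$.

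The core idea is to engineer $\tilde O$ so that its Pauli expansion carries a term containing $R$ as a factor. Such a term contributes non-trivially to $\grad_j$ via $[G_j,\cdot\,]$ (since $\{G_j,R\}=0$), whereas the $R$-factor commutes through $G_k$ in the corresponding contribution to $\grad_k$; this asymmetry is what prevents $[\grad_j,\grad_k]=0$. Two sub-cases arise according to whether $R$ commutes or anti-commutes with $O$. If $\{R,O\}=0$, I take $Q_1=R$ with $\phi_1=\pi/4$, giving $e^{-i\phi_1 R}Oe^{i\phi_1 R}=-iRO$, and then fine-tune $Q_2$ and $\phi_2$ to retain the $R$-bearing component while keeping the relevant commutators non-vanishing. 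If $[R,O]=0$, rotation by $R$ alone leaves $O$ fixed, so I introduce an auxiliary $S\in\mcG$ with $\{S,O\}=\{S,R\}=0$ and set $Q_1=S$, $Q_2=R$; the composite rotation then yields
\begin{align*}
\tilde O=\cos 2\phi_1\,O-i\sin 2\phi_1\cos 2\phi_2\,SO-\sin 2\phi_1\sin 2\phi_2\,RSO,
\end{align*}
whose last term supplies the required $R$-factor. The existence of $S$ follows from the $\mfg$-connectivity of $O$ to $R$ guaranteed by Condition~\ref{cond: 2}, combined with the observation that $[R,O]=0$ rules out the length-one chain and therefore forces a shortest chain of length at least two.

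Having fixed $\tilde O$, I substitute it into $\grad_j=-i[G_j,\tilde O]$ and $\grad_k=-i[G_k,\tilde O]$, expand $[\grad_j,\grad_k]$ in the Pauli basis, and show via the asymmetry between $G_j$ and $G_k$ in their commutation with $R$ that at least one Pauli coefficient is a non-vanishing polynomial in $(\cos 2\phi_i,\sin 2\phi_i)$; choosing $(\phi_1,\phi_2)$ generically (e.g.\ irrational multiples of $\pi$) then delivers $\bt^\star$. The hardest step I anticipate is the sub-case $[R,O]=0$ when the shortest DLA path from $O$ to $R$ has length greater than two, so no single $S\in\mcG$ anti-commutes with both $O$ and $R$. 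In that situation I would refine the construction by combining middle-circuit gates with the two rotations of Condition~\ref{cond: 3}, or by decomposing the longer chain into products of elements of $\mcG$ so that the $R$-bearing term can still be implanted into $\tilde O$. Careful Pauli-sign bookkeeping is also required to ensure that the $G_j,G_k$-versus-$O$ sub-cases do not cause spurious cancellations in the final commutator expansion.
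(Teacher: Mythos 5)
Your overall strategy --- set the initial part of the circuit to the identity, use Condition~\ref{cond: 3} to realize $U_\text{fin}$ as a product of two designed rotations, and exhibit a $\bt$ with $[\grad_j,\grad_k]\neq 0$ --- is exactly the paper's strategy. But the execution has a genuine gap: you only control the commutation relations of your auxiliary operators with $O$ and $R$, not with $G_j$ and $G_k$. The sign of the commutator $[G_j, P]$ for a Pauli word $P=RSO$ is the product of the individual signs from $R$, $S$, and $O$, so the ``asymmetry supplied by the $R$-factor'' can be washed out (or a whole gradient operator can vanish) depending on how $S$ and $O$ relate to $G_j$ and $G_k$. For instance, in your first sub-case with $\tilde O \propto RO$, if $\{G_j,O\}=0$ then $G_j$ \emph{commutes} with $RO$ and $\grad_j=0$, killing the argument; your ``fine-tune $Q_2$ and $\phi_2$'' does not repair this without further input. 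The paper avoids this by (i) first reducing, via Lemmas~\ref{thm: PjPk} and \ref{thm: [P,O]}, to the case $[G_j,G_k]=0$ with $G_j,G_k$ having the \emph{same} commutation relation with $O$, and (ii) invoking dedicated graph lemmas (Lemmas~\ref{lem: graph-POQO} and \ref{lem: PQSM} in Appendix~\ref{sec: App proof}) that upgrade your $R$ to operators $S,T\in\mcG$ with prescribed commutation relations with \emph{all} of $G_j$, $G_k$, and $O$ simultaneously. With those in hand, $\grad_j$ and $\grad_k$ collapse to single Pauli words $\propto G_jO'$ and $\propto G_kO'T$ whose non-commutation is manifest --- no multi-term cancellation analysis is needed. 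Those two lemmas (proved by a nontrivial case analysis using products like $[iR,iT]\in\mcG$ from the Lie closure) are the real content of the proof and are absent from your proposal.

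Your anticipated difficulty about ``shortest DLA path from $O$ to $R$ of length greater than two'' is in fact a non-issue, but for a reason you would still need to prove: the Lie closure forces the shortest distance between any two $\mfg$-connected Pauli operators to be one or two (Lemma~\ref{lem: shortest distance}), because a chain $R_0\to R_1\to\cdots\to R_d$ with no chords yields the product $R_{d-1}\cdots R_1\in\mcG$ anti-commuting with both endpoints. Establishing that fact, and then the existence of auxiliaries with the full set of required (anti-)commutation relations, is the missing machinery your sketch would need before the final ``generic $(\phi_1,\phi_2)$'' step becomes a proof.
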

	
	\begin{lem}\label{thm: Gj=Gk}
		For $j<k \leq \Lbulk$, if $G_j=G_k$, $\partial_j C$ and $\partial_k C$ cannot be simultaneously measured.    
	\end{lem}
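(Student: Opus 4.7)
The plan is to exhibit a single parameter vector $\bt^*$ at which $[\grad_j(\bt^*),\grad_k(\bt^*)] \neq 0$; by Definition~\ref{def: Simultaneous measurability}, this already suffices to conclude that $\partial_j C$ and $\partial_k C$ cannot be simultaneously measured. The key structural ingredient is Condition~\ref{cond: 1}: since $G_j=G_k$ with $j<k$, there must exist $\ell$ with $j<\ell<k$ and $\{G_j,G_\ell\}=0$. The anti-commuting wedge gate $G_\ell$ sandwiched between the two identical generators is precisely what will distinguish $\tilde{G}_k$ from $\tilde{G}_j$ in the Heisenberg picture.

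First, I would take $\bt^*$ in which every bulk parameter vanishes except $\theta_\ell=\pi/4$, while leaving the final-block parameters free for now. This makes $U_{j+}=I$, so $\tilde{G}_j=G_j$, whereas $U_{k+}=e^{i\pi G_\ell/4}$ and, by $\{G_j,G_\ell\}=0$,
\begin{align*}
\tilde{G}_k \;=\; e^{-i\pi G_\ell/4}\,G_j\,e^{i\pi G_\ell/4} \;=\; i\,G_j G_\ell,
\end{align*}
which in turn anti-commutes with $G_j$. The gradient operator $\grad_k=-i[\tilde{G}_k,\tilde{O}]$ simplifies to $[G_j G_\ell,\tilde{O}]$, and combined with $\grad_j=-i[G_j,\tilde{O}]$ the target becomes
\begin{align*}
[\grad_j,\grad_k] \;=\; \bigl[\,-i[G_j,\tilde{O}]\,,\,[G_j G_\ell,\tilde{O}]\,\bigr].
\end{align*}
If $\tilde{O}$ happens to be a Pauli $O_\text{eff}$ satisfying $\{G_j,O_\text{eff}\}=0$ and $[G_\ell,O_\text{eff}]=0$, a direct Pauli-algebra computation using $G_j^2=G_\ell^2=I$ and $\{G_j,G_\ell\}=0$ collapses the right-hand side to a nonzero multiple of $G_\ell$, which is what I want.

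The remaining task is to engineer $\tilde{O}$ into such an $O_\text{eff}$ via suitable choices of $U_\text{fin}$, optionally supplemented by bulk parameters $\theta_m$ with $k<m\leq\Lbulk$ (which modify $\tilde{O}$ without affecting $\tilde{G}_j$ or $\tilde{G}_k$). Condition~\ref{cond: 3} grants $U_\text{fin}$ the freedom to perform two independent Pauli rotations by any elements of $\mcG$, and Condition~\ref{cond: 2} supplies an anti-commutation chain $\{G_j,R_1\}=\{R_1,R_2\}=\cdots=\{R_m,O\}=0$ in $\mcGc$; applying successive $\pi/4$ rotations by the $R_i$ transports $O$ along this chain, each rotation flipping its (anti-)commutation fingerprint with $G_j$ and with $G_\ell$ by a controlled amount. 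A short case analysis on the pair $([G_j,O],[G_\ell,O])$ then shows that a bounded sequence of such rotations suffices to land in the target cell $(\{G_j,\cdot\}=0,\,[G_\ell,\cdot]=0)$, and the required rotations fit into $U_\text{fin}$ together with the idle bulk slots after $k$. Establishing this last reachability step — in particular, controlling the rotations along longer chains so as not to overshoot the target cell — is the main obstacle of the argument.
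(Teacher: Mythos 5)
Your reduction is the same one the paper uses: Condition~\ref{cond: 1} supplies the wedge $G_\ell$, the choice $\theta_\ell=\pm\pi/4$ turns $\tilde{G}_k$ into (a phase times) $G_jG_\ell$, which anti-commutes with $\tilde{G}_j=G_j$, and the commutator $[\grad_j,\grad_k]$ is then a nonzero multiple of $G_\ell$ provided the effective observable anti-commutes with $G_j$ and commutes with $G_\ell$. That algebra is correct. The gap is exactly where you flag it: you never establish that the target commutation cell for $\tilde{O}$ is reachable. Transporting $O$ along the chain from Condition~\ref{cond: 2} does not obviously work, because the chain elements $R_1,\dots,R_m$ come with fixed, uncontrolled commutation signs relative to $G_j$ and $G_\ell$, so successive $\pi/4$ rotations may never land in the cell $(\{G_j,\cdot\}=0,\,[G_\ell,\cdot]=0)$. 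Moreover, your fallback of using ``idle bulk slots after $k$'' is not available: the generators appearing at positions $m>k$ and their order are fixed by the circuit architecture, not freely choosable, and Condition~\ref{cond: 3} only licenses $U_\text{fin}$ to realize \emph{two} rotations by elements of $\mcG$ --- one of which you already need to cancel nothing here, but a long chain of transport rotations would exceed that budget.

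The paper closes this gap with Lemma~\ref{lem: graph-anticommute}, which is the step your argument is missing. Its content is that for anti-commuting $P,Q\in\mcG$ both $\mfg$-connected to $O$, there always exists a \emph{single} $R\in\mcG$ realizing the needed anti-commutation pattern with $P$, $Q$, and $O$; the proof manufactures $R$ by exploiting the Lie closure of $\mcG$ (e.g.\ taking $R\propto PS$ for a suitable $S\in\mcG$ at distance two, which Lemma~\ref{lem: shortest distance} guarantees exists), rather than by traversing a chain. With that $R$ in hand, the paper sets $U_\text{fin}=e^{-i\pi R/4}e^{+i\pi G_\ell/4}$ --- exactly two rotations, within Condition~\ref{cond: 3}'s guarantee --- so that $\tilde{O}=iRO$ anti-commutes with both $G_j$ and $G_k'=iG_\ell G_k$, and the case split is only between ``$O$ already works'' and ``one rotation by $R$ is needed.'' If you replace your chain-transport step with an appeal to (or a proof of) that single-$R$ existence statement, your argument becomes the paper's proof.
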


	Lemma~\ref{thm: not connected} (Lemmas~\ref{thm: PjPk}--\ref{thm: Gj=Gk}) gives the sufficient (necessary) condition of the DLA structure for measuring multiple gradient components simultaneously.
	Figure~\ref{fig: lemmas1--5} illustrates the DLA graph representation of these lemmas.
	We note that although Lemmas~\ref{thm: PjPk}--\ref{thm: Gj=Gk} assume $j,k\leq \Lbulk=\mO(L)$, the contribution from the final part of the circuit with constant depth $\LB=\mO(1)$ to the gradient measurement efficiency is negligible in the deep circuit limit.
	Thus, in the following subsection, we only consider the gradient components for $j,k\leq \Lbulk$.

	It is noteworthy that Lemmas~\ref{thm: PjPk}--\ref{thm: PjR&PkR} are closely related to the commutation relations of the CBC.
	First, Lemma~\ref{thm: PjPk} states that measuring two gradient components $\partial_j C$ and $\partial_k C$ simultaneously requires that the corresponding generators $G_j$ and $G_k$ are commutative.
	This requirement is satisfied in the CBC since the generators in the same block are all commutative.
	Second, by Lemma~\ref{thm: [P,O]}, two gradient components $\partial_j C$ and $\partial_k C$ are not simultaneously measurable if the corresponding generators $G_j$ and $G_k$ have different (anti-)commutation relations with the observable $O$.
	This corresponds to the fact that the gradient of CBC must be measured separately for the commuting and anti-commuting parts with $O$, in each block.
	Finally, by Lemma~\ref{thm: PjR&PkR}, two gradient components $\partial_j C$ and $\partial_k C$ are not simultaneously measurable if there exists another generator $R\in\mcGc$ such that $\{G_j,R\}=[G_k,R]=0$ or $[G_j,R]=\{G_k,R\}=0$.
	In the CBC, from the commutation relations between distinct blocks, there does not exist such $R$ for generators $G_j^a$ and $G_k^a$ in the same block, and thus the simultaneous gradient measurement is possible.

	Before moving on to the proof, we revisit Condition~\ref{cond: 2} in terms of the DLA graph.
	This condition states that, for any $G\in\mcGc$, there exist $Q_1,\cdots,Q_m\in\mcGc$ such that $\{G,Q_1\}=\cdots=\{Q_m,O\}=0$. 
	That is, the observable $O$ is $\mfg$-connected to $\forall G\in\mcGc$ and thus $\forall G\in\mcG$.
	Otherwise, the DLA graph can be decomposed into separated subgraphs as $\mcG=\mcG{}_1 \sqcup \mcG{}_2$, where $\mcG{}_1$ ($\mcG{}_2$) is (not) $\mfg$-connected to $O$ (i.e., $[\mcG{}_1,\mcG{}_2]=[\mcG{}_2,O]=0$).
	Then, the unitary circuit is also decomposed as $U=U_1U_2$ with $U_1\in e^{\mfg{}_1}$ and $U_2\in e^{\mfg{}_2}$, where we have defined $\mfg{}_{1}=\text{span}(\mcG{}_1)$ and $\mfg{}_{2}=\text{span}(\mcG{}_2)$.
	Using $[U_1,U_2]=[U_2,O]=0$ derived from $[\mcG{}_1,\mcG{}_2]=[\mcG{}_2,O]=0$, we have $C=\tr[\rho U^\dag OU]=\tr[\rho U^\dag_1OU_1]$, indicating that $U_2$ does not affect the result.
	Therefore, we assume that such irrelevant gates are absent in the circuit.

	\begin{figure*}[t]
		\centering
		\includegraphics[width=0.9\linewidth]{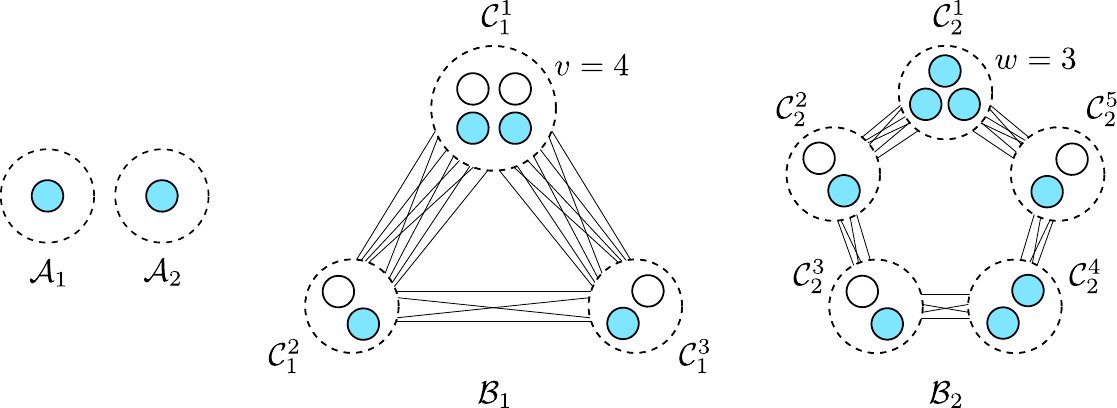}
		\caption{
			{\bf Example of DLA decomposition.}
			The white (blue) circles denote Pauli operators in $\mcG$ that (anti-)commute with the observable $O$.
			This DLA graph is decomposed as $\mcG=\mcA_1\sqcup\mcA_2\sqcup\mcB_1\sqcup\mcB_2$ with $\mcB_1=\bigsqcup_{a=1}^3 \mcC_1^{a}$ and $\mcB_2=\bigsqcup_{a=1}^5 \mcC_2^{a}$, where $p=2, q=2, r_1=3$, and $r_2=5$.
			As defined in Eqs.~\eqref{apeq: def v}--\eqref{apeq: def Cw}, we have $v=4, w=3, \mcC_v=\mcC_1^{1}, \mcC_w=\mcC_2^{1}$, and $\mcS=\{A_1,A_2,F_1F_1,F_1F_2,F_1F_3\}$ with $\mcA_1=\{A_1\}, \mcA_2=\{A_2\}$, and $\mcC_w=\{F_1,F_2,F_3\}$.	
		}
		\label{fig: DLA_decomposition}
	\end{figure*}

	\subsection{Trade-off between $\ef$ and $\ex$} \label{secap: main proof}

	Here, we prove Theorem~\ref{thm: main} using Lemmas~\ref{thm: not connected}--\ref{thm: Gj=Gk}. 
	Let us begin by showing Eq.~\eqref{apeq: main_inequality2}.
	By Lemma~\ref{thm: Gj=Gk}, $\partial_j C$ and $\partial_k C$ are not simultaneously measurable if $G_j=G_k$, which indicates that the maximum number of simultaneously measurable gradient components is bounded by the total number of nodes in the DLA graph, namely the expressivity $\ex$.
	Since the gradient measurement efficiency $\ef$ is defined as the mean number of simultaneously measurable components in the gradient, we obtain $\ex\geq\ef$.

	To prove Eq.~\eqref{apeq: main_inequality}, we identify the gradient measurement efficiency $\ef$ in the DLA graph.
	Lemmas~\ref{thm: not connected}--\ref{thm: Gj=Gk} give the conditions for measuring two gradient components simultaneously.
	Based on these lemmas, we decompose the DLA graph into several subgraphs that consist of (potentially) simultaneously measurable nodes, deriving the upper bound of the gradient measurement efficiency.
	To this end, we first decompose the DLA graph as 
	\begin{align}
		\mcG = \left( \mcA_1 \sqcup \cdots \sqcup \mcA_p \right) \sqcup \left( \mcB_1 \sqcup \cdots \sqcup \mcB_q \right), \label{apeq: dec A and B}
	\end{align}
	where $\mcA_\x$'s and $\mcB_\x$'s are separated subgraphs that have one and multiple nodes, respectively (i.e., $|\mcA_\x|=1$ and $|\mcB_\x|\geq2$, see Fig.~\ref{fig: DLA_decomposition}).
	Here, $p$ and $q$ are the numbers of subgraphs $\mcA_\x$'s and $\mcB_\x$'s.
	Since $\mcA_\x$ and $\mcB_\x$ are separated, they satisfy
	\begin{align}
		&[\mcA_\x,\mcA_\y]=[\mcB_\x,\mcB_\y]=0 \quad \forall \x \neq \y, \\
		&[\mcA_\x,\mcB_\y]=0 \quad \forall \x, \y.
	\end{align}
	According to Lemma~\ref{thm: not connected}, the gradient components for not $\mfg$-connected $G_j, G_k\in\mcGc$ can be measured simultaneously.
	Therefore, when $G_j$ and $G_k$ are nodes in different subgraphs in the decomposition of Eq.~\eqref{apeq: dec A and B}, $\partial_j C$ and $\partial_k C$ are simultaneously measurable.
	We note that the element $A_\x\in\mcA_\x$ anti-commutes with the observable $O$, $\{A_\x,O\}=0$, because all the Pauli operators in $\mcG$ must be $\mfg$-connected to $O$ by Condition~\ref{cond: 2}.
	Also, $A_\x\in\mcA_\x$ can appear in the quantum circuit just once due to Condition~\ref{cond: 1} since it commutes with all the other generators in the circuit.

	To examine the simultaneous measurability in each $\mcB_\x$, we further decompose it into $r_\x$ subgraphs, based on Lemmas~\ref{thm: PjPk} and \ref{thm: PjR&PkR}:
	\begin{align}
		\mcB_\x = \mcC_\x^{1}\sqcup \cdots \sqcup \mcC_\x^{r_\x},
	\end{align}
	where $\mcC_\x^{a}$'s satisfy
	\begin{align}
		[\mcC_\x^{a},\mcC_\x^{b}]=0 \quad \text{or} \quad \{\mcC_\x^{a},\mcC_\x^{b}\}=0  \label{apeq: Cst1}
	\end{align}
	and
	\begin{align}
		&\text{$\forall a\neq b$, $\exists c$, s.t.} \notag \\
		&\left\{
		\begin{aligned}
			&[\mcC_\x^{a},\mcC_\x^{c}]=0 \\
			&\{\mcC_\x^{b},\mcC_\x^{c}\}=0 
		\end{aligned}
		\right.
		\quad\text{or}\quad
		\left\{
		\begin{aligned}
			&\{\mcC_\x^{a},\mcC_\x^{c}\}=0 \\
			&[\mcC_\x^{b},\mcC_\x^{c}]=0.
		\end{aligned}
		\right. \label{apeq: Cst2}
	\end{align}
	Equation~\eqref{apeq: Cst1} means that all the nodes in a subgraph $\mcC_\x^{a}$ share the same (anti-)commutation relations with all the nodes in a subgraph $\mcC_\x^{b}$: $[P,Q]=0$ $\forall P\in\mcC_\x^{a}, \forall Q\in\mcC_\x^{b}$ or $\{P,Q\}=0$ $\forall P\in\mcC_\x^{a}, \forall Q\in\mcC_\x^{b}$ (see Fig.~\ref{fig: DLA_decomposition}).
	Considering $a=b$, this condition naturally leads to the commutation relation within each $\mcC_\x^{a}$: $[P,Q]=0$ $\forall P,Q\in\mcC_\x^{a}$.
	Thus, Lemma~\ref{thm: PjPk} does not exclude the simultaneous measurability for $P,Q\in \mcC_\x^{a}$.
	Also, because of Eq.~\eqref{apeq: Cst2}, it is impossible to merge several $\mcC_\x^{a}$'s into a larger subgraph satisfying Eq.~\eqref{apeq: Cst1}.
	That is, when $P, Q\in\mcB_\x$ are nodes in different $\mcC_\x^{a}$'s, there always exists $R\in\mcB_\x$ such that $[P,R]=\{Q,R\}=0$ or $\{P,R\}=[Q,R]=0$.
	Therefore, for circuit generators $G_j, G_k\in\mcGc$ belonging to different subgraphs $\mcC_\x^{a}$'s, their corresponding gradient components $\partial_j C$ and $\partial_k C$ are not simultaneously measurable by Lemma~\ref{thm: PjR&PkR}.

	Finally, we decompose $\mcC_\x^{a}$ into the commuting and anti-commuting parts with the observable $O$, based on Lemma~\ref{thm: [P,O]}:
	\begin{align}
		\mcC_\x^{a} = \mcC_\x^{a+}\sqcup \mcC_\x^{a-},
	\end{align}
	where $\mcC_\x^{a\pm}$ commute and anti-commute with the observable $O$, respectively:
	\begin{align}
		[\mcC_\x^{a+}, O]=0, \quad \{\mcC_\x^{a-}, O\}=0.
	\end{align}
	By Lemma~\ref{thm: [P,O]}, for $G_j\in \mcC_\x^{a+}$ and $G_k\in \mcC_\x^{a-}$, their corresponding gradient components $\partial_j C$ and $\partial_k C$ are not simultaneously measurable.

	In summary, when $G_j,G_k\in\mcGc$ are nodes in different separated subgraphs $\mcA_\x$ or $\mcB_\x$, $\partial_j C$ and $\partial_k C$ are simultaneously measurable.
	Meanwhile, within each $\mcB_\x$, the necessary condition for simultaneously measuring $\partial_j C$ and $\partial_k C$ for $G_j,G_k\in\mcB_\x$ is that $G_j$ and $G_k$ are nodes in the same $\mcC_\x^{a\pm}$.
	Therefore, the maximum size of $\mcC_\x^{a\pm}$ bounds the maximum number of simultaneously measurable gradient components in each $\mcB_\x$.

	Using this decomposition, we first show the trade-off inequality for $q=0$, where $q$ is the number of $\mcB_\x$.
	This case corresponds to the commuting generator circuit in Ref.~\cite{Bowles2023-vf}, where all the circuit generators $A_\x\in\mcA_\x$ are mutually commuting.
	Then, the number of gates $L$ is finite by Condition~\ref{cond: 1}.
	In this case, the total number of nodes is $p$, and they can all be measured simultaneously by Lemma~\ref{thm: not connected}.
	Therefore, the gradient measurement efficiency and the expressivity are 
	\begin{align}
		\ef=\ex=p. \label{apeq: exef_q0}
	\end{align}
	Meanwhile, $A_1,\cdots,A_p$ and $A_1A_1,\cdots,A_1A_p$ are Pauli operators that differ and commute with each other~\footnote{For $j\neq k$, $A_j\neq A_k$ and $A_1A_j \neq A_1 A_k$ trivially hold. For any $j$ and $k$, $A_j \neq A_1A_k$ also holds because of $\{A_j,O\}=[A_1A_k,O]= 0$.}. 
	Because the maximum number of mutually commuting Pauli operators is $2^n$, we have
	\begin{align}
		2p\leq 2^n. \label{apeq: 2p2n_q0}
	\end{align}
	From Eqs.~\eqref{apeq: exef_q0} and \eqref{apeq: 2p2n_q0}, we obtain $\ex\leq 4^n/\ef - \ef$.

	Next, to prove the case of $q\neq0$, we define
	\begin{align}
		&v= \underset{\mcC_\x^{a}}{\text{max}} \left( \left| \mcC_\x^{a} \right| \right), \label{apeq: def v} \\
		&w= \underset{\mcC_\x^{a\pm}}{\text{max}} \left( \left| \mcC_\x^{a\pm} \right| \right), \\
		&\mcC_v= \underset{\mcC_\x^{a}}{\text{argmax}} \left( \left| \mcC_\x^{a} \right| \right), \\
		&\mcC_w= \underset{\mcC_\x^{a\pm}}{\text{argmax}} \left( \left| \mcC_\x^{a\pm} \right| \right). \label{apeq: def Cw}
	\end{align}
	Then, by Lemmas~\ref{thm: not connected}--\ref{thm: Gj=Gk}, the gradient measurement efficiency $\ef$ and the expressivity $\ex$ (i.e., the total number of nodes) are bounded as
	\begin{align}
		&\ef \leq qw, \label{apeq: Deff_cal1} \\
		&\ex \leq p + v(r_1 + \cdots + r_q). \label{apeq: Dexp_cal0}
	\end{align}
	Note that there are no contributions from $\mcA_\x$ to the right-hand side of Eq.~\eqref{apeq: Deff_cal1}.
	This is because, since the generator $A_\x\in\mcA_\x$ commutes with all the other generators, it cannot appear more than once in the circuit due to Condition~\ref{cond: 1} and thus does not contribute to the gradient measurement efficiency in the deep circuit limit.

	Besides Eq.~\eqref{apeq: Dexp_cal0}, the DLA decomposition uncovers another constraint on the expressivity $\ex$.
	To see that, we define a set of Pauli operators $\mcS$ as follows.
	If $v\geq w+p$, we define 
	\begin{align}
		\mcS = \{E_1E_1, E_1E_2, \cdots, E_1E_v\}
	\end{align}
	with $\mcC_v = \{E_1,\cdots,E_v\}$.
	Otherwise, we define 
	\begin{align}
		\mcS = \{F_1F_1, F_1F_2, \cdots, F_1F_w\} \sqcup \{A_1,\cdots,A_p\}
	\end{align}
	with $\mcC_w = \{F_1,\cdots,F_w\}$ and $\mcA_\x=\{A_\x\}$. 
	From Eq.~\eqref{apeq: Cst1}, all the operators in $\mcS$ commute with themselves and with $\mcG$: $[\mcS,\mcS]=[\mcG,\mcS]=0$.
	Thus, $\mcS$ is the stabilizer (or symmetry) of the quantum circuit $U(\bt)$, limiting the degrees of freedom in the DLA, i.e., the expressivity $\ex$.

	We can derive the inequality~\eqref{apeq: main_inequality} by counting the remaining degrees of freedom within the subspace stabilized by $\mcS$.
	For instance, let us consider the simplest case of $p=0, q=1$, and $[\mcC_v,O]=0$ or $\{\mcC_v,O\}=0$ (i.e., $\mcC_v=\mcC_w$). 
	In this case, $v=w=|\mcS|$ and $[O,\mcS]=0$ hold.
	Then, the gradient measurement efficiency is bounded as $\ef\leq w=|\mcS|$ from Eq.~\eqref{apeq: Deff_cal1}.
	As for the expressivity, on the other hand, the stabilizers $\mcS$ constrain the dimension of the DLA as $4^n/|\mcS|$ (see Lemma~\ref{lem: stab1} in Sec.~\ref{secap: preliminaries}).
	Furthermore, since $[\mcG,\mcS]=[O,\mcS]=0$, the stabilizers $\mcS$ are not included in the generators $\mcGc$ and thus the DLA $\mcG$ because of Condition~\ref{cond: 2}.
	Hence, we have $\ex\leq 4^n/|\mcS|-|\mcS|$.
	These results lead to the inequality $\ex\leq 4^n/\ef - \ef$.
	This trade-off inequality is proven in general cases, where we use the following nontrivial constraints derived from the DLA decomposition, the stabilizers $\mcS$, and Eq.~\eqref{apeq: Dexp_cal0} (see Sec.~\ref{secap: proof eqs} for details):
	\begin{align}
		\ex \leq \frac{4^n v}{4^{q-1}|\mcS|^2} + (3q-4)v + p \label{apeq: ex_inequality}
	\end{align}
	and
	\begin{align}
		\frac{4^n v}{4^{q-1}|\mcS|^2} + (3q-4)v + p \leq \frac{4^n}{qw} - qw. \label{apeq: sub_inequality}
	\end{align}
	Combining Eqs.~\eqref{apeq: Deff_cal1}, \eqref{apeq: ex_inequality}, and \eqref{apeq: sub_inequality}, we finally obtain the trade-off inequality $\ex \leq 4^n/\ef - \ef$, as required.

	\begin{figure*}[t]
		\centering
		\includegraphics[width=\linewidth]{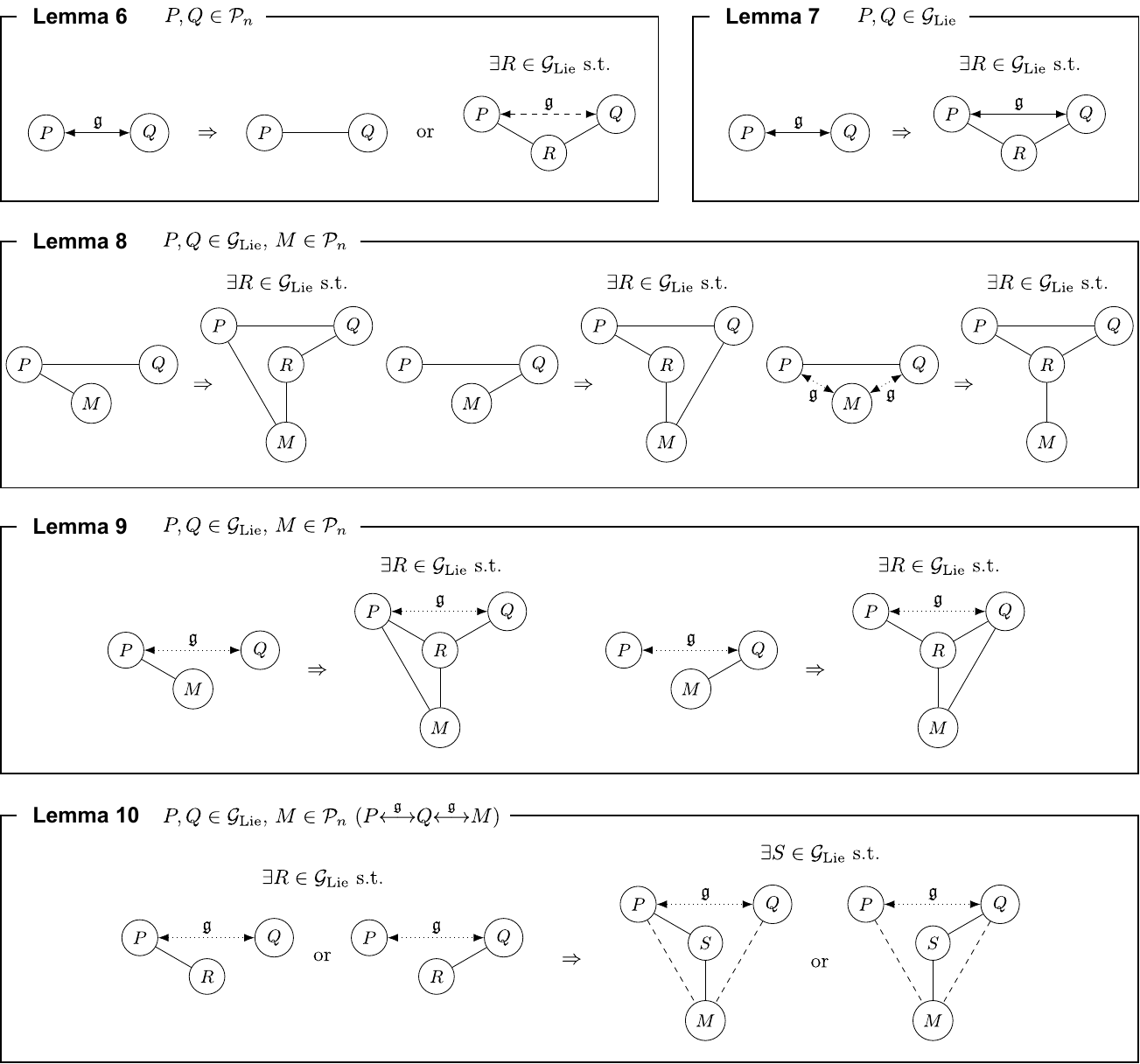}
		\caption{{\bf Summary of Lemmas~\ref{lem: shortest distance}--\ref{lem: PQSM}.}
		}
		\label{fig: lemma10--14}
	\end{figure*}

	\section{Lemmas for main theorem} \label{secap: lemmas1--5}
	
	This section proves Lemmas~\ref{thm_ap: not connected}--\ref{thm_ap: Gj=Gk} on the relationship between the simultaneous measurability of gradient components and the DLA structure, which have been used for the proof of Theorem~\ref{thm: main}.
	
	\subsection{Preliminaries}
	
	For preliminaries, we first show several relevant properties of the DLA graph to prove Lemmas~\ref{thm_ap: not connected}--\ref{thm_ap: Gj=Gk}.
	For the sake of convenience, we additionally define the path and distance on the DLA graph:
	\begin{dfn}[Path and distance]
		For $\mfg$-connected $P,Q\in\mc{P}_n$ ($P\neq Q$), consider $R_1,R_2,\cdots,R_{d-1}\in\mcG$ such that $\{P,R_1\}=\{R_1,R_2\}=\cdots=\{R_{d-1},Q\}=0$.
		Then, we call $P\to R_1\to \cdots \to R_{d-1}\to Q$ a path between $P$ and $Q$ and define its distance as $d$.
	\end{dfn}

	In what follows, we prove several lemmas on the DLA graph, which are summarized in Fig.~\ref{fig: lemma10--14}.
	Below, we ignore the coefficients of Pauli operators because only commutation and anti-commutation relations are relevant for the proof.
	\setcounter{lem}{5}
	\begin{lem} \label{lem: shortest distance}
		If $P, Q\in\mcPn$ ($P\neq Q$) are $\mfg$-connected, the shortest distance between them on the DLA graph is one or two.
	\end{lem}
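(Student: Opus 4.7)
The plan is to prove the lemma by contradiction, assuming a shortest $\mfg$-path of length $d\geq 3$ and manufacturing a strictly shorter one. Let the hypothetical shortest path be $P\to R_1\to R_2\to\cdots\to R_{d-1}\to Q$ with each $R_j\in\mcG$ and consecutive nodes anti-commuting. First I would squeeze out every consequence of minimality: any shortcut would reduce the length, hence $[P,R_j]=0$ for all $j\geq 2$, $[R_i,R_j]=0$ whenever $|i-j|\geq 2$, and (if $d\geq 3$) $[R_1,Q]=0$. These are the only structural facts the argument will need.

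Next I would collapse the first two hops into a single compound node by invoking Lie closure. Since $R_1,R_2\in\mcG$ with $\{R_1,R_2\}=0$, the nested commutator gives $[iR_1,iR_2]=-2R_1R_2\in i\mcG$. Because the product of two anti-commuting Hermitian Paulis is anti-Hermitian, $R_1R_2$ equals $i$ times a Hermitian Pauli operator; modulo the overall phase (which the paper drops throughout this section), $R_1R_2$ is thus a legitimate node of the DLA graph. A short calculation using $\{P,R_1\}=0$ together with $[P,R_2]=0$ yields $\{P,R_1R_2\}=0$, and using $[R_1,R_3]=0$ together with $\{R_2,R_3\}=0$ (or the analogous relations with $Q$ in place of $R_3$ when $d=3$) yields $\{R_1R_2,R_3\}=0$. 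The sequence $P\to R_1R_2\to R_3\to\cdots\to R_{d-1}\to Q$ is therefore a valid $\mfg$-path of length $d-1$, contradicting minimality. The base cases $d=1$ and $d=2$ already deliver the two alternatives in the statement, so this finishes the proof.

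The main obstacle I anticipate is a mild bookkeeping issue rather than a conceptual one: one must ensure that the new node $R_1R_2$ is not secretly equal to another node already appearing in the shortened path, which would make the length claim suspect. The derived anti-commutation relations $\{P,R_1R_2\}=0$ and $\{R_1R_2,R_3\}=0$ already forbid $R_1R_2\in\{P,R_3\}$, since any Pauli operator commutes with itself. If $R_1R_2$ happens to coincide with a later $R_k$, the path becomes $P\to R_k\to R_{k+1}\to\cdots\to Q$ of length $\leq d-k+1\leq d-2$, which is even shorter, so the contradiction survives in every case. Accordingly, the single compound-node reduction driven by Lie closure, once the minimality relations are catalogued, is the whole content of the proof.
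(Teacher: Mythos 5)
Your proof is correct and follows essentially the same route as the paper's: both extract the minimality-derived commutation relations along the hypothetical shortest path and then use the Lie closure to replace anti-commuting consecutive nodes by their product in $\mcG$, thereby shortening the path. The only difference is cosmetic --- the paper collapses the entire chain into the single node $R_{d-1}\cdots R_1$ at once, yielding a distance-two path directly, whereas you contract one pair at a time and contradict minimality of $d$; your extra concern about $R_1R_2$ coinciding with a later node is harmless (and in fact unnecessary, since the paper's notion of distance does not require the intermediate nodes to be distinct).
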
         

	\begin{proof}
		We prove this lemma by contradiction.
		Assume that the shortest distance between $P$ and $Q$ on the DLA graph is $d>2$, and let $R_1.\cdots,R_{d-1}\in\mcG$ be the nodes on the shortest path.
		For convenience, we denote $P$ and $Q$ by $R_0$ and $R_d$, respectively:
		\begin{equation*}
			\begin{tikzpicture}  
				\draw (2,0) -- (2.8,0);
				\draw[dashed] (2.8,0) -- (4.2,0);
				\draw (4.2,0) -- (5,0);
				\node[name=A][draw,circle,fill=white] at (0,0) {$R_0$}; 
				\node[name=B][draw,circle,fill=white] at (2,0) {$R_1$};
				\node[name=D][draw,circle,fill=white] at (5,0) {$R_{d}$};
				\draw (A) -- (B);
			\end{tikzpicture}.
		\end{equation*}
		Since this path is the shortest, these nodes are not connected by an edge except for the neighboring nodes:
		\begin{align}
			&\{R_j, R_k\} = 0 \quad \text{if} \quad |j-k|=1, \label{apeq: RjRk1} \\
			&[R_j, R_k] = 0 \quad \text{if} \quad |j-k|\neq 1. \label{apeq: RjRk2}
		\end{align}
		By definition of the DLA, the following nested commutator $R$ is included in $\mcG$:
		\begin{align}
			R 
			&= [R_{d-1},[\cdots,[R_3,[R_2,R_1]]\cdots]] \notag \\
			&= 2^{d-2} R_{d-1}\cdots R_1 \in \mcG,
		\end{align}
		where we have used Eqs.~\eqref{apeq: RjRk1} and \eqref{apeq: RjRk2} for the second equation.
		Then, $P\to R \to Q$ is a path of distance two because the Pauli operator $R\in\mcG$ anti-commutes with $P=R_0$ and $Q=R_d$.
		This contradicts the assumption that the shortest distance between $P$ and $Q$ is greater than two.
		Therefore, the shortest distance is one or two.
	\end{proof}

	\begin{lem} \label{lem: shortest distance2}
		If $P, Q\in \mcG$ ($P\neq Q$) are $\mfg$-connected, there exists $R\in\mcG$ such that $\{P,R\}=\{Q,R\}=0$.
	\end{lem}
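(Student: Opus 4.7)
The plan is to leverage Lemma~\ref{lem: shortest distance}, which tells us that the shortest distance between two $\mfg$-connected nodes is either one or two. I would split the argument into these two cases.

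If the shortest distance between $P$ and $Q$ is two, then by definition of distance there already exists some $R \in \mcG$ with $\{P,R\} = \{R,Q\} = 0$, and the lemma follows immediately. So the only real content lies in the case where the shortest distance is one, that is, $\{P,Q\} = 0$.

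In that case, the natural candidate is $R = PQ$ (up to a phase that we drop, as the paper's convention ignores coefficients in $\mcG$). Two things need to be checked. First, that $R \in \mcG$: since $P, Q \in \mcG$ are anti-commuting Pauli operators, the nested commutator $[iP, iQ] = -(PQ - QP) = -2PQ$ lies in the Lie closure $i\mcG$, so after stripping coefficients $PQ \in \mcG$. Second, that $R$ anti-commutes with both $P$ and $Q$: a short calculation using $P^2 = Q^2 = I$ and $QP = -PQ$ gives
\begin{align*}
    \{P, PQ\} &= P^2 Q + PQP = Q - P^2 Q = 0, \\
    \{Q, PQ\} &= QPQ + PQ^2 = -PQ^2 + P = 0.
\end{align*}
This produces the desired $R$.

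I do not expect a serious obstacle here; the main thing to be careful about is the convention that elements of $\mcG$ are Pauli operators modulo scalar phases, so that the nested commutator $[iP,iQ] \propto PQ$ legitimately produces $PQ$ as an element of $\mcG$ without worrying about an imaginary prefactor. It is also worth noting in passing that $R \neq P, Q$ automatically, since $PQ = P$ would force $Q = I$ and $PQ = Q$ would force $P = I$, either of which contradicts $\{P,Q\} = 0$.
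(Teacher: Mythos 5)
Your proposal is correct and follows essentially the same route as the paper: invoke Lemma~\ref{lem: shortest distance} to reduce to distances one and two, handle distance two by definition, and in the distance-one case take $R\propto[iP,iQ]=-2PQ\in\mcG$. The explicit anticommutation check and the observation that $R\neq P,Q$ are fine additions but do not change the argument.
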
   

	\begin{proof}
		By Lemma~\ref{lem: shortest distance}, the shortest distance between $P$ and $Q$ on the DLA graph is one or two.
		If the shortest distance is two, there exists $R\in\mcG$ such that $\{P,R\}=\{Q,R\}=0$ by definition of the distance.
		If the shortest distance is one (i.e., $\{P,Q\}=0$), $R=[P,Q]=2PQ\in\mcG$ satisfies $\{P,R\}=\{Q,R\}=0$, as required.
	\end{proof}

	\begin{lem} \label{lem: graph-anticommute}
		For $P, Q\in \mcG$ and $M\in\mcPn$ satisfying $\{P,Q\}=0$ and $P\conn Q\conn M$, the following statements hold: 
		\begin{align*}
			\text{(i) If $\{P,M\}=[Q,M]=0$, there exists $R\in\mcG$}& \\
			\text{s.t. $[P,R]=\{Q,R\}=\{M,R\}=0$.}& \\
			\text{(ii) If $[P,M]=\{Q,M\}=0$, there exists $R\in\mcG$}& \\
			\text{s.t. $\{P,R\}=[Q,R]=\{M,R\}=0$.}& \\
			\text{(iii) If $[P,M]=[Q,M]=0$, there exists $R\in\mcG$}& \\
			\text{s.t. $\{P,R\}=\{Q,R\}=\{M,R\}=0$.}&
		\end{align*}
	\end{lem}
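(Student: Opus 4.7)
The plan is to treat the three statements (i), (ii), (iii) by a single two-step template: first use the hypothesis $P \conn Q \conn M$ combined with Lemma~\ref{lem: shortest distance} to obtain a candidate intermediate node $R' \in \mcG$ lying on a distance-two path to $M$, and then, if necessary, multiply $R'$ by $P$ or $Q$ to flip exactly the commutation label that is wrong. The key algebraic fact behind the correction step is that whenever $A \in \mcG$ satisfies $\{A,R'\}=0$, one has $i[A,R']=2iAR'\in\mfg$, so $AR' \in \mcG$; and since multiplication by the Pauli $A$ XORs the commutation-label bit with $A$ for every operator, choosing $A \in \{P,Q\}$ lets us flip precisely the desired coordinate because $\{P,Q\}=0$.

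For (i), $[Q,M]=0$ together with $Q \conn M$ forces the shortest $Q$--$M$ distance in the DLA graph to be two (Lemma~\ref{lem: shortest distance}), yielding $R' \in \mcG$ with $\{Q,R'\}=\{M,R'\}=0$. If $[P,R']=0$, set $R:=R'$. Otherwise $\{P,R'\}=0$, and I would take $R := QR' \in \mcG$; a direct Pauli computation using $\{P,Q\}=0$, $\{P,R'\}=\{Q,R'\}=\{M,R'\}=0$, $\{P,M\}=0$, and $[Q,M]=0$ then yields $[P,QR']=\{Q,QR'\}=\{M,QR'\}=0$. Statement (ii) is symmetric: $[P,M]=0$ together with $P \conn M$ (obtained by transitivity from $P \conn Q \conn M$ through $Q\in\mcG$) provides $R' \in \mcG$ with $\{P,R'\}=\{M,R'\}=0$; either $R := R'$ works, or $R := PR' \in \mcG$ does after flipping the $Q$-coordinate. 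For (iii), starting from the same $Q$--$M$ extraction as in (i), if $\{P,R'\}=0$ already, take $R:=R'$; otherwise $R := QR' \in \mcG$ has $\{P,QR'\}=\{Q,QR'\}=\{M,QR'\}=0$, as one verifies from $\{P,Q\}=0$, $[Q,M]=0$, and $\{M,R'\}=0$.

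The main obstacle — and it is largely bookkeeping rather than conceptual — is making the three case splits match up cleanly. Concretely, one must choose in each statement the correct pair on which to apply Lemma~\ref{lem: shortest distance} ($Q\conn M$ for (i) and (iii), and $P\conn M$, derived by transitivity through the edge $\{P,Q\}=0$, for (ii)), and then choose the correct multiplier ($Q$ in (i) and (iii), $P$ in (ii)) so that the coordinate flipped is the only mismatched one. The requirement that the corrected element $AR'$ actually belong to $\mcG$ constrains us to multiply only by operators anti-commuting with $R'$, which is exactly why the pre-selected $R'$ from Lemma~\ref{lem: shortest distance} is chosen to anti-commute with the multiplier we intend to use. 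Once these choices are locked in, each of the six resulting subcases reduces to a three-line Pauli calculation using only the hypotheses $\{P,Q\}=0$, the commutation pattern of $M$ assumed in the statement, and the anti-commutation relations satisfied by $R'$.
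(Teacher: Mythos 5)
Your argument for case (iii) is sound and essentially matches the paper's, just with the roles of $P$ and $Q$ interchanged: the paper extracts a distance-two midpoint $S$ from the pair $(P,M)$ (using $[P,M]=0$, and noting $P\neq M$ follows from $\{P,Q\}=[M,Q]=0$) and corrects with the multiplier $P$ when $[Q,S]=0$, whereas you extract from $(Q,M)$ and correct with $Q$; both XOR computations check out. For cases (i) and (ii), however, you have overlooked a drastically simpler choice that the paper uses: in (i) the element $R=P$ itself already satisfies $[P,R]=\{Q,R\}=\{M,R\}=0$ directly from the hypotheses $\{P,Q\}=0$ and $\{P,M\}=0$, and symmetrically $R=Q$ works in (ii). No graph search or correction step is needed there at all.

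This matters because your longer route for (i) and (ii) introduces a genuine gap in case (ii). There you apply Lemma~\ref{lem: shortest distance} to the pair $(P,M)$, invoking transitivity to get $P\conn M$; but that lemma (and the relation $\conn$ itself) is only defined for distinct operators, and nothing in the hypotheses of case (ii) rules out $P=M$: the conditions $\{P,Q\}=0$, $[P,M]=0$, $\{Q,M\}=0$ are all consistent with $M=P$. In that degenerate subcase your extraction of $R'$ is undefined and the argument halts — even though the conclusion still holds trivially via $R=Q$. (Case (i) escapes this because the hypothesis $Q\conn M$ already carries $Q\neq M$, and in case (iii) $P=M$ is excluded by $\{P,Q\}=[M,Q]=0$.) The fix is one line — either handle $P=M$ separately with $R=Q$, or better, adopt the direct choices $R=P$ and $R=Q$ for (i) and (ii) and reserve the distance-two-plus-correction machinery for (iii) alone, where it is genuinely needed.
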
        

	\begin{proof}
		We prove the three cases separately.
		
		\begin{enumerate}
			\item[(i)] If $\{P,M\}=[Q,M]=0$, $R=P\in\mcG$ satisfies $[P,R]=\{Q,R\}=\{M,R\}=0$:
			
			\item[(ii)] If $[P,M]=\{Q,M\}=0$, $R=Q\in\mcG$ satisfies $\{P,R\}=[Q,R]=\{M,R\}=0$:
			
			\item[(iii)] If $[P,M]=[Q,M]=0$, the shortest distance between $P$ and $M$ is two by Lemma~\ref{lem: shortest distance}, where we have used $P\neq M$ derived from $\{P,Q\}=[M,Q]=0$.
			Thus, there exists $S\in\mcG$ such that $\{P,S\}=\{M,S\}=0$:
			\begin{equation*}
				\begin{tikzpicture}
					\node[name=A][draw,circle,fill=white] at (-1.3,0.75) {$P$};
					\node[name=B][draw,circle,fill=white] at (+1.3,0.75) {$Q$};
					\node[name=C][draw,circle,fill=white] at (0,0) {$S$};
					\node[name=D][draw,circle,fill=white] at (0,-1.5) {$M$};
					\draw (A) -- (B);
					\draw (A) -- (C);
					\draw[dashed] (B) -- (C);
					\draw (C) -- (D);
				\end{tikzpicture}.
			\end{equation*}
			If $\{Q,S\}=0$, $R=S\in\mcG$ satisfies $\{P,R\}=\{Q,R\}=\{M,R\}=0$.
			If $[Q,S]=0$, on the other hand, $R=[P,S]=2PS\in\mcG$ satisfies $\{P,R\}=\{Q,R\}=\{M,R\}=0$.
			Therefore, in both cases, there exists $R\in\mcG$ such that $\{P,R\}=\{Q,R\}=\{M,R\}=0$.
			
		\end{enumerate}
	\end{proof}

	\begin{lem} \label{lem: graph-POQO}
		For $P, Q\in \mcG$ and $M\in\mcPn$ satisfying $[P,Q]=0$ and $P\conn Q\conn M$, if $\{P,M\}=[Q,M]=0$ or $[P,M]=\{Q,M\}=0$, there exists $R\in\mcG$ such that $\{P,R\}=\{Q,R\}=\{M,R\}=0$.
	\end{lem}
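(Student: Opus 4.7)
The plan is to construct $R$ explicitly by pairing Lemma~\ref{lem: shortest distance2} with a single product trick. The two hypotheses, $\{P,M\}=[Q,M]=0$ and $[P,M]=\{Q,M\}=0$, are interchanged by swapping the roles of $P$ and $Q$, so I would treat the first in detail and recover the second by symmetry. In either case $P$ and $Q$ have opposite commutation relations with $M$, hence $P\neq Q$, and Lemma~\ref{lem: shortest distance2} applies to the $\mfg$-connected pair $P,Q\in\mcG$.

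First I would invoke Lemma~\ref{lem: shortest distance2} to extract $S\in\mcG$ with $\{P,S\}=\{Q,S\}=0$, then branch on the commutation between $M$ and $S$. If $\{M,S\}=0$, then $R=S$ already satisfies all three anti-commutation relations. The substantive sub-case is $[M,S]=0$: here I would take $R=PS$ under the first hypothesis (and $R=QS$ under the symmetric one). Two points need verification. First, $PS\in\mcG$: since $\{P,S\}=0$, one has $[iP,iS]=-2PS$, so $PS$ is, up to a real scalar, a Pauli basis element of the Lie closure $\mcG$. Second, a short Pauli computation should give $\{P,PS\}=\{Q,PS\}=\{M,PS\}=0$, using $P^2=I$ and $\{P,S\}=0$ for the first identity, $[P,Q]=0$ and $\{Q,S\}=0$ for the second, and the crucial combination $\{P,M\}=0$ with $[M,S]=0$ for the third.

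The conceptual core of the argument is this last identity: multiplying $S$ by an operator that anti-commutes with $M$ flips the commutation relation with $M$. Under the symmetric hypothesis $[P,M]=\{Q,M\}=0$ it is $Q$ that performs the flip, so one switches to $R=QS$; the analogous verifications then go through with the roles of $P$ and $Q$ interchanged, using $\{Q,M\}=0$ in the key step and $[P,Q]=0$ together with $\{P,S\}=0$ to handle $\{P,QS\}=0$.

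The main obstacle I expect is precisely the sub-case $[M,S]=0$, where the naive choice $R=S$ fails to anti-commute with $M$. The resolution — replacing $S$ by the product $PS$ (or $QS$) — is the one slightly non-obvious move; once one observes that multiplication by a carefully chosen anti-commuting factor simultaneously keeps the element inside $\mcG$ (via the Lie closure) and inverts the commutation with $M$, the remaining Pauli-algebra checks are routine and essentially parallel the argument used in Lemma~\ref{lem: graph-anticommute}.
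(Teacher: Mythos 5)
Your proposal is correct and follows essentially the same route as the paper: obtain $S\in\mcG$ with $\{P,S\}=\{Q,S\}=0$ from Lemma~\ref{lem: shortest distance2}, take $R=S$ if $\{M,S\}=0$, and otherwise take $R\propto PS=[iP,iS]/(-2)\in\mcG$ (or $QS$ in the symmetric case), which is exactly the paper's construction. The only cosmetic difference is that you branch directly on $[M,S]=0$ versus $\{M,S\}=0$, whereas the paper phrases the same dichotomy via Lemma~\ref{lem: shortest distance} as the shortest distance between $M$ and $S$ being one or two.
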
   

	\begin{proof}
		Consider the case of $\{P,M\}=[Q,M]=0$ (the other case is similarly provable).
		Given that $P\neq Q$ holds from $\{P,M\}=[Q,M]=0$, there exists $S\in\mcG$ such that $\{P,S\}=\{Q,S\}=0$ by Lemma~\ref{lem: shortest distance2}, where $M\neq S$ because of $[Q,M]=\{Q,S\}=0$:
		\begin{equation*}
			\begin{tikzpicture}[every node/.style={fill=white}]
				\node[name=A][draw,circle] at (0,0) {$S$};
				\node[name=B][draw,circle] at (+1.3,0.75) {$Q$};
				\node[name=C][draw,circle] at (-1.3,0.75) {$P$};
				\node[name=D][draw,circle] at (0,-1.5) {$M$};
				\draw (A) -- (B);
				\draw (A) -- (C);
				\draw[dashed] (A) -- (D);
				\draw (C) -- (D);
			\end{tikzpicture}.
		\end{equation*}
		Lemma~\ref{lem: shortest distance} states that the shortest distance between $M$ and $S$ is one or two.
		If the shortest distance is one, $R=S$ satisfies $\{P,R\}=\{Q,R\}=\{M,R\}=0$.
		If the shortest distance is two (i.e., $[M,S]=0$), $R=[S,P]=2SP\in \mcG$ satisfies $\{P,R\}=\{Q,R\}=\{M,R\}=0$.
		Therefore, there always exists $R\in\mcG$ such that $\{P,R\}=\{Q,R\}=\{M,R\}=0$.
	\end{proof}

	\begin{lem} \label{lem: PQSM}
		For $P, Q\in \mcG$ and $M\in\mcPn$ satisfying $[P,Q]=0$ and $P\conn Q\conn M$, if there exists $R\in\mcG$ such that $\{P,R\}=[Q,R]=0$ or $[P,R]=\{Q,R\}=0$, then there exists $S\in\mcG$ such that $\{P,S\}=[Q,S]=\{M,S\}=0$ or $[P,S]=\{Q,S\}=\{M,S\}=0$.
	\end{lem}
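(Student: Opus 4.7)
The plan is to proceed by a case analysis on the commutation relation between $R$ and $M$. By the evident $P \leftrightarrow Q$ symmetry in both hypothesis and conclusion, I may assume without loss of generality that $\{P,R\} = [Q,R] = 0$. The easy case is $\{R,M\} = 0$, where setting $S = R$ immediately delivers option (a) of the conclusion.

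The main work is the case $[R,M] = 0$, which requires building $S$ using the path $Q \conn M$. Since $Q \neq M$, Lemma~\ref{lem: shortest distance} supplies an element $M_1 \in \mcG$ with $\{M_1,M\} = 0$ (either $M_1 = Q$ when $\{Q,M\} = 0$, or $M_1 = T$ obtained from a length-two path $Q \to T \to M$). The key tool is the multiplicative rule for Pauli operators: writing $\epsilon_X(Y) \in \{+,-\}$ for commute/anticommute, one has $\epsilon_X(AB) = \epsilon_X(A)\epsilon_X(B)$. Applied to the candidate $S = RM_1$---which lies in $\mcG$ whenever $\{R,M_1\} = 0$ via $[iR,iM_1] \propto RM_1$---this rule shows $\{M,S\} = 0$ automatically, while the $(P,Q)$ pattern of $S$ is mixed (yielding option (a) or (b)) precisely when $M_1$ has matching relations (both commuting or both anticommuting) with $P$ and $Q$.

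The residual difficulties are the sub-cases in which $M_1$ has mixed relations with $P,Q$, so $RM_1$ has the wrong $(P,Q)$ pattern, or $[R,M_1] = 0$, so $RM_1$ is not built from a commutator. To handle these I introduce an auxiliary $U \in \mcG$ with $\{P,U\} = \{Q,U\} = 0$, supplied by Lemma~\ref{lem: shortest distance2} applied to $P \conn Q$ (where $P \neq Q$ is forced by their differing relations with $R$). Since $\{U,P\} = \{U,Q\} = 0$, the elements $U, UP, UQ, UPQ$ all lie in $\mcG$; among these I can choose a representative satisfying $\{U,M\} = 0$ as well, so that $U$ has pattern $(-,-,-)$ on $(P,Q,M)$. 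Then $S = UR$ works when $\{U,R\} = 0$, giving pattern $(+,-,-)$, which is option (b); when $[U,R] = 0$ instead, the three-factor construction $S = RUP$ lies in $\mcG$ because $\epsilon_R(UP) = \epsilon_R(U)\epsilon_R(P) = (+)(-) = -$ forces $\{R,UP\} = 0$, and its pattern is again the desired $(+,-,-)$ (the $M$-sign is controlled by the sub-case assumptions on $[P,M]$ and $[Q,M]$).

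The main obstacle throughout is membership in $\mcG$: the DLA is closed under nested commutators but not under arbitrary products, so each factorization $S = AB\cdots$ must come accompanied by a verified anticommutation of every consecutive pair. Whenever the natural anchor (such as $M_1 = Q$ or the simple product $RM_1$) commutes with its intended partner, one is forced into longer nested commutators, and the challenge is to choose both $M_1$ and the auxiliary element (among $U, UP, UQ, UPQ$) so that all required anticommutations hold simultaneously. The case analysis on the resulting finitely many sign patterns is tedious but systematic, and every configuration of $(R, M_1, U)$ against $(P, Q, M)$ falls into one of the two working constructions above.
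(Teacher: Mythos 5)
Your overall architecture is sound in places — the case split on the sign of $\{R,M\}$ versus $[R,M]$, the disposal of $\{R,M\}=0$ by $S=R$, and the use of the multiplicativity of commutation signs all mirror the paper — but the hard case $[R,M]=0$ has a genuine gap, and its root cause is the choice of bridge. You connect $Q$ to $M$, obtaining $M_1\in\mcG$ with $\{M_1,M\}=0$ but with no control over $\epsilon_R(M_1)$, which is exactly what forces all of your residual sub-cases. The paper instead observes that $R\conn M$ (via $R$--$P$--$\cdots$--$Q$--$\cdots$--$M$, all intermediate nodes in $\mcG$) and applies Lemma~\ref{lem: shortest distance} to the pair $(R,M)$ directly: when the distance is two one gets $T\in\mcG$ with $\{R,T\}=\{T,M\}=0$ \emph{simultaneously}, so $RT\in\mcG$ and $\{RT,M\}=0$ are automatic, and the four sign patterns of $T$ against $(P,Q)$ each hand you $T$ or $RT$ as the answer. (The case $R=M$, which you do not treat and for which ``distance from $R$ to $M$'' is undefined, is handled separately in the paper.)

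Concretely, two of your claims fail. First, the assertion that among $U,UP,UQ,UPQ$ one can always find a representative anticommuting with $M$ is false when $[U,M]=[P,M]=[Q,M]=0$: all four then commute with $M$. Second, the closing claim that every configuration falls to $UR$ or $RUP$ is not verified and is false for the stated candidates: take the sign pattern in which $M_1$ has relations $(-,-)$ with $(P,Q)$, $[R,M_1]=[U,R]=[U,M_1]=0$, and $[U,M]=[P,M]=[Q,M]=0$ (realizable as a symplectic Gram matrix over $\mathbb{F}_2$, hence by Pauli strings). Then $UR\notin\mcG$ via your commutator route, and $RUP$, $RUPQ$ all carry the wrong $M$-sign $+$; a correct choice here is $S\propto PRM_1$ (note $\{PR,M_1\}=0$ so it lies in $\mcG$), which is outside your declared toolkit. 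You also overlook that when $M_1$ has \emph{mixed} relations with $(P,Q)$ the element $S=M_1$ already satisfies the conclusion, so that sub-case is trivial rather than residual. The repair is simply to adopt the paper's bridge: replace $M_1$ by the connector $T$ between $R$ and $M$, after which your sign bookkeeping closes in four lines.
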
      

	\begin{proof}
		
		Consider the case of $\{P,R\}=[Q,R]=0$ (the other case is similarly provable).
		We prove the lemma in two cases, (i) $R=M$ and (ii) $R\neq M$.
		\begin{enumerate}
			\item[(i)] $R=M$ \\
			We have $P\neq Q$ from $\{P,M\}=[Q,M]=0$.
			Thus, there exists $T\in\mcG$ such that $\{P,T\}=\{Q,T\}=0$ by Lemma~\ref{lem: shortest distance2}, where $M\neq T$ because of $[M,Q]=\{T,Q\}=0$:
			\begin{equation*}
				\begin{tikzpicture}[every node/.style={fill=white}]   
					\node[name=A][draw,circle] at (-1.3,0.75) {$P$};
					\node[name=B][draw,circle] at (+1.3,0.75) {$Q$};
					\node[name=C][draw,circle] at (0,0) {$M$};
					\node[name=D][draw,circle] at (0,1.5) {$T$};
					\draw (A) -- (C); 
					\draw[dashed] (C) -- (D); 
					\draw (A) -- (D);
					\draw (B) -- (D);
				\end{tikzpicture}.
			\end{equation*}
			If $[T,M]=0$, $S=[M,[P,T]]=4MPT\in\mcG$ satisfies $[P,S]=\{Q,S\}=\{M,S\}=0$ (note that $R=M$ leads to $M\in\mcG$).
			If $\{T,M\}=0$, $S=[M,T]=2MT\in\mcG$ satisfies $[P,S]=\{Q,S\}=\{M,S\}=0$.
			Therefore, the lemma is proven for $R=M$.
			
			\item[(ii)] $R\neq M$ \\
			By Lemma~\ref{lem: shortest distance}, the shortest distance between $R$ and $M$ is one or two.
			If the shortest distance is one, $S=R\in\mcG$ satisfies $\{P,S\}=[Q,S]=\{M,S\}=0$, i.e., the lemma holds:
			\begin{equation*}
				\begin{tikzpicture}[every node/.style={fill=white}]   
					\node[name=A][draw,circle] at (-1.3,0.75) {$P$};
					\node[name=B][draw,circle] at (+1.3,0.75) {$Q$};
					\node[name=C][draw,circle] at (0,0) {$R$};
					\node[name=D][draw,circle] at (0,-1.5) {$M$};
					\draw (A) -- (C); 
					\draw (C) -- (D); 
					\draw[dashed] (A) -- (D);
					\draw[dashed] (B) -- (D);
				\end{tikzpicture}.
			\end{equation*}
			When the shortest distance is two, let $T\in\mcG$ be the node connecting $R$ and $M$.
			Then, there are four patterns regarding the (anti-)commutation relations between $P,Q$ and $T$, namely $[P,T]=0$ or $\{P,T\}=0$ and $[Q,T]=0$ or $\{Q,T\}=0$, as follows:
			\begin{equation*}
				\begin{tikzpicture}[every node/.style={fill=white}]   
					\node[name=a][draw=white] at (0,1.5) {(a)};
					\node[name=A][draw,circle] at (-1.3,0.75) {$P$};
					\node[name=B][draw,circle] at (+1.3,0.75) {$Q$};
					\node[name=C][draw,circle] at (0,0) {$R$};
					\node[name=D][draw,circle] at (0,-1.5) {$T$};
					\node[name=E][draw,circle] at (0,-3) {$M$};
					\draw (A) -- (C); 
					\draw (C) -- (D); 
					\draw (D) -- (E);
					\draw[dashed] (A) -- (E);
					\draw[dashed] (B) -- (E);
				\end{tikzpicture}
				\quad
				\begin{tikzpicture}[every node/.style={fill=white}]   
					\node[name=a][draw=white] at (0,1.5) {(b)};
					\node[name=A][draw,circle] at (-1.3,0.75) {$P$};
					\node[name=B][draw,circle] at (+1.3,0.75) {$Q$};
					\node[name=C][draw,circle] at (0,0) {$R$};
					\node[name=D][draw,circle] at (0,-1.5) {$T$};
					\node[name=E][draw,circle] at (0,-3) {$M$};
					\draw (A) -- (C); 
					\draw (C) -- (D); 
					\draw (D) -- (E);
					\draw (A) -- (D);
					\draw[dashed] (A) -- (E);
					\draw[dashed] (B) -- (E);
				\end{tikzpicture}
			\end{equation*}
			\begin{equation*}
				\begin{tikzpicture}[every node/.style={fill=white}]   
					\node[name=a][draw=white] at (0,1.5) {(c)};
					\node[name=A][draw,circle] at (-1.3,0.75) {$P$};
					\node[name=B][draw,circle] at (+1.3,0.75) {$Q$};
					\node[name=C][draw,circle] at (0,0) {$R$};
					\node[name=D][draw,circle] at (0,-1.5) {$T$};
					\node[name=E][draw,circle] at (0,-3) {$M$};
					\draw (A) -- (C); 
					\draw (C) -- (D); 
					\draw (D) -- (E);
					\draw (B) -- (D);
					\draw[dashed] (A) -- (E);
					\draw[dashed] (B) -- (E);
				\end{tikzpicture}
				\quad
				\begin{tikzpicture}[every node/.style={fill=white}]   
					\node[name=a][draw=white] at (0,1.5) {(d)};
					\node[name=A][draw,circle] at (-1.3,0.75) {$P$};
					\node[name=B][draw,circle] at (+1.3,0.75) {$Q$};
					\node[name=C][draw,circle] at (0,0) {$R$};
					\node[name=D][draw,circle] at (0,-1.5) {$T$};
					\node[name=E][draw,circle] at (0,-3) {$M$};
					\draw (A) -- (C); 
					\draw (C) -- (D); 
					\draw (D) -- (E);
					\draw (A) -- (D);
					\draw (B) -- (D);
					\draw[dashed] (A) -- (E);
					\draw[dashed] (B) -- (E);
				\end{tikzpicture}.
				\label{apeq: 4patterns}
			\end{equation*} 
			We can concretely construct $S\in\mcG$ satisfying the lemma for these four patterns as follows: 
			\begin{enumerate}
				\item[(a)] $S=[R,T]=2RT\in\mcG$ satisfies $\{P,S\}=[Q,S]=\{M,S\}=0$.
				\item[(b)] $S=T\in\mcG$ satisfies $\{P,S\}=[Q,S]=\{M,S\}=0$.
				\item[(c)] $S=T\in\mcG$ satisfies $[P,S]=\{Q,S\}=\{M,S\}=0$.
				\item[(d)] $S=[R,T]=2RT\in\mcG$ satisfies $[P,S]=\{Q,S\}=\{M,S\}=0$.
			\end{enumerate}
			Therefore, the lemma is proven for $R\neq M$.
		\end{enumerate}
	\end{proof}

	\subsection{Proof of Lemmas~\ref{thm_ap: not connected}--\ref{thm_ap: Gj=Gk}}

	We are ready to prove Lemmas~\ref{thm_ap: not connected}--\ref{thm_ap: Gj=Gk} on the relationship between the simultaneous measurability of gradient components and the DLA structure.
	In what follows, we use the form of $\grad_j(\bt)=-i[\tilde{G}_j(\bt),\tilde{O}(\bt)]$ as the gradient operator [see Eq.~\eqref{apeq: gradient operator}].
	
	As discussed in Sec.~\ref{secap: problem formulation}, two distinct gradient components $\partial_j C(\bt)$ and $\partial_k C(\bt)$ can be measured simultaneously if $[\grad_j(\bt), \grad_k(\bt)]=0$ for all $\bt$.
	In the DLA graph, the gradient operator $\grad_j(\bt)$ corresponds to a node of the graph $G_j$.
	Below, we show that whether $[\grad_j(\bt), \grad_k(\bt)]=0$ is determined by the structural relationship between $G_j, G_k$, and $O$ on the DLA graph.

	\setcounter{lem}{0}
	\begin{lem} \label{thm_ap: not connected}
		If $G_j,G_k\in\mcGc$ are not $\mfg$-connected, $\partial_j C$ and $\partial_k C$ can be simultaneously measured:
		\begin{equation*}
			\begin{tikzpicture}
				\node[name=conn] at (0,0.3) {$\mfg$};
				\node[name=conn] at (0,0) {$\times$};
				\node[name=A][draw,circle,fill=white] at (-1,0) {$G_j$};
				\node[name=B][draw,circle,fill=white] at (+1,0) {$G_k$};
				\draw[{Latex[length=2mm]}-{Latex[length=2mm]}] (A) -- (B); 
			\end{tikzpicture}
			\quad
			\raise 2ex\hbox{$\Rightarrow$}
			\quad
			\raise 2ex\hbox{$[\grad_j,\grad_k]= 0 \quad \forall \bt$.}
		\end{equation*}
	\end{lem}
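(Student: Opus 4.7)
The plan is to exploit the fact that $G_j$ and $G_k$ lying in different $\mfg$-components forces the DLA to split into two mutually commuting pieces, so that the gradient operators separate onto complementary tensor factors and $\tilde{O}^2=I$ drives a direct cancellation.

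First I would produce the structural decomposition. Let $\mcG{}_1$ be the $\mfg$-connected component of $G_j$ in the DLA graph and $\mcG{}_2 = \mcG \smallsetminus \mcG{}_1$ its complement; by assumption $G_k \in \mcG{}_2$. Because no anti-commutation edge crosses a component boundary, $[\mcG{}_1, \mcG{}_2] = 0$ and hence $[\mfg_1, \mfg_2] = 0$ for $\mfg_a = \text{span}(\mcG{}_a)$. The whole circuit therefore factors as $U(\bt) = U_1(\bt_1) U_2(\bt_2)$ with $[U_1, U_2] = 0$, where $U_a$ collects all rotations whose generators lie in $\mcG{}_a$. Reordering $U_{j+} = U_{j+,1} U_{j+,2}$ and using $[U_{j+,2}, G_j] = 0$ yields $\tilde{G}_j = U_{j+,1}^\dag G_j U_{j+,1} \in \mfg_1$; symmetrically $\tilde{G}_k \in \mfg_2$, so in particular $[\tilde{G}_j, \tilde{G}_k] = 0$.

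Next I would perform a Clifford change of basis placing $\mfg_1$ on a set of qubits $S_1$ and $\mfg_2$ on a disjoint set $S_2$; any central part $\mfg_1 \cap \mfg_2$ lies in the radical of both symplectic subspaces and can be moved onto a separate shared register. Writing $O = O_{S_1} \otimes O_{S_2} \otimes O_R$ and using that $U_a$ acts only on $S_a$, we obtain $\tilde{O} = \tilde{O}_{S_1} \otimes \tilde{O}_{S_2} \otimes O_R$ with $\tilde{O}_{S_a} = U_a^\dag O_{S_a} U_a$. Substituting,
\begin{align*}
    \grad_j &= -i\,[\tilde{G}_j, \tilde{O}_{S_1}] \otimes \tilde{O}_{S_2} \otimes O_R, \\
    \grad_k &= -i\,\tilde{O}_{S_1} \otimes [\tilde{G}_k, \tilde{O}_{S_2}] \otimes O_R.
\end{align*}
The key identity is $\tilde{O}_{S_a}^2 = U_a^\dag O_{S_a}^2 U_a = I$, which gives $\{[\tilde{G}_j, \tilde{O}_{S_1}], \tilde{O}_{S_1}\} = [\tilde{G}_j, \tilde{O}_{S_1}^2] = 0$ and an analogous relation on $S_2$. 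Expanding $\grad_j \grad_k$ and $\grad_k \grad_j$ and applying these two anti-commutations, together with commutation across disjoint tensor factors and $O_R^2 = I$, makes the two products coincide, so $[\grad_j, \grad_k] = 0$ as required.

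The step I anticipate to be the main obstacle is the Clifford reduction when $\mfg_1 \cap \mfg_2$ is nontrivial: the central Paulis commute with all of $\mfg$ and so feel no adjoint action from the circuit, but checking cleanly that they do not obstruct the tensor-product factorization needs some care. A purely algebraic alternative is to reduce the claim to $[\tilde{G}_j^+, \tilde{G}_k^+] = 0$, where $X^+ = \tfrac{1}{2}(X + \tilde{O}X\tilde{O})$ is the $\tilde{O}$-even part, via the identity
\begin{equation*}
    [[\tilde{G}_j, \tilde{O}], [\tilde{G}_k, \tilde{O}]] = 4\,[\tilde{G}_j^+, \tilde{G}_k^+],
\end{equation*}
which follows from $\tilde{O}^2 = I$ and $[\tilde{G}_j, \tilde{G}_k] = 0$; the even parts inherit the $\mfg_1$-versus-$\mfg_2$ separation and commute.
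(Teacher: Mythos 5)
Your opening matches the paper's proof exactly: the paper also splits the DLA graph into the component $\mcG{}_1$ of $G_j$ and its complement $\mcG{}_2$, uses $[\mcG{}_1,\mcG{}_2]=0$ to factor $U=U_1U_2$ and $U_{j+}=U_{j+,1}U_{j+,2}$, and concludes $\tilde G_j\in\mfg_1$, $\tilde G_k\in\mfg_2$, hence $[\tilde G_j,\tilde G_k]=0$. After that, your primary route breaks. The obstacle is not $\mfg_1\cap\mfg_2$ (which is automatically $\{0\}$, the two components having disjoint Pauli bases) but the radical of the $\mathbb{F}_2$-span of a single component: a \emph{product} of generators inside one connected component can commute with the entire component (these are precisely the stabilizers $\mcS$ the paper exploits in Sec.~V), and in your Clifford normal form such central elements land on the shared register $R$ while the generators become operators of the form $P_{S_1}\otimes Z^{a}_R$ acting jointly on $S_1$ and $R$. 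Conjugating $O$ by $e^{i\theta (P\otimes Z^a_R)}$ then yields $\cos(2\theta)\,O_{S_1}\otimes O_{S_2}\otimes O_R$ plus a term whose $R$-factor is $Z^a_RO_R\neq O_R$, so $\tilde O$ is a sum of tensor products with \emph{different} $R$-factors and does not factor as $\tilde O_{S_1}\otimes\tilde O_{S_2}\otimes O_R$. The cancellation you build on that factorization therefore fails in general.

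Your algebraic fallback is the right fix and is essentially the paper's computation: the identity $[[\tilde G_j,\tilde O],[\tilde G_k,\tilde O]]=4[\tilde G_j^+,\tilde G_k^+]$, valid given $\tilde O^2=I$ and $[\tilde G_j,\tilde G_k]=0$, is correct and equivalent to the paper's expansion of $[\grad_j,\grad_k]$ into $-[\tilde G_j,\tilde O\tilde G_k\tilde O]-[\tilde O\tilde G_j\tilde O,\tilde G_k]+[\tilde G_j,\tilde G_k]+\tilde O[\tilde G_j,\tilde G_k]\tilde O$. But the phrase ``the even parts inherit the $\mfg_1$-versus-$\mfg_2$ separation'' is exactly the nontrivial claim, and you assert it without proof. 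What must be shown is $\tilde O\tilde G_k\tilde O\in\mfg_2$: commute the $\mfg_2$-element through $U_1\in e^{\mfg_1}$ (trivial, by $[\mfg_1,\mfg_2]=0$), observe that conjugation by the Pauli $O$ sends each Pauli basis element $g_i$ of $\mfg_2$ to $\pm g_i$ and hence preserves the span $\mfg_2$, and then conjugate back by $U_2\in e^{\mfg_2}$, which keeps the result in $\mfg_2$. This two-line argument is the heart of the lemma; without it (or an equivalent), your proof is incomplete.
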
    

	\begin{proof}
		When $G_j$ and $G_k$ are not $\mfg$-connected, the DLA graph $\mcG$ is separable into two subgraphs $\mcG{}_1$ and $\mcG{}_2$ (i.e., $[\mcG{}_1,\mcG{}_2]=0$).
		This is because if the DLA graph is not separable, then any two nodes are $\mfg$-connected.
		Therefore, we can decompose the DLA $\mfg$ into two subalgebras $\mfg_1=\text{span}(\mcG{}_1)$ and $\mfg_2=\text{span}(\mcG{}_2)$ that commute with each other:
		\begin{align}
			\mfg = \mfg_1 \oplus \mfg_2, \quad [\mfg_1,\mfg_2]=0, \label{apeq: g1g2}
		\end{align}
		where $G_j\in\mfg_1$ and $G_k\in\mfg_2$.
		Then, the unitaries of the quantum circuit are decomposed as $U=VW$, $U_{j+}=V_{j+}W_{j+}$, and $U_{k+}=V_{k+}W_{k+}$ with $V,V_{j+},V_{k+}\in e^{\mfg_1}$ and $W,W_{j+},W_{k+}\in e^{\mfg_2}$ ($U_{j+}$ and $U_{k+}$ are the unitary circuits before the $j$th and $k$th gates, respectively).

		Now, we prove $[\grad_j,\grad_k]= 0$ using these decompositions.
		A straightforward calculation shows that the commutator of gradient operators $\grad_j=-i[\tilde{G}_j,\tilde{O}]$ and $\grad_k=-i[\tilde{G}_k,\tilde{O}]$ is written as
		\begin{align}
			&[\grad_j,\grad_k] \notag \\
			&= -[\tilde{G}_j,\tilde{O}\tilde{G}_k\tilde{O}] - [\tilde{O}\tilde{G}_j\tilde{O},\tilde{G}_k] + [\tilde{G}_j,\tilde{G}_k] + \tilde{O}[\tilde{G}_j,\tilde{G}_k]\tilde{O}. \label{apeq: gradj gradk} 
		\end{align}
		Below, we show that each term in the above commutator vanishes.
		Let us first prove $[\tilde{G}_j,\tilde{G}_k]=0$.
		From the decomposition of $U_{j+}=V_{j+}W_{j+}$, we have
		\begin{align}
			\tilde{G}_j 
			&= U_{j+}^\dag G_j U_{j+} \notag \\
			&= W_{j+}^\dag V_{j+}^\dag G_j V_{j+}W_{j+} \notag \\
			&= V_{j+}^\dag G_j V_{j+} \in \mfg_1, \label{apeq: Pj_in_g1}
		\end{align}
		where we have used $[V_{j+},W_{j+}]=[G_j,W_{j+}]=0$.
		In the same manner, we have 
		\begin{align}
			\tilde{G}_k 
			&= W_{k+}^\dag G_k W_{k+} \in \mfg_2. \label{apeq: Pk_in_g2}
		\end{align}
		Thereby, $[\tilde{G}_j,\tilde{G}_k]=0$ is proven from $[\mfg_1,\mfg_2]=0$, indicating that the third and fourth terms vanish in Eq.~\eqref{apeq: gradj gradk}.
		Next, we prove $[\tilde{G}_j,\tilde{O}\tilde{G}_k\tilde{O}]=0$.
		Since $\tilde{O}=U^\dag O U$ and $\tilde{G}_k= W_{k+}^\dag G_k W_{k+}$ [see Eq.~\eqref{apeq: Pk_in_g2}], $\tilde{O}\tilde{G}_k\tilde{O}$ is written as
		\begin{align}
			\tilde{O}\tilde{G}_k\tilde{O}
			&= (W^\dag V^\dag O VW) (W_{k+}^\dag G_k W_{k+}) (W^\dag V^\dag O VW) \notag \\
			&= W^\dag V^\dag O V A V^\dag O VW \notag \\
			&= W^\dag V^\dag O A O VW
		\end{align}
		where we have defined $A=W W_{k+}^\dag G_k W_{k+} W^\dag\in\mfg_2$ and used $[V,A]=0$.
		Here, for any $w = \sum_{i} c_i g_i \in \mfg_2$, $OwO = \sum_{i} (-1)^{\sigma_i} c_i g_i$ is also included in $\mfg_2$, where $g_i$ is the Pauli basis of $\mfg_2$, $c_i$ is an expansion coefficient, and $\sigma_i$ is defined as $Og_i O=(-1)^{\sigma_i}g_i$. 
		Therefore, $A' = O A O$ is included in $\mfg_2$ because of $A\in\mfg_2$, leading to
		\begin{align}
			\tilde{O}\tilde{G}_k\tilde{O}
			&=W^\dag V^\dag A' VW = W^\dag A' W \in \mfg_2, \label{apeq: OPkO_in_g2}
		\end{align}
		where we have used $[A',V]=0$.
		Thus, $[\tilde{G}_j,\tilde{O}\tilde{G}_k\tilde{O}]=0$ holds from Eqs.\eqref{apeq: Pj_in_g1}, \eqref{apeq: OPkO_in_g2}, and $[\mfg_1,\mfg_2]=0$. 
		Similarly, $[\tilde{O}\tilde{G}_j\tilde{O},\tilde{G}_k]=0$ can also be proven.
		Therefore, the first and second terms vanish in Eq.~\eqref{apeq: gradj gradk}.
		These results prove that $[\grad_j, \grad_k]=0$ holds for any $\bt$, i.e., $\partial_j C$ and $\partial_k C$ are simultaneously measurable.
	\end{proof}


	Lemma~\ref{thm_ap: not connected} states that $\partial_j C$ and $\partial_k C$ can be simultaneously measured if $G_j$ and $G_k$ are not $\mfg$-connected.
	Thus, we focus on the case that $G_j, G_k\in\mcGc$ are $\mfg$-connected below.
	To simplify the proof, we divide the quantum circuit into two parts, $U=U_\text{fin} U_\text{ini}$, where we have defined $U_\text{ini} = \prod_{j=1}^{\Lbulk} e^{i\theta_jG_j}$ and $U_\text{fin} = \prod_{j=\Lbulk+1}^{L} e^{i\theta_jG_j}$.
	The depth of the final part, $\LB=L-\Lbulk$, is set as a sufficiently large but constant value such that $U_\text{fin}$ can express $e^{i\phi_1 Q_1}e^{i\phi_2 Q_2}$ for any $Q_1, Q_2\in \mcG$ and $\phi_1,\phi_2\in\mathbb{R}$.
	There exists such $\LB$ due to Condition~\ref{cond: 3}.
	We emphasize that whether the gradient components for $U_\text{fin}$ can be simultaneously measured does not affect the gradient measurement efficiency for the full circuit, $\ef$, because the contribution from the constant depth circuit to $\ef$ is negligible in the deep circuit limit.
	Therefore, we will use $U_\text{fin}$ as a buffer circuit for the proof and investigate whether $\partial_j C$ and $\partial_k C$ are simultaneously measurable for the parameters of $U_\text{ini}$ (i.e., $j,k\leq \Lbulk$).
	Then,  the gradient operators are given by
	\begin{align}
		\begin{split}
			&\grad_j = -i[\tilde{G}_j,\tilde{O}] = -i\left[\left(U^\dag_\text{ini} G_j U_\text{ini}\right), \left(U^\dag_\text{ini}U_\text{fin}^\dag O U_\text{fin}U_\text{ini}\right)\right],  \\
			&\grad_k = -i[\tilde{G}_k,\tilde{O}] = -i\left[\left(U^\dag_\text{ini} G_k U_\text{ini}\right), \left(U^\dag_\text{ini}U_\text{fin}^\dag O U_\text{fin}U_\text{ini}\right)\right]. 
		\end{split} \label{apeq: grad_UiniI}
	\end{align}

	In our definition, $\partial_j C$ and $\partial_k C$ can be measured simultaneously if $[\grad_j(\bt),\grad_k(\bt)]= 0$ for all $\bt$.
	Therefore, when proving that $\partial_j C$ and $\partial_k C$ cannot be measured simultaneously, it suffices to show that $[\grad_j(\bt),\grad_k(\bt)]\neq 0$ for some $\bt$.
	In the following, we find such $\bt$ to prove the lemmas.

	\begin{lem} \label{thm_ap: PjPk}
		For $j,k \leq \Lbulk$, if $G_j,G_k\in\mcGc$ anti-commute, $\partial_j C$ and $\partial_k C$ cannot be simultaneously measured:
		\begin{equation*}
			\begin{tikzpicture}[every node/.style={fill=white}]   
				\node[name=A][draw,circle] at (-1,0) {$G_j$};
				\node[name=B][draw,circle] at (+1,0) {$G_k$};
				\draw (A) -- (B);  
			\end{tikzpicture}
			\quad
			\raise 2ex\hbox{$\Rightarrow$}
			\quad
			\raise 2ex\hbox{$\exists \bt \,\,\, \text{s.t.} \,\,\, [\grad_j,\grad_k]\neq 0$.}
		\end{equation*}
	\end{lem}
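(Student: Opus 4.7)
The plan is to exhibit specific parameters $\bt$ at which $[\grad_j(\bt),\grad_k(\bt)]$ is nonzero. I will first set $\theta_\ell = 0$ for every $\ell\leq \Lbulk$, which collapses $U_{j+}$ and $U_{k+}$ to the identity and yields $\tilde{G}_j=G_j$, $\tilde{G}_k=G_k$, and $\tilde{O}=U_\text{fin}^\dag O U_\text{fin}$. Condition~\ref{cond: 3}, with the second rotation trivialized, lets me freely take $U_\text{fin}=e^{i\phi Q}$ for any $Q\in\mcG$ and any $\phi\in\mathbb{R}$, so the task reduces to producing $Q\in\mcG$ and $\phi\in\{0,\pi/4\}$ for which $\tilde{O}$ is a Pauli (up to phase) that anti-commutes with both $G_j$ and $G_k$.

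Once such a $\tilde{O}=cP$ is in hand, the calculation is routine: $\grad_j=-2ic\,G_jP$ and $\grad_k=-2ic\,G_kP$, and
\begin{align*}
[\grad_j,\grad_k] = -4c^2[G_jP,G_kP] = 8c^2\,G_jG_k \neq 0,
\end{align*}
using $PG_j=-G_jP$, $PG_k=-G_kP$, $P^2=I$, and $\{G_j,G_k\}=0$, together with the fact that $G_j\neq G_k$ makes $G_jG_k$ a nonzero Pauli. All the work therefore lies in constructing $Q$.

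I will construct $Q$ by a case analysis on the commutation of $O$ with $G_j$ and $G_k$. If $O$ anti-commutes with both, I take $\phi=0$. If $O$ anti-commutes with exactly one of them, say $G_j$, I take $Q=G_j$ and $\phi=\pi/4$; the resulting $\tilde{O}\propto OG_j$ picks up the missing anti-commutation with $G_k$ from $\{G_j,G_k\}=0$. The hard part will be the doubly commuting case $[O,G_j]=[O,G_k]=0$, where no single generator of $\mcG$ is obviously forced to anti-commute simultaneously with $O$, $G_j$, and $G_k$. For this case I plan to invoke Condition~\ref{cond: 2} together with Lemma~\ref{lem: shortest distance}: since $O$ is $\mfg$-connected to $G_j$ at distance two, there exists $T\in\mcG$ with $\{O,T\}=\{G_j,T\}=0$. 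If $T$ also anti-commutes with $G_k$ then $Q=T$ works; otherwise $T$ commutes with $G_k$ and I will take $Q=TG_j$, which lies in $\mcG$ as the nested commutator of $T$ and $G_j$ (valid since $\{T,G_j\}=0$). A bookkeeping check of the commutation bits of $TG_j$ against $O$, $G_j$, and $G_k$ will confirm that the rotation by $\pi/4$ delivers a $\tilde{O}$ with the required anti-commutation profile.
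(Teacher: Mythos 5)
Your proposal is correct and follows essentially the same route as the paper: set the initial parameters to zero, use Condition~\ref{cond: 3} to append a single $\pi/4$ Pauli rotation from $\mcG$, and choose that Pauli so that the conjugated observable anti-commutes with both $G_j$ and $G_k$, whence $[\grad_j,\grad_k]\propto G_jG_k\neq 0$. Your three-way case split (both, exactly one, or neither of $G_j,G_k$ anti-commuting with $O$) and your construction in the hard doubly-commuting case — the distance-two neighbor $T$ from Lemma~\ref{lem: shortest distance}, falling back to $TG_j\propto[T,G_j]\in\mcG$ when $[T,G_k]=0$ — reproduce exactly the paper's Lemma~\ref{lem: graph-anticommute} in inlined form.
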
    

	\begin{proof}

		We prove this lemma in two cases: (i) $\{G_j,O\}=\{G_k,O\}=0$ and (ii) otherwise.
		\begin{enumerate}
			\item[(i)]
			In this case, $[\grad_j(\bt),\grad_k(\bt)]\neq 0$ holds for $\bt=0$.
			In fact, when $\bt=0$ (i.e., $U_\text{ini}=U_\text{fin}=I$), we have $\tilde{G}_j=G_j, \tilde{G}_k=G_k, \tilde{O}=O$, and thus
			\begin{align}
				&\grad_j = -i[\tilde{G}_j, \tilde{O}] = -2i G_jO, \\
				&\grad_k = -i[\tilde{G}_k, \tilde{O}] = -2i G_kO.
			\end{align}
			Therefore, we obtain $[\grad_j,\grad_k]=4[G_j,G_k] \neq 0$ from $\{G_j,O\}=\{G_k,O\}=0$, showing that $\partial_j C$ and $\partial_k C$ cannot be simultaneously measured.
			
			\item[(ii)] 
			By Lemma~\ref{lem: graph-anticommute}, there exists $R\in\mcG$ satisfying one of the following commutation relations:
			\begin{equation*}
				\begin{tikzpicture}
					\node[name=A][draw,circle,fill=white] at (-1.3,0.75) {$G_j$};
					\node[name=B][draw,circle,fill=white] at (+1.3,0.75) {$G_k$};
					\node[name=C][draw,circle,fill=white] at (0,0) {$R$};
					\node[name=D][draw,circle,fill=white] at (0,-1.5) {$O$};
					\draw (A) -- (B);
					\draw (A) -- (D);
					\draw (B) -- (C);
					\draw (C) -- (D);
				\end{tikzpicture}
				\quad
				\begin{tikzpicture}
					\node[name=A][draw,circle,fill=white] at (-1.3,0.75) {$G_j$};
					\node[name=B][draw,circle,fill=white] at (+1.3,0.75) {$G_k$};
					\node[name=C][draw,circle,fill=white] at (0,0) {$R$};
					\node[name=D][draw,circle,fill=white] at (0,-1.5) {$O$};
					\draw (A) -- (B);
					\draw (A) -- (C);
					\draw (B) -- (D);
					\draw (C) -- (D);
				\end{tikzpicture}
			\end{equation*}
			\begin{equation*}
				\begin{tikzpicture}
					\node[name=A][draw,circle,fill=white] at (-1.3,0.75) {$G_j$};
					\node[name=B][draw,circle,fill=white] at (+1.3,0.75) {$G_k$};
					\node[name=C][draw,circle,fill=white] at (0,0) {$R$};
					\node[name=D][draw,circle,fill=white] at (0,-1.5) {$O$};
					\draw (A) -- (B);
					\draw (A) -- (C);
					\draw (B) -- (C);
					\draw (C) -- (D);
				\end{tikzpicture}.
			\end{equation*}
			By setting $\bt$ such that $U_\text{ini}=I$ and $U_\text{fin}=e^{-i\pi R/4}$ in Eq.~\eqref{apeq: grad_UiniI}, we have $\tilde{G}_j=G_j, \tilde{G}_k=G_k, \tilde{O}=e^{i\pi R/4}Oe^{-i\pi R/4}=iRO$, and thus
			\begin{align}
				&\grad_j = -i[\tilde{G}_j, \tilde{O}] = 2G_jRO, \\
				&\grad_k = -i[\tilde{G}_k, \tilde{O}] = 2G_kRO.
			\end{align}
			Therefore, we obtain $[\grad_j,\grad_k]=4[G_k,G_j] \neq 0$ from $\{G_j,iRO\}=\{G_k,iRO\}=0$, showing that $\partial_j C$ and $\partial_k C$ cannot be simultaneously measured.
		\end{enumerate}
		These results prove that $\partial_j C$ and $\partial_k C$ cannot be simultaneously measured for anti-commuting $G_j$ and $G_k$.
	\end{proof}

	\begin{lem} \label{thm_ap: [P,O]}
		Consider $\mfg$-connected $G_j,G_k\in\mcGc$ for $j,k \leq \Lbulk$.
		If $\{G_j,O\}=[G_k,O]=0$ or $[G_j,O]=\{G_k,O\}=0$, $\partial_j C$ and $\partial_k C$ cannot be simultaneously measured:
		\begin{equation*}
			\begin{tikzpicture}
				\node[name=conn] at (0,0.95) {$\mfg$};
				\node[name=A][draw,circle,fill=white] at (-1.3,0.75) {$G_j$};
				\node[name=B][draw,circle,fill=white] at (+1.3,0.75) {$G_k$};
				\node[name=C][draw,circle,fill=white] at (0,0) {$O$};
				\draw (A) -- (C); 
				\draw[{Latex[length=2mm]}-{Latex[length=2mm]}] (A) -- (B); 
			\end{tikzpicture}
			\quad
			\raise 4ex\hbox{or}
			\quad
			\begin{tikzpicture}
				\node[name=conn] at (0,0.95) {$\mfg$};
				\node[name=A][draw,circle,fill=white] at (-1.3,0.75) {$G_j$};
				\node[name=B][draw,circle,fill=white] at (+1.3,0.75) {$G_k$};
				\node[name=C][draw,circle,fill=white] at (0,0) {$O$};
				\draw (B) -- (C);  
				\draw[{Latex[length=2mm]}-{Latex[length=2mm]}] (A) -- (B); 
			\end{tikzpicture}
		\end{equation*}
		\begin{equation*}
			\raise 4ex\hbox{$\Rightarrow \quad \exists \bt \,\,\, \text{s.t.} \,\,\, [\grad_j,\grad_k]\neq 0$.}
		\end{equation*}
	\end{lem}    

	\begin{proof}

		It suffices to consider only the case of $[G_j,G_k]=0$ by Lemma~\ref{thm_ap: PjPk}.
		
		Here, we consider the case of $\{G_j,O\}=[G_k,O]=0$ (the other case is similarly provable).
		Then, there exists $R\in\mcG$ such that $\{G_j,R\}=\{G_k,R\}=\{O,R\}=0$ by Lemma~\ref{lem: graph-POQO}:
		\begin{equation*}
			\begin{tikzpicture}[every node/.style={fill=white}]
				\node[name=A][draw,circle] at (0,0) {$R$};
				\node[name=B][draw,circle] at (+1.3,0.75) {$G_k$};
				\node[name=C][draw,circle] at (-1.3,0.75) {$G_j$};
				\node[name=D][draw,circle] at (0,-1.5) {$O$};
				\draw (A) -- (B);
				\draw (A) -- (C);
				\draw (A) -- (D);
				\draw (C) -- (D);
				\node[name=conn] at (0,0.95) {$\mfg$};
				\draw[dotted, {Latex[length=2mm]}-{Latex[length=2mm]}] (B) -- (C); 
			\end{tikzpicture}.
		\end{equation*}
		By setting $\bt$ such that $U_\text{ini}=I$ and $U_\text{fin}=e^{i\phi R}$, we have $\tilde{G}_j=G_j$, $\tilde{G}_k=G_k$, $\tilde{O}=e^{-i\phi R}Oe^{i\phi R}=\cos(2\phi)O + i\sin(2\phi)OR$, and thus
		\begin{align}
			\grad_j 
			&= -i[\tilde{G}_j, \tilde{O}] \notag \\
			&= -i \cos(2\phi)[G_j,O] + \sin(2\phi)[G_j,OR] \notag \\
			&= - 2i \cos(2\phi) G_j O  \notag \\
			\grad_k 
			&= -i[\tilde{G}_k, \tilde{O}] \notag \\
			&= -i \cos(2\phi)[G_k,O] + \sin(2\phi)[G_k,OR] \notag \\
			&= 2 \sin(2\phi) G_k O R. \notag 
		\end{align}
		Therefore, we have
		\begin{align}
			&[\grad_j,\grad_k] = -4i\cos(2\phi) \sin(2\phi) [G_j O, G_kOR].
		\end{align}
		From the commutation relations between $G_j,G_k,R$ and $O$, we have $[G_j O, G_kOR]\neq 0$, showing that there exists $\bt$ such that $[\grad_j,\grad_k]\neq 0$.
		Therefore, $\partial_j C$ and $\partial_k C$ cannot be simultaneously measured if $\{G_j,O\}=[G_k,O]=0$ or $[G_j,O]=\{G_k,O\}=0$.
	\end{proof}

	\begin{lem} \label{thm_ap: PjR&PkR}
		Consider $\mfg$-connected $G_j, G_k\in\mcGc$ for $j,k \leq \Lbulk$.
		If there exists $R\in\mcG$ such that $\{G_j,R\}=[G_k,R]=0$ or $[G_j,R]=\{G_k,R\}=0$, $\partial_j C$ and $\partial_k C$ cannot be simultaneously measured:
		\begin{equation*}
			\begin{tikzpicture}
				\node[name=conn] at (0,0.95) {$\mfg$};
				\node[name=A][draw,circle,fill=white] at (-1.3,0.75) {$G_j$};
				\node[name=B][draw,circle,fill=white] at (+1.3,0.75) {$G_k$};
				\node[name=C][draw,circle,fill=white] at (0,0) {$R$};
				\draw (A) -- (C); 
				\draw[{Latex[length=2mm]}-{Latex[length=2mm]}] (A) -- (B); 
			\end{tikzpicture}
			\quad
			\raise 4ex\hbox{or}
			\quad
			\begin{tikzpicture}
				\node[name=conn] at (0,0.95) {$\mfg$};
				\node[name=A][draw,circle,fill=white] at (-1.3,0.75) {$G_j$};
				\node[name=B][draw,circle,fill=white] at (+1.3,0.75) {$G_k$};
				\node[name=C][draw,circle,fill=white] at (0,0) {$R$};
				\draw (B) -- (C);  
				\draw[{Latex[length=2mm]}-{Latex[length=2mm]}] (A) -- (B); 
			\end{tikzpicture}
		\end{equation*}
		\begin{equation*}
			\raise 4ex\hbox{$\Rightarrow \quad \exists \bt \,\,\, \text{s.t.} \,\,\, [\grad_j,\grad_k]\neq 0$.}
		\end{equation*}
	\end{lem}    

	\begin{proof}
		
		It suffices to consider only the case of $[G_j,G_k]=0$ and $[G_j,O]=[G_k,O]=0$ or $\{G_j,O\}=\{G_k,O\}=0$ by Lemmas~\ref{thm_ap: PjPk} and \ref{thm_ap: [P,O]}.

		When $\{G_j,R\}=[G_k,R]=0$ or $[G_j,R]=\{G_k,R\}=0$, there necessarily exists $S\in\mcG$ satisfying $\{G_j,S\}=[G_k,S]=\{O,S\}=0$ or $[G_j,S]=\{G_k,S\}=\{O,S\}=0$ by Lemma~\ref{lem: PQSM}:
		\begin{equation*}
			\begin{tikzpicture}[every node/.style={fill=white}]
				\node[name=A][draw,circle] at (-1.3,0.75) {$G_j$};
				\node[name=B][draw,circle] at (+1.3,0.75) {$G_k$};
				\node[name=C][draw,circle] at (0,0) {$S$};
				\node[name=D][draw,circle] at (0,-1.5) {$O$};
				\draw (A) -- (C);
				\draw (C) -- (D);
				\draw[dashed] (A) -- (D);
				\draw[dashed] (B) -- (D);
				\node[name=conn] at (0,0.95) {$\mfg$};
				\draw[dotted, {Latex[length=2mm]}-{Latex[length=2mm]}] (A) -- (B); 
			\end{tikzpicture}
			\quad
			\raise 8ex\hbox{or}
			\quad
			\begin{tikzpicture}[every node/.style={fill=white}]
				\node[name=A][draw,circle] at (-1.3,0.75) {$G_j$};
				\node[name=B][draw,circle] at (+1.3,0.75) {$G_k$};
				\node[name=C][draw,circle] at (0,0) {$S$};
				\node[name=D][draw,circle] at (0,-1.5) {$O$};
				\draw (B) -- (C);
				\draw (C) -- (D);
				\draw[dashed] (A) -- (D);
				\draw[dashed] (B) -- (D);
				\node[name=conn] at (0,0.95) {$\mfg$};
				\draw[dotted, {Latex[length=2mm]}-{Latex[length=2mm]}] (A) -- (B); 
			\end{tikzpicture}.
		\end{equation*}
		Assume $\{G_j,S\}=[G_k,S]=\{O,S\}=0$ (the other case is similarly provable).
		Then, we consider $O'=iSO\in\mcPn$, which satisfies $\{G_j,O'\}=[G_k,O']=0$ or $[G_j,O']=\{G_k,O'\}=0$ ($\because$ $[G_j,O]=[G_k,O]=0$ or $\{G_j,O\}=\{G_k,O\}=0$):
		\begin{equation*}
			\begin{tikzpicture}
				\node[name=conn] at (0,0.95) {$\mfg$};
				\node[name=A][draw,circle,fill=white] at (-1.3,0.75) {$G_j$};
				\node[name=B][draw,circle,fill=white] at (+1.3,0.75) {$G_k$};
				\node[name=C][draw,circle,fill=white] at (0,0) {$O'$};
				\draw (A) -- (C); 
				\draw[dotted, {Latex[length=2mm]}-{Latex[length=2mm]}] (A) -- (B); 
			\end{tikzpicture}
			\quad
			\raise 4ex\hbox{or}
			\quad
			\begin{tikzpicture}
				\node[name=conn] at (0,0.95) {$\mfg$};
				\node[name=A][draw,circle,fill=white] at (-1.3,0.75) {$G_j$};
				\node[name=B][draw,circle,fill=white] at (+1.3,0.75) {$G_k$};
				\node[name=C][draw,circle,fill=white] at (0,0) {$O'$};
				\draw (B) -- (C);  
				\draw[dotted, {Latex[length=2mm]}-{Latex[length=2mm]}] (A) -- (B); 
			\end{tikzpicture}.
		\end{equation*}
		We assume $\{G_j,O'\}=[G_k,O']=0$ (the other case is similarly provable).
		By Lemma~\ref{lem: graph-POQO}, there exists $T\in\mcG$ satisfying $\{G_j,T\}=\{G_k,T\}=\{O',T\}=0$:
		\begin{equation*}
			\begin{tikzpicture}[every node/.style={fill=white}]
				\node[name=A][draw,circle] at (0,0) {$T$};
				\node[name=B][draw,circle] at (+1.3,0.75) {$G_k$};
				\node[name=C][draw,circle] at (-1.3,0.75) {$G_j$};
				\node[name=D][draw,circle] at (0,-1.5) {$O'$};
				\draw (A) -- (B);
				\draw (A) -- (C);
				\draw (A) -- (D);
				\draw (C) -- (D);
				\node[name=conn] at (0,0.95) {$\mfg$};
				\draw[dotted, {Latex[length=2mm]}-{Latex[length=2mm]}] (B) -- (C); 
			\end{tikzpicture}.
		\end{equation*}
		We set $\bt$ such that $U_\text{ini}=I$ and $U_\text{fin}=e^{-i\pi S/4}e^{i\phi T}$, having $\tilde{G}_j=G_j$, $\tilde{G}_k=G_k$, $\tilde{O}=e^{-i\phi T}e^{i\pi S/4}Oe^{-i\pi S/4}e^{i\phi T}=\cos(2\phi)O' + i\sin(2\phi)O'T$, and thus
		\begin{align}
			\grad_j 
			&= -i[\tilde{G}_j,\tilde{O}] \notag \\
			&= -i \cos(2\phi) [G_j,O'] + \sin(2\phi) [G_j,O'T] \\
			&= -2i \cos(2\phi) G_j O' \notag \\
			\grad_k 
			&= -i [\tilde{G}_k,\tilde{O}] \notag \\
			&= -i \cos(2\phi) [G_k,O'] + \sin(2\phi) [G_k,O'T] \\
			&= 2 \sin(2\phi) G_k O'T. \notag 
		\end{align}
		Therefore, the gradient operators satisfy
		\begin{align}
			&[\grad_j,\grad_k] = -4i\cos(2\phi)\sin(2\phi) [G_j O', G_kO'T].
		\end{align}
		From the commutation relations between $G_j,G_k,O'$ and $T$, we have $[G_j O', G_kO'T]\neq 0$, showing that there exists $\bt$ such that $[\grad_j,\grad_k]\neq 0$.
		Therefore, $\partial_j C$ and $\partial_k C$ cannot be simultaneously measured if there exists $R\in\mcG$ such that $\{G_j,R\}=[G_k,R]=0$ or $[G_j,R]=\{G_k,R\}=0$.
	\end{proof}

	\begin{lem}\label{thm_ap: Gj=Gk}
		For $j<k \leq \Lbulk$, if $G_j=G_k$, $\partial_j C$ and $\partial_k C$ cannot be simultaneously measured:
		\begin{equation*}
			\begin{tikzpicture}
				\node[name=conn] at (0,0) {$=$};
				\node[name=A][draw,circle,fill=white] at (-0.75,0) {$G_j$};
				\node[name=B][draw,circle,fill=white] at (+0.75,0) {$G_k$};
			\end{tikzpicture}
			\quad
			\raise 2ex\hbox{$\Rightarrow$}
			\quad
			\raise 2ex\hbox{$\exists \bt \,\,\, \text{s.t.} \,\,\, [\grad_j,\grad_k]\neq 0$.}
		\end{equation*}
	\end{lem}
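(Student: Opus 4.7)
The plan is to reduce Lemma~\ref{thm_ap: Gj=Gk} to Lemma~\ref{thm_ap: PjPk}. By Condition~\ref{cond: 1}, $G_j=G_k$ forces the existence of an index $\ell$ with $j<\ell<k$ and $\{G_j,G_\ell\}=0$. Writing $G\equiv G_j=G_k$ and $Q\equiv G_\ell$, the key idea is to use this anti-commuting ``intermediate'' generator to turn the two identical generators into two effectively anti-commuting Pauli elements of $\mcG$ in the Heisenberg picture, at which point the argument of Lemma~\ref{thm_ap: PjPk} can be invoked.

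Concretely, first I specify the bulk parameters: set $\theta_m=0$ for every $m\in\{1,\dots,\Lbulk\}\setminus\{\ell\}$ and $\theta_\ell=\pi/4$, while leaving the angles in $U_\text{fin}$ free. Using $\{G,Q\}=0$, this gives $U_{j+}=I$ and $U_{k+}=e^{i\pi Q/4}$, so that $\tilde{G}_j=G$ and $\tilde{G}_k=e^{-i\pi Q/4}Ge^{i\pi Q/4}=iGQ$. Both $G$ and $GQ$ lie in $\mcG$ (the latter because $iGQ\propto[G,Q]\in\mcG$ by Lie closure), and they anti-commute.

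Second, I translate the commutator $[\grad_j,\grad_k]$ into a statement about the auxiliary observable $\hat{O}\equiv U_\text{fin}^\dag OU_\text{fin}$. Since $\tilde{O}=e^{-i\pi Q/4}\hat{O}e^{i\pi Q/4}$, and since conjugation by $e^{i\pi Q/4}$ sends $G\mapsto -iGQ$ and $GQ\mapsto -iG$ (both being anti-commuting with $Q$, as $\{GQ,Q\}=G+QGQ=G-G=0$), a short calculation gives
\begin{align*}
\grad_j &= -e^{-i\pi Q/4}[GQ,\hat{O}]e^{i\pi Q/4}, \\
\grad_k &= -ie^{-i\pi Q/4}[G,\hat{O}]e^{i\pi Q/4}.
\end{align*}
Hence $[\grad_j,\grad_k]=ie^{-i\pi Q/4}[[GQ,\hat{O}],[G,\hat{O}]]e^{i\pi Q/4}$, which vanishes for every admissible $U_\text{fin}$ if and only if $[[G,\hat{O}],[GQ,\hat{O}]]$ does. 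But non-vanishing of this latter expression, for some $\hat{O}$ reachable through $U_\text{fin}$, is exactly what Lemma~\ref{thm_ap: PjPk} asserts when applied to the anti-commuting pair $G,\,GQ$; and by Condition~\ref{cond: 3}, $U_\text{fin}$ delivers precisely the conjugations $\hat{O}\to e^{i\phi_1 R_1}e^{i\phi_2 R_2}Oe^{-i\phi_2 R_2}e^{-i\phi_1 R_1}$ with $R_1,R_2\in\mcG$ that that proof consumes.

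The main obstacle is verifying that the proof of Lemma~\ref{thm_ap: PjPk}, which was phrased for generators $G_j,G_k\in\mcGc$, still goes through when the anti-commuting pair is $G,GQ\in\mcG$ rather than $\mcGc$. Inspecting its case analysis, the sub-case $\{G,O\}=\{GQ,O\}=0$ uses only the anti-commutations with $O$ and the Pauli identity $G^2=I$, and the remaining sub-cases invoke Lemma~\ref{lem: graph-anticommute}, whose hypotheses are DLA-graph conditions on anti-commuting elements of $\mcG$ and hence apply equally well to $G,GQ$. Degenerate configurations such as $GQ\propto O$ (which would trivialize $[G,\hat{O}]$ or $[GQ,\hat{O}]$ on the nose at $\theta_\ell=\pi/4$) can be sidestepped by perturbing $\theta_\ell$ away from $\pi/4$, which makes $\tilde{G}_k=\cos(2\theta_\ell)G+i\sin(2\theta_\ell)GQ$ a generic linear combination and restores the generic argument.
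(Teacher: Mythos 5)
Your proposal is correct and follows essentially the same route as the paper: both use Condition~\ref{cond: 1} to extract the anti-commuting intermediate $G_\ell$, conjugate the second occurrence of the generator into $iG_\ell G_k\in\mcG$ (which anti-commutes with $G_j$), and then rerun the two-case analysis of Lemma~\ref{thm_ap: PjPk} via Lemma~\ref{lem: graph-anticommute} with $U_\text{fin}$ supplying the needed conjugation of $O$. The only differences are bookkeeping (you place the $\pi/4$ rotation at position $\ell$ and conjugate the whole commutator, whereas the paper cancels it inside $U_\text{fin}$), and your closing worry about degenerate configurations is already absorbed by the case split, so no perturbation of $\theta_\ell$ is needed.
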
    

	\begin{proof}
		
		By Condition~\ref{cond: 1}, there exists $\ell$ ($j<\ell<k$) such that $\{G_j,G_\ell\}=0$.
		Then we consider $G_k' =iG_\ell G_k = i[G_\ell,G_k]/2\in \mcG$, which satisfies $\{G_j,G_k'\}=0$.
		We prove this lemma in two cases: (i) $\{G_j,O\}=\{G_k',O\}=0$ and (ii) otherwise.
		\begin{enumerate}
			\item[(i)] 
			We set $\bt$ such that $U_{j+}=I$, $U_{k+}=e^{-i\pi G_\ell/4}$, and $U=I$ (i.e., $U_\text{ini}=e^{-i\pi G_\ell/4}$ and $U_\text{fin}=e^{+i\pi G_\ell/4}$).
			Then we have $\tilde{G}_j=G_j, \tilde{G}_k=e^{i\pi G_\ell/4}G_k e^{-i\pi G_\ell/4}=G_k', \tilde{O}=O$, and thus
			\begin{align}
				&\grad_j = -i[\tilde{G}_j, \tilde{O}] = -2i G_jO, \\
				&\grad_k = -i[\tilde{G}_k, \tilde{O}] = -2i G_k'O.
			\end{align}
			Therefore, we obtain $[\grad_j,\grad_k]=4[G_j,G_k'] \neq 0$ from $\{G_j,O\}=\{G_k',O\}=0$, showing that $\partial_j C$ and $\partial_k C$ cannot be simultaneously measured.
			
			\item[(ii)]
			By Lemma~\ref{lem: graph-anticommute}, there exists $R\in\mcG$ satisfying one of the following commutation relations:
			\begin{equation*}
				\begin{tikzpicture}
					\node[name=A][draw,circle,fill=white] at (-1.3,0.75) {$G_j$};
					\node[name=B][draw,circle,fill=white] at (+1.3,0.75) {$G_k'$};
					\node[name=C][draw,circle,fill=white] at (0,0) {$R$};
					\node[name=D][draw,circle,fill=white] at (0,-1.5) {$O$};
					\draw (A) -- (B);
					\draw (A) -- (D);
					\draw (B) -- (C);
					\draw (C) -- (D);
				\end{tikzpicture}
				\quad
				\begin{tikzpicture}
					\node[name=A][draw,circle,fill=white] at (-1.3,0.75) {$G_j$};
					\node[name=B][draw,circle,fill=white] at (+1.3,0.75) {$G_k'$};
					\node[name=C][draw,circle,fill=white] at (0,0) {$R$};
					\node[name=D][draw,circle,fill=white] at (0,-1.5) {$O$};
					\draw (A) -- (B);
					\draw (A) -- (C);
					\draw (B) -- (D);
					\draw (C) -- (D);
				\end{tikzpicture}
			\end{equation*}
			\begin{equation*}
				\begin{tikzpicture}
					\node[name=A][draw,circle,fill=white] at (-1.3,0.75) {$G_j$};
					\node[name=B][draw,circle,fill=white] at (+1.3,0.75) {$G_k'$};
					\node[name=C][draw,circle,fill=white] at (0,0) {$R$};
					\node[name=D][draw,circle,fill=white] at (0,-1.5) {$O$};
					\draw (A) -- (B);
					\draw (A) -- (C);
					\draw (B) -- (C);
					\draw (C) -- (D);
				\end{tikzpicture}.
			\end{equation*}
			By setting $\bt$ such that $U_{j+}=I$, $U_{k+}=e^{-i\pi G_\ell/4}$, and $U=e^{-i\pi R/4}$ (i.e., $U_\text{ini}=e^{-i\pi G_\ell/4}$, and $U_\text{fin}=e^{-i\pi R/4}e^{+i\pi G_\ell/4}$), we have $\tilde{G}_j=G_j, \tilde{G}_k=G_k', \tilde{O}=e^{i\pi R/4}Oe^{-i\pi R/4}=iRO$, and thus
			\begin{align}
				&\grad_j = -i[\tilde{G}_j, \tilde{O}] = 2G_jRO, \\
				&\grad_k = -i[\tilde{G}_k, \tilde{O}] = 2G_k'RO.
			\end{align}
			Therefore, we obtain $[\grad_j,\grad_k]=4[G_j,G_k'] \neq 0$ from $\{G_j,iRO\}=\{G_k',iRO\}=0$, showing that $\partial_j C$ and $\partial_k C$ cannot be simultaneously measured.
		\end{enumerate}
		These results prove that $\partial_j C$ and $\partial_k C$ cannot be simultaneously measured if $G_j=G_k$.
	\end{proof}

	\section{Inequalities for main theorem} \label{secap: proof eqs}

	The main proof of Theorem~\ref{thm: main} leaves the derivations of Eqs.~\eqref{apeq: ex_inequality} and \eqref{apeq: sub_inequality}.
	Here, we derive these equations.

	\subsection{Preliminaries} \label{secap: preliminaries}
	We first show several lemmas for preliminaries.
	
	\setcounter{lem}{10}
	\begin{lem} \label{lem: stab1}
		Consider a set of commuting Pauli operators $\mcS=\{S_1,S_2,\cdots\}$, where $[S_j,S_k]=0$ $\forall S_j,S_k\in\mcS$.
		Define a subset of Pauli operators, $\mc{P}_\mcS \subset \mc{P}_n$, stabilized by $\mcS$ as
		\begin{align}
			\mc{P}_\mcS = \{P\in \mc{P}_n \,|\, [P,S_j]=0 \,\, \forall S_j\in\mcS\}.
		\end{align}
		Then, letting $s$ be the number of independent Pauli operators in $\mcS$, the following equality and inequalities hold:
		\begin{align}
			&|\mc{P}_\mcS| = \frac{4^n}{2^s} \leq \frac{4^n}{|\mcS|}, \quad |\mcS|\leq 2^n.
		\end{align}
	\end{lem}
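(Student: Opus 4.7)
The plan is to use the symplectic representation of the Pauli group over $\mathbb{F}_2$. I would identify each Pauli operator $P \in \mcPn$ (ignoring the global phase) with a vector $v_P = (a_P, b_P) \in \mathbb{F}_2^{2n}$, where $a_P$ and $b_P$ encode the $X$- and $Z$-type components, respectively. This identification is a bijection between $\mcPn$ and $\mathbb{F}_2^{2n}$ (both have $4^n$ elements), and the commutation relation translates to the symplectic form $\omega(v_P, v_Q) := a_P \cdot b_Q + b_P \cdot a_Q \pmod{2}$, with $[P,Q]=0$ iff $\omega(v_P, v_Q) = 0$.

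Under this map, $\mcS$ corresponds to a subset $V_\mcS \subset \mathbb{F}_2^{2n}$ that is pairwise symplectically orthogonal, and by the definition of $s$ its $\mathbb{F}_2$-span $W$ has dimension exactly $s$, so $|W|=2^s$. Since $V_\mcS \subseteq W$, this immediately yields $|\mcS| \leq 2^s$. The stabilized set $\mc{P}_\mcS$ corresponds to the symplectic complement $W^\perp = \{u \in \mathbb{F}_2^{2n} \mid \omega(u,w)=0 \, \forall w \in W\}$, which by non-degeneracy of $\omega$ has dimension $2n - s$. This gives the equality $|\mc{P}_\mcS| = 2^{2n-s} = 4^n/2^s$, and combining with $|\mcS| \leq 2^s$ yields the first inequality $|\mc{P}_\mcS| \leq 4^n/|\mcS|$.

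For the final bound $|\mcS| \leq 2^n$, the key observation is that $W$ is an isotropic subspace, $W \subseteq W^\perp$, because all elements of $\mcS$ mutually commute. A standard fact from symplectic linear algebra is that an isotropic subspace of a $2n$-dimensional symplectic vector space has dimension at most $n$. Hence $s \leq n$, giving $|\mcS| \leq 2^s \leq 2^n$.

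The main subtlety, rather than a genuine obstacle, is checking that the reduction to the $\mathbb{F}_2$-vector-space picture is clean: one must verify that the commutation relation is insensitive to the ignored phases, and that the Paulis are treated as distinct modulo sign. Once this setup is in place, the proof reduces to invoking the non-degeneracy of $\omega$ and the maximal-isotropic-dimension fact, both of which are textbook results, so the argument is essentially a two-line computation in the symplectic formalism.
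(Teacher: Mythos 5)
Your proof is correct, but it takes a genuinely different route from the paper's. You work in the symplectic picture over $\mathbb{F}_2^{2n}$: the equality $|\mc{P}_\mcS|=4^n/2^s$ falls out of non-degeneracy of the symplectic form ($\dim W^\perp = 2n-s$), and $|\mcS|\leq 2^n$ follows from the textbook fact that an isotropic subspace has dimension at most $n$. The paper instead argues elementarily and self-containedly: it invokes Proposition~10.4 of Nielsen--Chuang to obtain, for each independent generator $g_j$, a partner $h_j$ anticommuting with $g_j$ and commuting with the others, and then shows that the $2^s$ cosets $h_1^{x_1}\cdots h_s^{x_s}\mc{P}_\mcS$ are disjoint, equinumerous, and exhaust $\mc{P}_n$, giving $4^n=2^s|\mc{P}_\mcS|$; the bound $|\mcS|\leq 2^n$ is then obtained by combining $|\mcS|\leq|\mc{P}_\mcS|$ (since $\mcS\subseteq\mc{P}_\mcS$) with $|\mc{P}_\mcS|\leq 4^n/|\mcS|$, rather than via the isotropic-dimension bound. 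The two arguments are really dual presentations of the same content --- the $h_j$ are exactly a symplectic-dual set for a basis of $W$, and the coset decomposition is the concrete realization of $\mathbb{F}_2^{2n}/W^\perp\cong\mathbb{F}_2^{s}$ --- but yours is more compact at the cost of importing two standard facts from symplectic linear algebra, while the paper's is longer but needs only the single anticommuting-partner lemma. Your reduction checks (phase-insensitivity of commutation, bijectivity of the map on $\{I,X,Y,Z\}^{\otimes n}$, and passing from orthogonality to $V_\mcS$ to orthogonality to its span $W$ by bilinearity) are all sound, so there is no gap.
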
    

	\begin{proof}
		Let $g_1,\cdots,g_s\in\mcS$ be independent Pauli operators in $\mcS$.
		According to Proposition~10.4 in Ref.~\cite{Nielsen2010-pf}, there is $h_j\in \mc{P}_n$ such that $\{g_j,h_j\}=0$ and $[g_k,h_j]=0$ for all $k\neq j$.
		We consider 
		\begin{align}
			\mc{H}({\bm{x}}) 
			&\equiv h_1^{x_1}\cdots h_s^{x_s} \mc{P}_\mcS \notag \\
			&= \{h_1^{x_1}\cdots h_s^{x_s} P \,|\, P\in \mc{P}_\mcS\} \subset \mcPn,
		\end{align}
		where $\bm{x}=(x_1,\cdots,x_s)$ is a binary vector ($x_j=0,1$).
		The set $\mc{H}(\bm{x})$ has the following properties:
		\begin{enumerate}
			\item $\mc{H}(\bm{x}) \cap \mc{H}(\bm{x'})=\varnothing$ for $\bm{x}\neq\bm{x}'$.
			\item $|\mc{H}(\bm{x})| = |\mc{P}_\mcS|$ for any $\bm{x}$
			\item $\bigsqcup_{\bm{x}} \mc{H}(\bm{x}) = \mc{P}_n$
		\end{enumerate}
		
		Let us prove the first property.
		By construction, we have $Pg_j - (-1)^{x_j} g_j P = 0$ for any $P\in \mc{H}(\bm{x})$.
		This means that, for $\bm{x}\neq\bm{x}'$, $\mc{H}(\bm{x})$ and $\mc{H}(\bm{x}')$ have different (anti-)commutation relations with $g_1,\cdots,g_s$,
		leading to $\mc{H}(\bm{x}) \cap \mc{H}(\bm{x'})=\varnothing$.
		The second property also holds because $h_1^{x_1}\cdots h_s^{x_s} P \neq h_1^{x_1}\cdots h_s^{x_s} P'$ for $P\neq P'$.
		The third property is proven by contradiction.
		Let us assume that there exists $P\in \mc{P}_n$ satisfying $P\notin \bigsqcup_{\bm{x}} \mc{H}(\bm{x})$.
		Then, a binary vector $\bm{y}$ is defined such that $Pg_j - (-1)^{y_j} g_j P = 0$.
		Because $P'=h_1^{y_1}\cdots h_s^{y_s} P$ satisfies $[P',g_j]=0$ for any $g_j$, $P'$ belongs to $\mc{P}_\mcS$ by definition.
		Hence, we have $P = h_1^{y_1}\cdots h_s^{y_s} P' \in \mc{H}(\bm{y})$.
		This contradicts the assumption of $P\notin \bigsqcup_{\bm{x}} \mc{H}(\bm{x})$.
		Therefore, we have $\bigsqcup_{\bm{x}} \mc{H}(\bm{x}) = \mc{P}_n$.
		
		These properties lead to $4^n=|\mc{P}_n|=|\bigsqcup_{\bm{x}} \mc{H}(\bm{x})|= \sum_{\bm{x}} |\mc{H}(\bm{x})| = 2^s |\mc{P}_\mcS|$, proving $|\mcP_\mcS|=4^n/2^s$.
		Besides, given that $s$ independent Pauli operators can generate only $2^s$ Pauli operators by multiplying them, we have $|\mcS| \leq 2^s$, which readily shows $|\mcP_\mcS|=4^n/2^s \leq 4^n/|\mcS|$.
		Also, combining  $|\mcP_\mcS|\leq 4^n/|\mcS|$ and $|\mcS| \leq |\mcP_\mcS|$ (the latter is derived from the fact that $\forall S_j\in\mcS$ is included in $\mc{P}_\mcS$), we obtain $|\mcS|\leq 2^n$.

	\end{proof}

	\begin{lem} \label{lem: inequality}
		Let $c$ be a real constant.
		For real variables $a_1,\cdots,a_q \geq c$ and a real constant $A \geq c^q$, the following inequality holds under a constraint $\prod_{j=1}^q a_j \leq A$:
		\begin{align}
			\sum_{j=1}^q a_j \leq \frac{A}{c^{q-1}} + c(q-1).
		\end{align}
	\end{lem}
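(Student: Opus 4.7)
The plan is to reduce the problem to its extremal configuration via a pairwise smoothing argument. I claim that for fixed product $P = \prod_j a_j$, the sum $\sum_j a_j$ subject to $a_j \geq c$ is maximized by the ``corner'' configuration in which $q-1$ of the variables equal $c$ and the remaining variable equals $P/c^{q-1}$. Given that, since $P \leq A$ and the function $P \mapsto P/c^{q-1} + c(q-1)$ is monotone increasing in $P$, the bound $\sum_j a_j \leq A/c^{q-1} + c(q-1)$ follows immediately. (One needs $A \geq c^q$ precisely so that the corner configuration is admissible, i.e. the last variable is $\geq c$.)

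The key tool is the following observation. Suppose $a_i, a_j > c$ for some pair of indices. Replace $(a_i, a_j)$ by $(c,\, a_i a_j/c)$. This keeps the product invariant, preserves the constraint $a_\ell \geq c$ for every $\ell$ (since $a_i a_j / c \geq c$ whenever $a_i, a_j \geq c$), and changes the sum by
\begin{align}
\Bigl(c + \tfrac{a_i a_j}{c}\Bigr) - (a_i + a_j) = \frac{(a_i - c)(a_j - c)}{c} \geq 0.
\end{align}
Hence the smoothing weakly increases the sum. Iterating this move (each application strictly reduces the number of indices with $a_\ell > c$) terminates in at most $q-1$ steps, producing a configuration in which at most one variable exceeds $c$. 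At that point the sum is exactly $c(q-1) + P/c^{q-1}$, where $P \leq A$ is the preserved product.

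The main (mild) obstacle is a careful treatment of the boundary cases: one should verify that the smoothing respects admissibility, which requires $a_i a_j / c \geq c$, guaranteed by $a_i, a_j \geq c$; and that the final corner configuration satisfies $A/c^{q-1} \geq c$, which is exactly the hypothesis $A \geq c^q$. A fully equivalent alternative route is induction on $q$: fix $a_q \in [c, A/c^{q-1}]$, apply the inductive hypothesis to $a_1,\dots,a_{q-1}$ with product budget $A/a_q$, and observe that the resulting function $a_q \mapsto A/(a_q c^{q-2}) + a_q + c(q-2)$ is convex on the admissible interval, so it attains its maximum at an endpoint, where it evaluates to $A/c^{q-1} + c(q-1)$. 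Either route delivers the claimed inequality.
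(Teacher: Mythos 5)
Your proposal is correct, and your primary route is genuinely different from the paper's. The paper first proves the case $c=1$ by induction on $q$: it fixes $a_{m+1}\in[1,A]$, applies the inductive hypothesis to the remaining variables with budget $A/a_{m+1}$, and maximizes $f(x)=x+A/x$ at the endpoints of $[1,A]$ to get $\sum_j a_j\leq A+m$; it then obtains general $c$ by the rescaling $a_j\mapsto ca_j$, $A\mapsto c^qA$. Your ``alternative route'' at the end is essentially this same induction carried out directly at general $c$ (with the convexity-at-endpoints observation playing the role of the paper's $f(1)=f(A)=A+1$). Your main argument, however, is a pairwise smoothing: the exchange $(a_i,a_j)\mapsto(c,a_ia_j/c)$ preserves the product and admissibility and increases the sum by $(a_i-c)(a_j-c)/c\geq0$, terminating at the corner configuration with $q-1$ variables equal to $c$. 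This is a valid and arguably more transparent argument — it exhibits the extremal point explicitly and avoids induction entirely — at the cost of having to argue termination and check that each move stays feasible. One shared caveat: both your smoothing step (the sign of $(a_i-c)(a_j-c)/c$) and the paper's rescaling implicitly require $c>0$, even though the lemma is stated for a ``real constant'' $c$; since the lemma is only invoked with $c=4$ (via $r_x+1\geq4$), this is harmless, but it is worth flagging the positivity hypothesis explicitly.
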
    

	\begin{proof}
		We first prove this lemma for $c=1$ by mathematical induction.
		
		\begin{enumerate}
			\item[(i)] For $q=1$, $a_1 \leq A$ holds trivially by the constraint.
			\item[(ii)] We assume that this lemma holds for $q=m$.
			That is, under the constraint $\prod_{j=1}^m a_j \leq A$, $\sum_{j=1}^m a_j \leq A+m-1$ holds for any $A\geq1$.
			
			Now, we consider the case of $q=m+1$, where the constraint is given by $\prod_{j=1}^{m+1} a_j \leq A$.
			Let us fix $a_{m+1}$ in the range of $1\leq a_{m+1} \leq A$ (if $a_{m+1}>A$, the condition $a_1,\cdots,a_m \geq1$ cannot be satisfied).
			Then, the constraint for $a_1,\cdots,a_m$ is written as $\prod_{j=1}^{m} a_j \leq A/a_{m+1}=\tilde{A}$.
			Under this constraint, we have $\sum_{j=1}^m a_j \leq \tilde{A}+m-1=A/a_{m+1}+m-1$ from the assumption for $q=m$.
			Therefore, we have $\sum_{j=1}^{m+1} a_j= a_{m+1} + (\sum_{j=1}^{m} a_j)\leq a_{m+1}+A/a_{m+1}+m-1$.
			For $1\leq x \leq A$, $f(x)=x+A/x$ has a maximum value $f(1)=f(A)=A+1$.
			Thus, we finally obtain $\sum_{j=1}^{m+1} a_j\leq A+m$, indicating that the lemma holds even for $q=m+1$.
		\end{enumerate}
		
		These discussions prove the lemma for $c=1$ by mathematical induction.
		Then, we rescale $a_j$ by a factor of $c$ as $a_j \to a_j'=ca_j$, where the constraint is also rescaled as $\prod_{j=1}^q a_j' = c^q \prod_{j=1}^q a_j \leq c^q A = A'$.
		Therefore, we obtain
		\begin{align}
			\sum_{j=1}^q a_j' = c\sum_{j=1}^q a_j \leq cA + c(q-1) = \frac{A'}{c^{q-1}} + c(q-1), \notag
		\end{align}
		where we have used the lemma for $c=1$ in the inequality.
		This proves the lemma for any $c>0$.
	\end{proof}

	\subsection{Proof of Eq.~\eqref{apeq: ex_inequality}}
	
	Here, we prove Eq.~\eqref{apeq: ex_inequality}:
	\begin{align}
		\ex \leq \frac{4^n v}{4^{q-1}|\mcS|^2} + (3q-4)v + p. 
	\end{align}
	Let us first show the following lemma: 
	\begin{lem} \label{lem: rj}
		$r_\x\geq 3$.
	\end{lem}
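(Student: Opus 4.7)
The plan is to argue by contradiction: assume $r_\x \leq 2$, and produce an element of $\mcG$ that cannot be placed in any of the allowed $\mcC_\x^{a}$'s, forcing a third subgraph.

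The case $r_\x=1$ is quick. Eq.~\eqref{eqap: Cst1} with $a=b=1$ forces $[\mcC_\x^{1},\mcC_\x^{1}]=0$, so every pair of nodes in $\mcB_\x$ commutes. But $|\mcB_\x|\geq 2$ by construction and any two distinct nodes of $\mcB_\x$ are $\mfg$-connected; because $\mcB_\x$ is separated from the rest of the DLA, the connecting path between any two distinct nodes must lie entirely inside $\mcB_\x$, so $\mcB_\x$ must contain at least one anti-commuting pair, a contradiction.

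For $r_\x=2$, I would first rule out $[\mcC_\x^{1},\mcC_\x^{2}]=0$ by the same connectivity argument, since it would leave $\mcB_\x$ edgeless. So by Eq.~\eqref{eqap: Cst1}, $\{\mcC_\x^{1},\mcC_\x^{2}\}=0$. Now pick any $P\in\mcC_\x^{1}$ and $Q\in\mcC_\x^{2}$; using $\{P,Q\}=0$, the nested commutator $[iP,iQ]=-2PQ$ shows that $PQ$ (up to a scalar) lies in $\mcG$. A short computation using $[P,P']=[Q,Q']=0$ and $\{P,Q'\}=\{Q,P'\}=0$ for all $P'\in\mcC_\x^{1},\,Q'\in\mcC_\x^{2}$ gives $\{PQ,P'\}=\{PQ,Q'\}=0$, so $PQ$ anti-commutes with every element of $\mcB_\x$. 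Consequently $PQ$ cannot belong to $\mcC_\x^{1}$ or $\mcC_\x^{2}$, since each $\mcC_\x^{a}$ must commute with itself; a third subgraph $\mcC_\x^{3}$ is forced, contradicting $r_\x=2$.

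The main subtlety is to confirm that $PQ$ actually lies in $\mcB_\x$ (and not in another separated subgraph or outside $\mcG$). Since $PQ$ anti-commutes with $P\in\mcB_\x$, it is $\mfg$-connected to $\mcB_\x$; and by separability of $\mcB_\x$ from $\mcG\smallsetminus\mcB_\x$, any node $\mfg$-connected to $\mcB_\x$ must itself belong to $\mcB_\x$. One should also verify $PQ\neq I$ so that it is a legitimate DLA basis element, but this follows from $P\neq Q$ (they lie in disjoint $\mcC_\x^{a}$'s) since two distinct Pauli operators never multiply to the identity. With these checks in place, both subcases yield contradictions and the lemma follows.
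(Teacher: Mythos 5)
Your proof is correct and follows essentially the same route as the paper: for $r_\x=1$ you contradict the internal connectivity of $\mcB_\x$, and for $r_\x=2$ you form $PQ\in\mcG$ from anti-commuting $P\in\mcC_\x^{1}$, $Q\in\mcC_\x^{2}$ and observe it anti-commutes with both, so it fits in neither subgraph. Your extra checks (that $PQ$ lands in $\mcB_\x$ by separability and that $PQ\neq I$) are details the paper leaves implicit, but the core argument is the same.
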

	\begin{proof}
		If $r_\x=1$, all the elements of $\mcB_\x=\mcC_\x^1$ commute with each other by Eq.~\eqref{apeq: Cst1}, which contradicts the fact that $\mcB_\x$ is a connected graph, thus $r_\x \geq 2$.
		If $r_\x=2$, Eq.~\eqref{apeq: Cst1} leads to $\{\mcC_\x^{1},\mcC_\x^{2}\}=0$ because $\mcB_\x$ is connected.
		Then, for $P\in\mcC_\x^{1}$ and $Q\in\mcC_\x^{2}$, $iR = [iP,iQ]=-2PQ$ satisfies $\{P,R\}=\{Q,R\}=0$.
		This indicates that $R$ is included in $\mcB_\x$ but not in both $\mcC_\x^{1}$ and $\mcC_\x^{2}$, thus showing the existence of $\mcC_\x^{3}$.
		Therefore $r_\x\geq 3$.
	\end{proof}

	Now, we prove Eq.~\eqref{apeq: ex_inequality}. 
	Let $C_{\x, a} \in \mcC_\x^{a}$ be a representative element of $\mcC_\x^{a}$ and define $\mc{D}_\x=\{I, C_{\x, 1}, \cdots, C_{\x, r_\x}\}$.
	Also, let $S_{a}\in \mcS$ and $D_{\x, a}\in \mc{D}_\x$ be the $a$th elements of $\mcS$ and $\mc{D}_\x$, respectively.
	Then we consider the following Pauli operator:
	\begin{align}
		M_{\bm{a}} = S_{a_0} D_{1, a_1} \cdots D_{q, a_{q}},
	\end{align}
	where $\bm{a}=(a_0,\cdots,a_q)$ is a vector with $a_0\in\{1,\cdots,|\mcS|\}$ and $a_{\x(\neq0)}\in \{1,\cdots,r_\x+1\}$.
	The operator $M_{\bm{a}}$ has the following properties:
	\begin{enumerate}
		\item[(i)]  $M_{\bm{a}} \in\mc{P}_\mcS$,
		\item[(ii)] $M_{\bm{a}} \neq M_{\bm{b}}$ for $\bm{a}\neq \bm{b}$,
	\end{enumerate}
	where $\mc{P}_\mcS = \{P\in \mc{P}_n \,|\, [P,S_j]=0 \,\, \forall S_j\in\mcS\}$ is a subset of Pauli operators stabilized by $\mcS$ ($|\mc{P}_\mcS| \leq 4^n/|\mcS|$ holds by Lemma~\ref{lem: stab1}).
	The first property is readily proven by noticing $[M_{\bm{a}},\mcS]=0$, which is derived from $[\mcS,\mcS]=[\mc{D}_\x,\mcS]=0$.
	Then, we prove the second property below.
	If $a_\x \neq b_\x$ ($\x\geq1$), according to Eq.~\eqref{apeq: Cst2}, there exists $R\in \mcB_\x$ such that $[D_{\x, a_\x},R]=\{D_{\x, b_\x},R\}=0$ (or $\{D_{\x, a_\x},R\}=[D_{\x, b_\x},R]=0$) and $[\mcS,R]=[\mc{D}_{\y(\neq \x)},R]=0$.
	These commutation relations lead to $[M_{\bm{a}}, R] \neq [M_{\bm{b}}, R]$, showing $M_{\bm{a}} \neq M_{\bm{b}}$.
	If $a_0 \neq b_0$ and $a_\x = b_\x$ for all $\x\geq1$, $M_{\bm{a}} = M_{\bm{b}}$ leads to $S_{a_0}=S_{b_0}$, which contradicts the fact that the elements of $\mcS$ are not duplicated, showing $M_{\bm{a}} \neq M_{\bm{b}}$.
	Therefore, $M_{\bm{a}} \neq M_{\bm{b}}$ for $\bm{a}\neq \bm{b}$.

	Given that $M_{\bm{a}} \in\mc{P}_\mcS$ and $M_{\bm{a}} \neq M_{\bm{b}}$ for $\bm{a}\neq \bm{b}$, the following inequality holds:
	\begin{align}
		|\mcS| (r_1+1) \cdots (r_q+1) \leq |\mc{P}_\mcS|, \label{apeq: constraint}
	\end{align}
	where the left-hand side corresponds to the number of $M_{\bm{a}}$.
	Since $|\mc{P}_\mcS|\leq 4^n/|\mcS|$, we have 
	\begin{align}
		\prod_{\x=1}^q (r_\x+1)\leq \frac{4^n}{|\mcS|^2}.  \label{apeq: constraint2}
	\end{align}
	This constraint gives a new upper bound of the expressivity $\ex\leq p+v \sum_{\x=1}^q r_\x$ in Eq.~\eqref{apeq: Dexp_cal0}.
	By minimizing $\sum_{\x=1}^q r_\x$ in $\ex$ under the constraint $\prod_{\x=1}^q (r_\x+1)\leq 4^n/|\mcS|^2$, we obtain
	\begin{align}
		\ex 
		&\leq p+v \sum_{\x=1}^q r_\x \notag \\
		&\leq v\sum_{\x=1}^q (r_\x+1) - qv + p \notag \\
		&\leq v \left[ \frac{4^n}{4^{q-1}|\mcS|^2} + 4(q-1) \right] - qv + p \notag \\
		&= \frac{4^n v}{4^{q-1}|\mcS|^2} + (3q-4)v + p,
	\end{align}
	where we have used Lemma~\ref{lem: inequality} with $r_\x+1\geq 4$ in the derivation of the third line.
	This is Eq.~\eqref{apeq: ex_inequality}, as required.

	For later use, we derive another inequality.
	Combining Eq.~\eqref{apeq: constraint2} and $r_\x+1\geq 4$, we have
	\begin{align}
		\frac{4^n}{|\mcS|^2} \geq 4^q. \label{apeq: condition_S}
	\end{align}
	This inequality will be used below.

	\subsection{Proof of Eq.~\eqref{apeq: sub_inequality}}
	
	Here, we prove Eq.~\eqref{apeq: sub_inequality}:
	\begin{align}
		\frac{4^n v}{4^{q-1}|\mcS|^2} + (3q-4)v + p \leq \frac{4^n}{qw} - qw. \label{apeq: Dexp_cal1}
	\end{align}
	By considering the difference between the both sides, we have
	\begin{align}
		&\text{(RHS)}-\text{(LHS)} \notag \\
		&= \frac{4^n}{4^{q-1}qw|\mcS|^2}(4^{q-1}|\mcS|^2 - qvw) -qw -(3q-4)v - p. \notag
	\end{align}
	This is written as
	\begin{align}
		\text{(RHS)}-\text{(LHS)} 
		&\geq \frac{4^q|\mcS|^2 - 4qvw}{qw} -qw -(3q-4)v - p. \notag \\
		&= \frac{4^q|\mcS|^2 - 3q^2vw -q^2w^2 - pqw}{qw}  \notag \\
		&\geq \frac{4q^2|\mcS|^2 - 3q^2|\mcS|w -q^2w^2 - pq^2w}{qw} \notag \\
		&= \frac{q(4|\mcS|^2 - 3|\mcS|w - w^2 - pw)}{w} \notag \\
		&= \frac{q\Big[ (4|\mcS|+w)(|\mcS|-w)-pw \Big]}{w} \notag \\
		&\geq \frac{q\Big[ (4|\mcS|+w)p-pw \Big]}{w} \notag \\
		&= \frac{4pq|\mcS|}{w} \notag \\
		&\geq 0. \notag
	\end{align}
	Here, we have used Eq.~\eqref{apeq: condition_S} with $4^{q-1}|\mcS|^2 - qvw \geq 0$ in the second line, $4^q\geq4q^2, q^2\geq q$, and $|\mcS|=\text{max}(v,w+p)\geq v$ in the fourth line, and $|\mcS|=\text{max}(v,w+p)\geq w+p$ in the seventh line.
	This result proves Eq.~\eqref{apeq: sub_inequality}.

	\section{Supplementary information for SLPA}
	\subsection{Gradient measurement circuit} \label{secap: CBC measure}
	
	We briefly review how to measure gradients in the commuting block circuit (CBC)~\cite{Bowles2023-vf}, including the stabilizer-logical product ansatz (SLPA).
	As introduced in the main text, the unitary of CBC is given by $U(\bt)=\prod_{a=1}^B U_a(\bt_a)$ with $U_a(\bt_a)=\prod_{j} \exp(i \theta_j^a G_j^a)$.
	For an input state $\ket{\phi}$ and an Pauli observable $O$, the cost function $C(\bt)=\braket{\phi|U^\dag(\bt)OU(\bt)|\phi}$ is reduced to
	\begin{align}
		C(\bt) = \braket{\phi_a | W_a^\dag O W_a | \phi_a},
	\end{align}
	where we have defined the quantum state at the $a$th block and the quantum circuit after the $a$th block as
	\begin{align}
		&\ket{\phi_a} = U_a(\bt_a)\cdots U_1(\bt_1) \ket{\phi}, \\
		&W_a = U_B(\bt_B)\cdots U_{a+1}(\bt_{a+1}).
	\end{align}
	Using $\partial \ket{\phi_a}/\partial \theta_j^a = iG_j^a \ket{\phi_a}$, we have the derivative of the cost function by the $j$th parameter of the $a$th block:
	\begin{align}
		\frac{\partial C}{\partial \theta_j^a} = \braket{\phi_a | \left(i W_a^\dag O W_a G_j^a - i G_j^a W_a^\dag O W_a \right) | \phi_a}.
	\end{align}
	Then, since all generators in the block share the same commutation relations with other blocks, we can define $\tilde{W}_a$ such that $W_a G_j^a = G_j^a \tilde{W}_a$.
	This $\tilde{W}_a$ is easily obtained by using $e^{i\theta P} G_j^a = G_j^a e^{\pm i\theta P}$ ($P$ is a Pauli operator), where $\pm$ correspond to the cases of $[G_j^a,P]=0$ and $\{G_j^a,P\}=0$, respectively.
	Thereby, we have
	\begin{align}
		\frac{\partial C}{\partial \theta_j^a} = \braket{\phi_a | \left(W_a^\dag (-1)^{g_j^a} iG_j^a O \tilde{W}_a - \tilde{W}_a^\dag i G_j^a O W_a \right) | \phi_a},
	\end{align}
	where $g_j^a=0$ if $[G_j^a,O]=0$ and $g_j^a=1$ if $\{G_j^a,O\}=0$.
	By defining a Pauli operator $O_j^a=i^{g_j^a} G_j^a O$ and a unitary operator $W_a'=(-i)^{g_j^a+1}W_a$, the derivative is written as
	\begin{align}
		&\frac{\partial C}{\partial \theta_j^a} \notag \\
		&= \braket{\phi_a | \left( (W_a')^\dag O_j^a  \tilde{W}_a + \tilde{W}_a^\dag O_j^a (W_a') \right) | \phi_a} \notag\\
		&= \frac{1}{2} \Big[ \braket{\phi_a | \left( \tilde{W}_a^\dag + (W_a')^\dag \right) O_j^a \left( \tilde{W}_a + W_a' \right) | \phi_a} \notag \\
		&\hspace{1cm}- \braket{\phi_a | \left( \tilde{W}_a^\dag - (W_a')^\dag \right) O_j^a \left( \tilde{W}_a - W_a' \right) | \phi_a} \Big] \notag\\
		&= \frac{1}{2} \Big[ \braket{\phi_a | (L_{W_a}^+)^\dag O_j^a L_{W_a}^+ | \phi_a} -\braket{\phi_a | (L_{W_a}^-)^\dag O_j^a L_{W_a}^- | \phi_a} \Big],
	\end{align}
	where $L_{W_a}^\pm=\tilde{W}_a \pm W_a'$ are the linear combinations of unitaries.

	\begin{figure}[t]
		\centering
		\includegraphics[width=\linewidth]{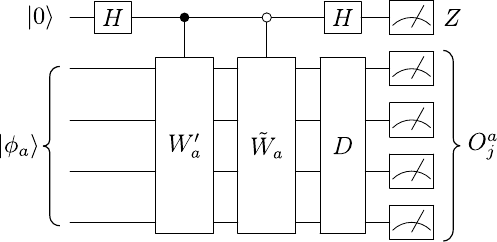}
		\caption{
			{\bf Gradient measurement in commuting block circuits.}
			The gate $D$ is a basis transformation that diagonalizes $O_j^a$'s with the same $g_j^a$ simultaneously. 
		}
		\label{fig: CBC measure}
	\end{figure}

	These $L_{W_a}^\pm$ can be implemented using an ancilla qubit as shown in Fig.~\ref{fig: CBC measure}.
	For the quantum state $\ket{\phi_a}$ and the ancilla qubit $H\ket{0}$ ($H$ is a Hadamard gate), we apply a $W'$ gate controlled on the ancilla qubit being in $\ket{1}$, followed by a $\tilde{W}$ gate controlled on the ancilla qubit being in $\ket{0}$. 
	After these controlled gates, we finally apply a $H$ gate on the ancilla qubit and then have
	\begin{align}
		\ket{\psi_a} = \frac{1}{2} \left( \ket{0}L_{W_a}^+ \ket{\phi_a} + \ket{1}L_{W_a}^- \ket{\phi_a} \right).
	\end{align}
	For this quantum state, we can estimate the derivative by measuring an observable $\tilde{O}_j^a = 2(Z\otimes O_j^a)$ as
	\begin{align}
		&\braket{\psi_a | \tilde{O}_j^a | \psi_a} \notag \\
		&= \frac{1}{2} \left[ \braket{\phi_a | (L_{W_a}^+)^\dag O_j^a L_{W_a}^+ | \phi_a} -  \braket{\phi_a | (L_{W_a}^-)^\dag O_j^a L_{W_a}^- | \phi_a} \right] \notag \\ 
		&= \frac{\partial C}{\partial \theta_j^a}.
	\end{align}
	In this method, we can simultaneously measure multiple derivatives in the same block if their generators share the same commutation relations with the observable $O$ (i.e., the same $g_j^a$).
	This is because the measured observables $\tilde{O}_j^a$ are commutative, $[\tilde{O}_j^a, \tilde{O}_k^a]=0$, when $g_j^a=g_k^a$.
	Therefore, given that the generators of the final block commuting with $O$ do not contribute to the gradient, we can estimate the full gradient with only $2B-1$ types of quantum circuits.

	\subsection{Stabilizer formalism of commuting block circuits} \label{secap: stabilizer formalism}
	
	The SLPA is a general ansatz of CBC, i.e., any CBC can be formulated with stabilizers and logical Pauli operators.
	To show this, we consider a CBC with generators $\mcGc=\{G_j^a\}$ that satisfy the commutation relations of
	\begin{align}
		[G^a_j, G^a_k] = 0 \quad \forall j,k, \label{apeq: CBC sameBlock}
	\end{align}
	and
	\begin{align}
		[G^a_j,G^b_k] = 0  \quad \forall j,k  \quad \text{or} \quad \{G^a_j,G^b_k \} = 0  \quad \forall j,k.
		\label{apeq: CBC difBlock}
	\end{align}
	For these generators, we define stabilizers and logical Pauli operators as 
	\begin{align}
		&\mcS=\{G^a_1G^a_j \}_{j,a}=\{S_{ja}\}_{j,a}, \\
		&\mcL=\{G^a_1\}_{a}=\{L_{a}\}_{a},
	\end{align}
	where we have defined $S_{ja}=G^a_1G^a_j$ and $L_a=G^a_1$.
	These operator sets $\mcS$ and $\mcL$ not only satisfy the requirements of stabilizers and logical Pauli operators but also reproduce the original CBC.
	In other words, the CBC that has generators $\mcGc$ is equivalent to the SLPA constructed from $\mcS$ and $\mcL$.
	In fact, $\mcS$ and $\mcL$ obey the commutation relations of stabilizers and logical Pauli operators as $[S_{ja},S_{kb}]=[G_1^aG_j^a, G_1^bG_k^b] = 0$ and $[S_{ja},L_{b}]=[G_1^aG_j^a, G_1^b] = 0$, where we have used Eqs.~\eqref{apeq: CBC sameBlock} and \eqref{apeq: CBC difBlock}.
	Besides, the generators of CBC can be constructed from $\mcS$ and $\mcL$ by taking their products as $G_j^a =(G_1^a G_j^a)G_1^a = S_{ja}L_a$.
	Therefore, any CBC can be formulated with stabilizers and logical Pauli operators, which implies that the SLPA is a general ansatz of CBC.

	\subsection{Backpropagation scaling in SLPA} \label{secap: backprop SLPA}
	
	The stabilizer formalism of CBC clarifies when it can achieve the backpropagation scaling, which specifies the scalability of learning models with respect to the number of training parameters~\cite{Abbas2023-hy, Bowles2023-vf}.
	The backpropagation scaling is defined as
	\begin{align}
		\frac{\text{Time}(\nabla C)}{\text{Time}(C)} \leq \mO(\log L),
	\end{align}
	where $\text{Time}(C)$ and $\text{Time}(\nabla C)$ are the time complexity of estimating the cost function $C$ and its gradient $\nabla C$ with a certain accuracy.
	Although Ref.~\cite{Bowles2023-vf} proven that the backpropagation scaling is achievable for $B=1$, whether it is possible even for $B\neq 1$ remains unclear.

	In the CBC, the cost function can be estimated with one quantum circuit, whereas $2B-1$ quantum circuits are needed to estimate the gradient.
	Therefore, when ignoring the difference between the single circuit execution times for estimating $C$ and $\nabla C$, we have $\text{Time}(\nabla C)/\text{Time}(C) \sim B$.
	Also, in the stabilizer formalism, the CBC has at most $L=2^s B$ training parameters.
	Therefore, the backpropagation scaling is written as
	\begin{align}
		B \leq \mO(\log B + s\log 2).
	\end{align}
	This indicates that $s$ must increase proportionally to or faster than $B$ to achieve the backpropagation scaling.
	We note that this discussion ignores the circuit execution time, thus requiring a more thorough analysis to understand the precise condition for the backpropagation scaling.

	\section{DLA in numerical experiment}\label{secap: DLA of our ansatz}
	
	We prove the following lemma to understand the DLA in the numerical experiments of the main text:
	\begin{lem}
		Consider a set of generators
		\begin{align}
			\mcGc=\{X_j X_{j+1},Y_j Y_{j+1},Z_j Z_{j+1}\}_{j=1}^n
		\end{align}
		with even $n$. 
		Then, the Lie closure of $\mcGc$ is given by
		\begin{align}
			\mcG = \mcQ_\mcS,
		\end{align}
		where we have defined $\mcQ_\mcS=\mcP_\mcS \smallsetminus \mcS$ with
		\begin{align}
			&\mcS = \{I, \prod_{j=1}^n X_j, \prod_{j=1}^n Y_j, \prod_{j=1}^n Z_j\}, \\
			&\mcP_\mcS = \{P\in \mcPn \, | \, [P,S_j]=0 \,\, \forall S_j\in \mcS\}.
		\end{align}    
		This readily leads to $\text{dim}(\mfg)=|\mcG|=4^{n}/4-4$ by Lemma~\ref{lem: stab1}.
	\end{lem}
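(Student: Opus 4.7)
The plan is to prove the equality $\mcG = \mcQ_\mcS$ by establishing the two inclusions $\mcG \subseteq \mcQ_\mcS$ (easy direction) and $\mcQ_\mcS \subseteq \mcG$ (hard direction) separately. The easy direction follows from direct commutator verification together with a weight-exclusion argument, whereas the opposite inclusion requires an inductive construction of Pauli strings from nested commutators of the generators. Evenness of $n$ enters only to ensure that $\mcS$ is a four-element set of commuting Pauli strings (via $(\prod_j X_j)(\prod_j Y_j)(\prod_j Z_j) = i^n I = \pm I$), so that $\mcP_\mcS$ is genuinely the centralizer of a rank-two stabilizer group.

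For the easy direction, I would first check that each generator $P_jP_{j+1}$ with $P\in\{X,Y,Z\}$ commutes with every $S\in\mcS$: at each of the two non-identity sites, the local Pauli letter is the same, so the two single-site signs agree and the total sign is $(\pm 1)^2 = +1$. Since commutation with a fixed operator is preserved under nested commutators, this extends to the entire Lie closure and yields $\mcG \subseteq \mcP_\mcS$. To upgrade this to $\mcG \subseteq \mcQ_\mcS$, I would exclude the three nontrivial elements of $\mcS$ from $\mcG$ using that each such element has the extremal uniform-letter profile $n_P = n$ and $n_{P'} = 0$ for $P'\neq P$, whereas any nonzero commutator $[A,B]$ of Pauli strings produces a string with strictly mixed letter content at the overlap sites (a $[P,P']$ bracket converts two matching letters into a third). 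A short induction on the depth of the nested commutator then shows that no uniform-letter string of weight $n$ can be reached for $n\geq 4$.

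The hard direction is the main obstacle. I would proceed in two stages. The base stage shows that every pair operator $P_jP_k$ lies in $\mcG$, starting from the seed computation $[X_jX_{j+1}, Y_{j+1}Y_{j+2}] = 2i\, X_j Z_{j+1} Y_{j+2}$ and then bracketing the result against $Z_{j+1}Z_{j+2}$ to extract $X_jX_{j+2}$; iterating across the ring and applying the cyclic relabellings $X\to Y\to Z\to X$ reaches every $P_jP_k$. The inductive stage extends to operators of arbitrary support by repeatedly bracketing against nearest-neighbor generators: one bracket $[\,\cdot\,, P_jP_{j+1}]$ either grows the support by one site, permutes a single Pauli letter into another, or swaps neighboring letters, depending on the commutation pattern at the overlap site. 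These local moves together suffice to reach every Pauli string whose letter counts $n_X, n_Y, n_Z$ have the same parity.

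The main obstacle is closing the induction cleanly so that every such string is actually produced, not merely a generating subset. The cleanest way to finish is dimensional: the easy direction already gives $\dim(\mfg) \leq |\mcQ_\mcS| = 4^n/4 - 4$, so it suffices to exhibit that many linearly independent elements of $\mcG$ via the constructions above. Counting the produced operators using the parity classification of $\mcQ_\mcS$ then avoids an exhaustive case analysis, and the periodic boundary condition plays only a minor role, entering through the cyclic translation symmetry that ensures translated copies of each constructed operator also belong to $\mcG$.
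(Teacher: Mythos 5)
Your skeleton (two inclusions; pair operators as the base case; induction on support for the hard direction) matches the paper's, and your seed computation $[[X_jX_{j+1},Y_{j+1}Y_{j+2}],Z_{j+1}Z_{j+2}]\propto X_jX_{j+2}$ is exactly the one used there. However, there are two genuine gaps. First, your mechanism for excluding $\prod_jX_j$, $\prod_jY_j$, $\prod_jZ_j$ from $\mcG$ is wrong: the claim that nested commutators always yield "strictly mixed letter content," so that uniform-letter strings of weight $n$ are unreachable, would equally exclude $X_1X_2\cdots X_{n-2}$ — but that operator lies in $\mcQ_\mcS$ and the lemma asserts it \emph{is} in $\mcG$ (it is produced by a nested commutator using a site outside its support). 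Letter content is not an invariant of the bracket. The correct exclusion is a centrality argument: every element of $\mcS$ commutes with all of $\mcP_\mcS\supseteq\mcG$, whereas any element of $\mcG\smallsetminus\mcGc$ arises as a nonzero commutator $[P,Q]\propto PQ$ of anticommuting $P,Q\in\mcG$ and therefore anticommutes with $P\in\mcG$; since $\mcS\cap\mcGc=\varnothing$, no element of $\mcS$ can appear.

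Second, the hard direction is not actually closed. The "dimensional" finish is vacuous here: $\mcG$ is a subset of the Pauli basis $\mcQ_\mcS$, so exhibiting $|\mcQ_\mcS|$ linearly independent elements of $\mcG$ is literally the statement $\mcQ_\mcS\subseteq\mcG$ — no case analysis is avoided by counting. You must complete the induction, and the cases your "local moves" gloss over are precisely the delicate ones. (a) For a string with at least two distinct letters, one can multiply by a suitable weight-2 operator $R=\sigma^{\mu_k}_{\gamma_j}\sigma^{\mu_k}_{\gamma_k}$ to drop the weight by one and then recover $P\propto[R,Q]$ from the induction hypothesis; but (b) a uniform-letter string $X_{\gamma_1}\cdots X_{\gamma_{w+1}}$ cannot be reached this way and requires an auxiliary site $\gamma'$ outside its support, e.g.\ $P\propto[[S,Z_{\gamma_w}Z_{\gamma'}],Z_{\gamma_{w+1}}Z_{\gamma'}]$ with $S=X_{\gamma_1}\cdots X_{\gamma_{w-1}}Y_{\gamma_w}Y_{\gamma_{w+1}}$ already obtained from case (a); and (c) at weight $n$ no auxiliary site exists, so one must observe separately that every weight-$n$ element of $\mcQ_\mcS$ contains at least two distinct letters (the uniform ones being exactly $\mcS$) and falls under case (a). Without spelling out (b) and (c), the inclusion $\mcQ_\mcS\subseteq\mcG$ is asserted rather than proved.
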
    

	\begin{proof}
		We prove this lemma by showing $\mcG\subseteq \mcQ_\mcS$ and $\mcQ_\mcS \subseteq\mcG$.

		We first show $\mcG\subseteq \mcQ_\mcS$.
		One can easily verify $\mcG\subseteq \mcP_\mcS$ because all the generators in $\mcGc$ commute with $\mcS$.
		Thus, it suffices to show $\mcG \cap \mcS = \varnothing$.
		We prove this by contradiction.
		Assume $S_j\in\mcS$ is contained in $\mcG$.
		Then, because $S_j\notin \mcGc$, there exist anti-commuting Pauli operators $P,Q\in\mcG$ such that $S_j \propto [P,Q]=2PQ$.
		Then, we have $[S_j,P]=[2PQ,P]\neq 0$, which contradicts the fact that $S_j$ commutes with all the operators in $\mcP_\mcS$ and thus $\mcG$.
		Therefore, $\mcG$ does not contain $\mcS$, proving $\mcG\subseteq \mcQ_\mcS$.

		We then show $\mcQ_\mcS \subseteq\mcG$.
		For convenience, we call the number of $X, Y$, and $Z$ operators in a Pauli string $P$ as the weight of $P$.
		Below, we prove $\mcQ_\mcS \subseteq\mcG$ by mathematical induction with respect to the weight of Pauli strings.
		To this end, let us introduce the subset of $\mcQ_\mcS$ with weight-$w$:
		\begin{align}
			\mcQ_\mcS^w = \{P\in \mcQ_\mcS \, |\, \text{the weight of $P$ is $w$}\}.
		\end{align}
		The sum of $\mcQ_\mcS^w$ gives $\bigsqcup_{w=0}^n \mcQ_\mcS^w = \mcQ_\mcS$.
		For example, we have
		\begin{align}
			&\mcQ_\mcS^0 = \varnothing, \\
			&\mcQ_\mcS^{1} = \varnothing, \\
			&\mcQ_\mcS^2 = \{X_jX_k, Y_jY_k, Z_jZ_k\}_{j,k=1}^n, \\
			& \hspace{1cm}\vdots \notag
		\end{align}
		where we have used $[\mcQ_\mcS^w,\mcS]=0$ and $I\notin \mcQ_\mcS$.
		We also define the numbers of $X$, $Y$, and $Z$ operators in $P\in\mcPn$ as $w_X(P), w_Y(P)$, and $w_Z(P)$, respectively.
		By definition, $w_X(P)+w_Y(P)+w_Z(P)=w$ holds for $\forall P\in \mcQ_\mcS^w$.

		Let us prove $\mcQ_\mcS^w \subset \mcG$ by mathematical induction for $w$.
		Since $\mcQ_\mcS^{0} = \mcQ_\mcS^{1} = \varnothing$, it suffices to consider $w\geq2$.
		We first show that $\mcQ_\mcS^2 \subset \mcG$, i.e., $X_jX_k, Y_jY_k$, and $Z_jZ_k$ for $j<k$ are contained in $\mcG$.
		For example, let us consider $X_jX_k$.
		Because of $X_jX_{j+1}\in \mcGc$, we trivially have $X_jX_{j+1}\in \mcG$.
		Then, since $Y_{j+1}Y_{j+2},Z_{j+1}Z_{j+2}\in\mcG$, a nested commutator $[[X_jX_{j+1}, Y_{j+1}Y_{j+2}],Z_{j+1}Z_{j+2}]\propto X_{j}X_{j+2}$ is also contained in $\mcG$.
		Similarly, we have $[[X_jX_{j+2}, Y_{j+2}Y_{j+3}],Z_{j+2}Z_{j+3}]\propto X_{j}X_{j+3}\in\mcG$.
		Repeating this procedure, we can easily show that $X_jX_k \in \mcG$ for any $j<k$.
		In the same way, we can show $Y_jY_k, Z_jZ_k \in \mcG$ for any $j<k$, proving $\mcQ_\mcS^2 \subset \mcG$.

		Next, we prove $\mcQ_\mcS^w \subset \mcG$ for $3\leq w\leq n-1$ by mathematical induction (we prove the case of $w=n$ later).
		Assume $\mcQ_\mcS^w \subset \mcG$ for $2\leq w \leq n-2$.
		Then, we show that $\forall P\in \mcQ_\mcS^{w+1}$ is contained in $\mcG$ in two cases: (i) two or more of $w_X(P), w_Y(P), w_Z(P)$ are greater than one, and (ii) otherwise.
		\begin{enumerate}
			\item[(i)] 
			The weight-$(w+1)$ Pauli string is written as $P = \sigma^{\mu_1}_{\gamma_1} \cdots \sigma^{\mu_{w+1}}_{\gamma_{w+1}}$, where we have defined
			\begin{align}
				\sigma^\mu_j =
				\begin{cases}
					X_j & \mu=1 \\
					Y_j & \mu=2 \\
					Z_j & \mu=3
				\end{cases}
			\end{align}
			and the qubit indices $\gamma_1<\cdots<\gamma_{w+1}$.
			Given that two or more of $w_X(P), w_Y(P), w_Z(P)$ are greater than one, there necessarily exist $1\leq j<k \leq w+1$ such that $\mu_j \neq \mu_k$.
			Then, we define $Q=iPR$ using a weight-2 Pauli string $R=\sigma^{\mu_{k}}_{\gamma_{j}}\sigma^{\mu_{k}}_{\gamma_{k}}\in\mcQ_\mcS^2$.
			One can easily show that $Q$ is a weight-$w$ Pauli string and is contained in $\mcQ_\mcS^w$ because of $P\in \mcQ_\mcS^{w+1}$ and $R\in\mcQ_\mcS^2$.
			Therefore, we have $Q\in\mcG$ by assumption.
			We can construct $P$ by taking the commutator $P\propto [Q,R]$ (note that $\{Q,R\}=0$), proving $P \subset \mcG$ for the case (i).
			
			\item[(ii)]
			When $P$ contains only one of $X$, $Y$, and $Z$ operators, we can construct $P$ using another qubit $\gamma'$ that is different from $\gamma_1,\cdots,\gamma_{w+1}$.
			For example, let us consider a Pauli string that has only $X$ operators: $P = X_{\gamma_1} \cdots X_{\gamma_{w+1}}\in \mcQ_\mcS^{w+1}$ (then $w+1$ is even because of $[P,\prod_j Y_j]=0$).
			Given that any weight-$(w+1)$ Pauli string with two or more of $X, Y$, and $Z$ operators is contained in $\mcG$ according to the proof of the case (i), $S=X_{\gamma_1} \cdots X_{\gamma_{w-1}} Y_{\gamma_{w}} Y_{\gamma_{w+1}}\in \mcQ_\mcS^{w+1}$ is also contained in $\mcG$.
			Then, we can construct $P$ from $S, Z_{\gamma_{w}} Z_{\gamma'}, Z_{\gamma_{w+1}} Z_{\gamma'}\in\mcG$ as $P\propto [[S,Z_{\gamma_{w}} Z_{\gamma'}],Z_{\gamma_{w+1}} Z_{\gamma'}]$, showing that $P\in \mcG$.
			Similarly, Pauli strings that have only $Y$ or $Z$ operators are also contained in $\mcG$.
			Therefore, $P \subset \mcG$ holds for the case (ii).
		\end{enumerate}
		These results prove that $\mcQ_\mcS^w \subset \mcG$ for $2\leq w \leq n-1$ by mathematical induction.
		Finally, we prove $\mcQ_\mcS^{n} \subset \mcG$.
		Because $\forall P\in \mcQ_\mcS^n$ has at least two of $X$, $Y$, and $Z$ operators (note that $\mcQ_\mcS$ does not contain $\mcS$), we can construct $P$ from weight-$(n-1)$ and weight-2 Pauli strings $Q$ and $R$ in a similar way to the above discussion on the case (i).
		
		In summary, we show $\mcQ_\mcS^w \subset \mcG$ for any $w$, proving $\mcQ_\mcS = \bigsqcup_{w=0}^n \mcQ_\mcS^w \subseteq \mcG$.
		Therefore, together with $\mcG\subseteq \mcQ_\mcS$, we have proven this lemma.
	\end{proof}

	\section{Additional numerical results}
	
	This section provides additional numerical results to show the validity of the efficiency-expressivity trade-off and the wide applicability of SLPA.
	
	\subsection{Gradient measurement efficiency in disentangled ansatz} \label{secap: disentangled circuit}
	
	\begin{figure}[t]
		\centering
		\includegraphics[width=\linewidth]{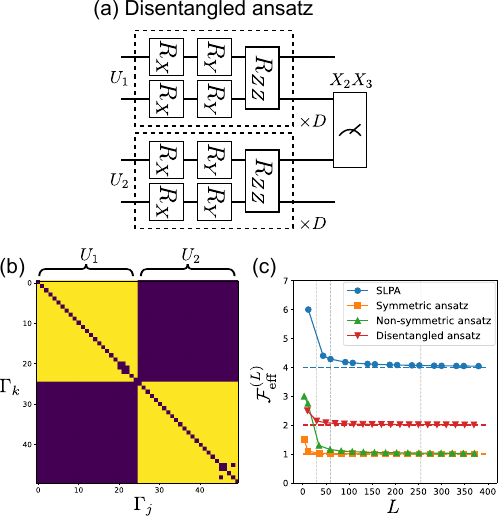}
		\caption{
			{\bf Gradient measurement efficiency in disentangled ansatz.}
			(a) The structure of the disentangled ansatz.
			The number of parameters is $L=10D$.
			(b) Commutators between two gradient operators $\grad_j(\bt)$ and $\grad_k(\bt)$ in the disentangled ansatz for $L=50$ parameters.
			The black and yellow regions represent $[\grad_j(\bt),\grad_k(\bt)]=0$ and $[\grad_j(\bt),\grad_k(\bt)]\neq 0$ for random $\bt$, respectively.
			We have rearranged the rows and columns such that the first (last) $L/2$ gradient operators correspond to those of $U_1$ ($U_2$).
			(c) Changes in gradient measurement efficiency when the number of parameters $L$ is varied.
			Their values are computed by minimizing the number of simultaneously measurable sets of $\grad_j(\bt)$'s for random $\bt$.
			The blue circles, orange squares, green triangles, and red inverted triangles are the results of SLPA and symmetric/non-symmetric/disentangled ansatzes, approaching four, one, and two in the limit of $L\to\infty$, respectively.
		}
		\label{fig: disentangled_circuit}
	\end{figure}

	Besides the quantum circuits investigated in the main text (SLPA and symmetric/non-symmetric ansatzes), we explore a disentangled ansatz to show the validity of the efficiency-expressivity trade-off.
	Let us consider the following disentangled quantum ansatz on $n=4$ qubits [see Fig.~\ref{fig: disentangled_circuit} (a)]:
	\begin{align}
		U_{\rm DE}(\bt) = U_1(\bt_1)\otimes U_2(\bt_2),
	\end{align}
	where $U_1(\bt_1)$ and $U_2(\bt_2)$ are parameterized quantum circuits acting on the first $n/2$ qubits and the remaining $n/2$ qubits, respectively, defined as
	\begin{align}
		&U_1(\bt_1) = \prod_{\ell=1}^D R_{Z_1Z_2}(\theta_{1,5}^{\ell})R_{Y_2}(\theta_{1,4}^{\ell})R_{Y_1}(\theta_{1,3}^{\ell}) \notag\\
		&\hspace{2cm}\times R_{X_2}(\theta_{1,2}^{\ell})R_{X_1}(\theta_{1,1}^{\ell}), \label{ap: DE_U1} \\
		&U_2(\bt_2) = \prod_{\ell=1}^D R_{Z_3Z_4}(\theta_{2,5}^{\ell})R_{Y_4}(\theta_{2,4}^{\ell})R_{Y_3}(\theta_{2,3}^{\ell}) \notag\\
		&\hspace{2cm}\times R_{X_4}(\theta_{2,2}^{\ell})R_{X_3}(\theta_{2,1}^{\ell}). \label{ap: DE_U2}
	\end{align}
	We also set the measurement observable as $O=X_2X_3$.
	
	The remarkable properties of this ansatz are low expressivity and the absence of stabilizer-type symmetry.
	As for the expressivity, since $U_{\rm DE}$ is disentangled to $U_1$ and $U_2$, the DLA of this ansatz is written as $\mfg = \mfg_1 \oplus \mfg_2$, where $\mfg_1$ and $\mfg_2$ are the DLA of $U_1$ and $U_2$.
	For the specific form of the ansatz given in Eqs.~\eqref{ap: DE_U1} and \eqref{ap: DE_U2}, we have $\mfg_1=\mfg_2=\mathfrak{su}(2^{n/2})$~\cite{Larocca2022-so}.
	Therefore, the expressivity of $U_{\rm DE}$ is $\ex=2\times\text{dim}[\mathfrak{su}(2^{n/2})]=30$, leading to the upper bound of gradient measurement efficiency $\ef\leq 6$ according to the trade-off $\ex\leq 4^n/\ef - \ef$.
	Meanwhile, this ansatz has no stabilizer-type symmetry. 
	That is, for any Pauli operator $P\in \mcPn \smallsetminus \{I\}$, there exists $\bt$ such that $[U_{\rm DE}(\bt),P]\neq 0$, which is derived from $\mfg = \mathfrak{su}(2^{n/2}) \oplus \mathfrak{su}(2^{n/2})$.
	In summary, unlike the SLPA and symmetric/non-symmetric ansatzes investigated in the main text, $U_{\rm DE}$ is an example of parameterized quantum circuits with low expressivity despite the absence of stabilizer-type symmetry.

	Figure~\ref{fig: disentangled_circuit} (b) depicts the commutation relation between the gradient operators $\grad_j$ and $\grad_k$ for $L=50$ parameters, where the black and yellow regions represent $[\grad_j,\grad_k]=0$ and $[\grad_j,\grad_k]\neq0$, respectively.
	Here, to improve readability, we have rearranged the rows and columns of Fig.~\ref{fig: disentangled_circuit} (b) such that the first (last) $L/2$ gradient operators correspond to those of $U_1$ ($U_2$).
	As shown in the figure, the gradient operators of $U_1$ commute with all gradient operators of $U_2$, and vice versa, implying that the gradient components of $U_1$ and $U_2$ can be measured simultaneously.
	This result is in agreement with Lemma~\ref{thm: not connected}, where the disentanglement of $U_1$ and $U_2$ leads to the separability of $\mfg_1$ and $\mfg_2$ (see also Definition~\ref{def: DLA-separability}).
	On the other hand, most pairs of the gradient operators within $U_1$ or $U_2$ are not commutative, indicating that we cannot measure two or more gradient components within $U_1$ or $U_2$ simultaneously.
	These observations imply that the gradient measurement efficiency, i.e., the mean number of simultaneously measurable gradient components, is almost two in the disentangled ansatz.
	Therefore, this ansatz is an example in which $\ef>1$ is due to a mechanism other than stabilizer-type symmetry.

	We investigate the gradient measurement efficiency in more detail in Fig.~\ref{fig: disentangled_circuit} (c), where $\ef^{(L)}$ is plotted with varying the number of parameters $L$. 
	As discussed above, the gradient measurement efficiency of the disentangled ansatz (the red inverted triangles) approaches $\ef=2$ in the $L\to\infty$ limit, which is less than the upper bound $\ef\leq6$ derived from the efficiency-expressivity trade-off.
	Therefore, this result supports the validity of our theoretical predictions even for circuits without stabilizer-type symmetry.

	\subsection{Trainability of SLPA} \label{secap: barren plateau}

	\begin{figure*}[t]
		\centering
		\includegraphics[width=\linewidth]{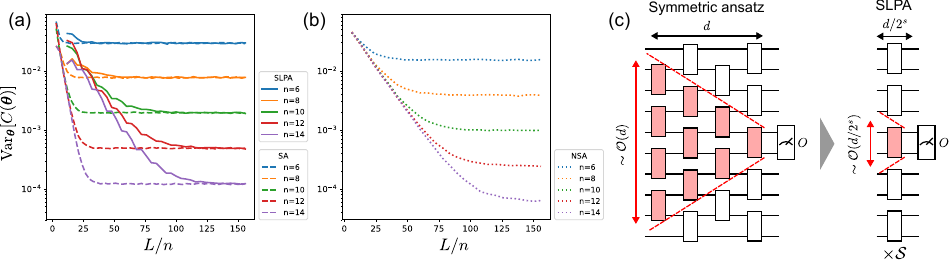}
		\caption{
			{\bf Barren plateaus in SLPA.}
			(a)--(b) The variances of the cost function are plotted for several numbers of qubits $n$ and numbers of parameters $L$.
			The variance is computed by sampling 10000 parameter points from the uniform distribution $\bt\in [-\pi/2,\pi/2]^L$.
			The solid, dashed, and dotted lines indicate the results of the SLPA, symmetric ansatz (SA), and non-symmetric ansatz (NSA), respectively.
			The initial state is $\ket{0}^{\otimes n}$, and the measurement observable is $O=X_1X_2$.
			(c) Adjoint action of the local symmetric ansatz on the local observable.
			In the SLPA, given that the total number of Pauli rotation gates is the same as the symmetric ansatz, the circuit depth is effectively reduced by a factor of $2^s$, leading to high trainability.
		}
		\label{fig: barren_plateau}
	\end{figure*}

	The barren plateau is a phenomenon in which the cost function landscape $C(\bt)$ becomes flat exponentially in the number of qubits $n$~\cite{Larocca2024-vh}.
	Formally, it is defined as
	\begin{align}
		\text{Var}_{\bt} [C(\bt)] \in \mO(b^n)
	\end{align}
	for some $0<b<1$.
	When this phenomenon occurs, optimizing the QNN requires the estimation of the cost function (or its gradient) with exponential accuracy, resulting in an exponentially high measurement cost.
	This is the most critical obstacle to achieving exponential quantum speedups in variational quantum algorithms.
	The barren plateau occurs due to various factors, such as high circuit expressivity~\cite{McClean2018-qf}, strong entanglement~\cite{Ortiz-Marrero2021-lx}, global measurement observables~\cite{Cerezo2021-tq}, and hardware noise~\cite{Wang2021-xp}.
	In particular, it is known that a parameterized quantum circuit consisting of local quantum gates and a local observable (e.g., hardware-efficient ansatz) exhibits the barren plateau of $\text{Var}_{\bt} [C(\bt)]\sim 1/\text{dim}(\mfg)$ if the circuit depth scales as $\Omega(n)$~\cite{McClean2018-qf, Ragone2024-hl, Fontana2024-ky}.
	Understanding the barren plateau is essential to realize practical QNNs with high trainability.

	Here, we investigate the barren plateau of the SLPA.
	A potential concern is that the (potentially global) multi-Pauli rotations within the SLPA could lead to a barren plateau even in shallow circuits.
	However, this concern can be ignored; the global operators do not directly induce a barren plateau in the SLPA. 
	Rather, our model is less prone to the barren plateau phenomenon than the symmetric ansatz when the circuit is shallow.

	First of all, numerical evidence is presented in Figs.~\ref{fig: barren_plateau} (a) and (b), which shows the variance of the cost function in the parameter space for the SLPA and symmetric and non-symmetric ansatzes used in the main text with varying the number of qubits $n$ and the number of parameters $L$.
	In the symmetric ansatz, $\text{Var}_{\bt}[C(\bt)]$ decreases as the circuit depth and finally converges to a finite value at a depth of $L/n \sim \mO(n)$.
	The converged value of $\text{Var}_{\bt}[C(\bt)]$ becomes exponentially small as the number of qubits $n$ increases, which indicates that a barren plateau occurs in the deep symmetric ansatz.
	In the SLPA, we observe a similar behavior, where $\text{Var}_{\bt}[C(\bt)]$ decreases as $L/n$ and finally converges to almost the same value as the symmetric ansatz.
	However, the convergence of $\text{Var}_{\bt}[C(\bt)]$ is slower than that of the symmetric ansatz.
	This result implies the higher trainability of our model in shallow circuits.
	Similarly, the deep non-symmetric ansatz also exhibits a barren plateau, where the variance converges to a smaller value than those of the SLPA and the symmetric ansatz due to the higher expressivity.

	We provide an intuitive mechanism for the slow convergence of the variance of the cost function in the SLPA compared to the symmetric ansatz.
	To this end, we follow the argument in Ref.~\cite{Cerezo2023-hz}.
	Let us consider the observable in the Heisenberg picture $\tilde{O}(\bt)= U^\dag(\bt)O U(\bt)$, where the cost function is written as the Hilbert–Schmidt inner product of the input state $\rho$ and the observable $\tilde{O}(\bt)$: $C(\bt)=\tr [ \rho \tilde{O}(\bt) ]$.
	When $\tilde{O}(\bt)$ lives in an exponentially large subspace, the inner product $\tr [ \rho \tilde{O}(\bt) ]$ will be exponentially small due to the curse of dimensionality, which is the fundamental cause of barren plateaus.
	To quantify this mechanism, let $\mcB$ be the operator subspace where $\tilde{O}(\bt)$ lives.
	That is, $\tilde{O}(\bt)$ can be expanded as $\tilde{O}(\bt) = \sum_{P \in \mcB} \alpha_P(\bt) P$ with Pauli operators $P$.
	Then, given that the cost function is the inner product of $\tilde{O}(\bt)$ and $\rho$ in the subspace $\mcB$, the variance of the cost function decreases inversely proportional to the dimension of $\mcB$: $\text{Var}_{\bt}[C(\bt)]\sim 1/\text{poly}(\text{dim}(\mcB))$.
	For example, an exponentially large $\text{dim}(\mcB)$ generally results in an exponentially small variance of the cost function.
	Therefore, $\text{dim}(\mcB)$ gives an insight into barren plateaus.

	We investigate $\text{dim}(\mcB)$ of the symmetric ansatz and the SLPA to show that our model has smaller $\text{dim}(\mcB)$ when the stabilizers commute with the observable $O$.
	Suppose that $\tilde{O}(\bt)$ in the symmetric ansatz is expanded as
	\begin{align}
		\tilde{O}_\text{SA}(\bt) = \sum_{P \in \mcB_\text{SA}} \alpha_P(\bt) P,
	\end{align}
	where $P$ is a Pauli operator and $\alpha_P(\bt)$ is a coefficient.
	In the SLPA, given that the stabilizers $\mcS=\{S_1,S_2,\cdots\}$ are closed under products (i.e., $S_jS_k \in \mcS$), we have
	\begin{align}
		\tilde{O}_\text{SLPA}(\bt) 
		&= \sum_{S\in \mcS} \sum_{P \in \mcB_\text{SA}} \beta_{P,S}(\bt) PS, \label{apeq: Ocbc}
	\end{align}
	where $\beta_{P,S}(\bt)$ is a coefficient.
	One can verify Eq.~\eqref{apeq: Ocbc} using $[S_j,S_k]=[S_j,L_k]=[S_j,O]=0$ and $e^{-i\theta Q} P e^{i\theta Q}= (\cos^2 \theta \pm \sin^2 \theta)P - i(\cos\theta\sin\theta \mp \cos\theta\sin\theta)PQ$ iteratively in $\tilde{O}(\bt)$ ($P,Q$ are Pauli operators, and $\pm$ corresponds to $[P,Q]=0$ and $\{P,Q\}=0$ respectively).
	Equation~\eqref{apeq: Ocbc} leads to the operator subspace of SLPA as $\mcB_\text{SLPA} = \mcB_\text{SA}\times \mcS$.

	This discussion gives the relationship between the dimensions of the operator subspaces in the symmetric ansatz and the SLPA: 
	\begin{align}
		\text{dim}(\mcB_\text{SLPA}) \leq 2^s \text{dim}(\mcB_\text{SA}).
	\end{align}
	Therefore, if $s=\mO(\text{polylog}(n))$, the SLPA never exponentially spoils the trainability of the symmetric ansatz. 
	For example, let us consider a local quantum circuit.
	When the observable $O$ is local~\cite{Cerezo2021-tq}, since $\tilde{O}(\bt)$ acts only on $\mO(d)$ qubits inside the backward light cone as shown in Fig.~\ref{fig: barren_plateau} (c) ($d$ is the circuit depth), we have
	\begin{align}
		\text{dim}(\mcB_\text{SA})\sim \mO(4^{d})  \label{apeq: dimBsc}
	\end{align}
	and thus
	\begin{align}
		\text{dim}(\mcB_\text{SLPA})\lesssim \mO(4^{d+s/2}). \label{apeq: dimBcbc}
	\end{align}
	If $d$ and $s$ are $\mO(\text{polylog}(n))$, we obtain $\text{dim}(\mcB_\text{SLPA})\in \mO(\text{poly}(n))$, indicating that the SLPA does not suffer from barren plateaus.

	We note that the above discussion does not take into account the differences in the number of rotation gates between $U_\text{SA}$ and $U_\text{SLPA}$.
	To fairly compare $\text{dim}(\mcB)$ of the symmetric ansatz and the SLPA, we should match the total number of rotation gates for these two models.
	In the SLPA, replacing $d$ with $d/2^s$ in Eq.~\eqref{apeq: dimBcbc} to match the total number of gates [see Fig.~\ref{fig: barren_plateau} (c)], we have
	\begin{align}
		\text{dim}(\mcB_\text{SLPA})\lesssim \mO(4^{d/2^s+s/2}). \label{apeq: dimBcbc2}
	\end{align}
	Comparing Eq.~\eqref{apeq: dimBcbc2} with Eq.~\eqref{apeq: dimBsc}, $\text{dim}(\mcB_\text{SLPA})$ can be smaller than $\text{dim}(\mcB_\text{SA})$ when $d/2^s + s/2\lesssim d$, which suggests that the SLPA can have a larger variance of the cost function than the symmetric ansatz with the same number of rotation gates.  
	We emphasize that since this result does not rely on the details of the model, except for the localities of $U_\text{SA}$ and $O$ and some commutation relations, we would observe similar behaviors in other SLPAs.
	Finally, in sufficiently deep circuits, $\text{Var}_{\bt}[C(\bt)]$ converges to the same value in both models because the variance of the cost function in the deep circuit limit is determined by the expressivity~\cite{Larocca2022-so, Ragone2024-hl, Fontana2024-ky}, $\text{Var}_{\bt}[C(\bt)]\sim 1/\ex$, and these two models have the same $\ex=4^n/4 - 4$.

	\subsection{Application to quantum phase recognition} \label{secap: QPR}

	Here, we numerically demonstrate that the SLPA exhibits high training efficiency and classification accuracy in a quantum phase recognition task without an explicit symmetric target function.

	Let us consider the one-dimensional spin-$1/2$ bond-alternating XXZ model on an $n$-qubit system.
	The Hamiltonian is given by
	\begin{align}
		H(J) 
		&= \sum_{j=1}^{n/2} \left(X_{2j-1}X_{2j} + Y_{2j-1}Y_{2j} + \delta Z_{2j-1}Z_{2j}\right) \notag \\
		&+ J\sum_{j=1}^{n/2-1} \left(X_{2j}X_{2j+1} + Y_{2j}Y_{2j+1} + \delta Z_{2j}Z_{2j+1}\right)
	\end{align}
	with $J,\delta \in \mathbb{R}$.
	We fix $n=8$ and $\delta=0.5$ in this work.
	For $\delta=0.5$, the ground state of this Hamiltonian, $\ket{g(J)}$, exhibits the trivial phase for $J\leq J_c$ and the symmetry-protected topological (SPT) phase for $J\geq J_c$, where $J_c\sim 1$~\cite{Elben2020-ip}.
	This Hamiltonian commutes with all $S_j\in \mcS$,
	\begin{align}
		[H(J),S_j]=0,
	\end{align}
	where $\mcS$ is a stabilizer group:
	\begin{align}
		&\mcS=\left\{ I, \prod_{j=1}^n X_j, \prod_{j=1}^n Y_j, \prod_{j=1}^n Z_j \right\}.
	\end{align}
	Hence, the ground state of $H(J)$ is an eigenstate of $S_j$: $S_j \ket{g(J)} \propto \ket{g(J)}$.

	Here, we assume that the ground state is disturbed by local noise as 
	\begin{align}
		\ket{\tilde{g}(J;\bm{\alpha},\bm{\beta})}=R(\bm{\alpha},\bm{\beta})\ket{g(J)},
	\end{align}
	where $R(\bm{\alpha},\bm{\beta})=\prod_{j=1}^n \exp(i \beta_j Y_j) \exp(i \alpha_j X_j)$.
	Let $\mc{D}(J)$ be the data distribution of $\ket{\tilde{g}(J;\bm{\alpha},\bm{\beta})}$ in which $J$ is fixed and $\alpha_j,\beta_j$ are sampled from the normal distribution with zero mean and $\pi/10$ standard deviation. 
	Also, let $\mc{D}$ be the mixture distribution of $\mc{D}(J)$ in which $J$ is sampled from the uniform distribution of $[0,2]$.
	The task here is classifying the noisy ground state into the trivial and SPT phases, i.e., classifying which distribution an unseen data was sampled from, $[\mc{D}(J)|J\leq J_c]$ or $[\mc{D}(J)|J\geq J_c]$.
	We use $N$ training data $\{ \ket{\phi_i},y_i \}_{i=1}^{N}$ to learn the quantum phases, where $\ket{\phi_i}$ is sampled from $\mc{D}$ and $y_i$ is the corresponding label defined as $y_i=1$ ($y_i=0$) for the trivial (SPT) phase.
	We also use $M$ test data sampled from the same distribution $\mc{D}$ to validate the accuracy.
	We set $N=200$ and $M=500$, respectively.

	The data distribution $\mc{D}(J)$ is invariant under the action of $\mcS$.
	To see this, we consider the action of $S_j\in \mcS$ on a data $\ket{\tilde{g}(J;\bm{\alpha},\bm{\beta})}$:
	\begin{align*}
		S_j \ket{\tilde{g}(J;\bm{\alpha},\bm{\beta})} 
		&= S_j R(\bm{\alpha},\bm{\beta}) \ket{g(J)} \\
		&= R(\pm\bm{\alpha},\pm\bm{\beta}) S_j \ket{g(J)} \\
		&\propto R(\pm\bm{\alpha},\pm\bm{\beta}) \ket{g(J)} \\
		&= \ket{\tilde{g}(J;\pm\bm{\alpha},\pm\bm{\beta})},
	\end{align*}
	where we have used $S_j R(\bm{\alpha},\bm{\beta}) = R(\pm\bm{\alpha},\pm\bm{\beta}) S_j$ in the second line (the signs $\pm$ depend on $S_j$) and $S_j \ket{g(J)} \propto \ket{g(J)}$ in the third line.
	Therefore, given that $\bm{\alpha}$ and $\bm{\beta}$ follows the normal distribution with zero mean, the probabilities of sampling $\ket{\tilde{g}(J;\bm{\alpha},\bm{\beta})}$ and $S_j \ket{\tilde{g}(J;\bm{\alpha},\bm{\beta})}$ from $\mc{D}(J)$ are identical.
	In general, the following invariance holds for any $\ket{\phi}$:
	\begin{align*}
		\text{Prob}[\ket{\phi} \sim \mc{D}(J)] 
		= \text{Prob}[S_j\ket{\phi} \sim \mc{D}(J)], \label{eq: qpr_inv}
	\end{align*}
	where $\text{Prob}[\ket{\phi} \sim \mc{D}(J)]$ is the probability density of sampling $\ket{\phi}$ from $\mc{D}(J)$.
	In light of this invariance, an equivariant model incorporates the symmetry into the circuit, assuming $p_{\bt}^i(\ket{\phi}) = p_{\bt}^i(S_j\ket{\phi})$. 
	Here, $p_{\bt}^i(\ket{\phi})$ represents the probability, predicted by a QNN with $U(\bt)$, that $\ket{\phi}$ belongs to the trivial ($i=1$) or the SPT ($i=2$) phase.
	In the following, we numerically demonstrate that this symmetry encoding would help improve trainability and generalization.
	Note that this is the invariance of the data distribution rather than the target function to learn.
	Therefore, this task has a different type of symmetry from the symmetric function learning investigated in the main text.

	To solve this task, we employ the same parameterized quantum circuits as those used in the main text: the SLPA and the symmetric and non-symmetric ansatzes (see Methods for details).
	We fix the number of training parameters as $L=768$ for all the models.
	The SLPA and the symmetric ansatz are symmetric under the stabilizer group $\mcS$, which would help learn the data class in $\mc{D}$.
	To estimate the gradient, we use the parameter-shift method for the symmetric and non-symmetric ansatzes and the linear combination of unitaries with an ancilla qubit for the SLPA,  where $1000$ measurement shots are used per circuit.

	\begin{figure*}[t]
		\centering
		\includegraphics[width=\linewidth]{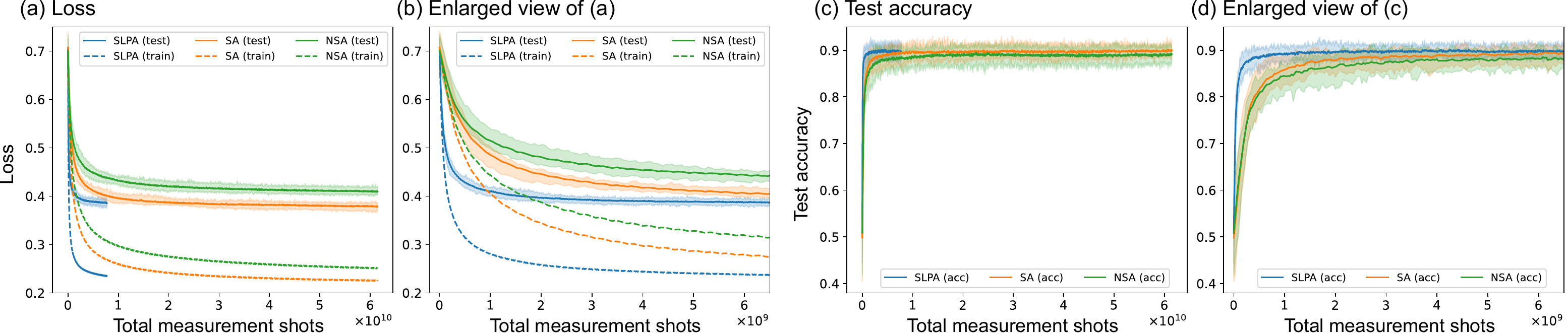}
		\caption{
			{\bf Results of quantum phase recognition.}
			Changes in (a) training and test losses and (c) test accuracy.
			(b) and (d) are the enlarged views of (a) and (c), respectively.
			The horizontal axis is the cumulative number of measurement shots.
			The blue, orange, and green lines represent the results for the SLPA, symmetric ansatz (SA), and non-symmetric ansatz (NSA), respectively.
			The shaded areas are the maximum and minimum of the test loss and accuracy for 20 sets of random initial parameters.
			The numbers of qubits and parameters are $n=8$ and $L=768$.
		}
		\label{fig: qpr}
	\end{figure*}

	Based on the outputs from the quantum circuits, the logistic regression classifies the quantum phases.
	To this end, we define the probabilities that $\ket{\phi}$ is in the trivial and SPT phases as
	\begin{align}
		&p^1_{\bt}(\ket{\phi}) = \frac{1}{1 + \exp\left(- h_{\bt}(\ket{\phi}) \right)}, \\
		&p^2_{\bt}(\ket{\phi}) = \frac{1}{1 + \exp\left( h_{\bt}(\ket{\phi}) \right)},
	\end{align}
	where $h_{\bt}(\ket{\phi})=\gamma\braket{\phi | U^\dag(\bt) O U(\bt) | \phi}$ is the expectation value of the observable $O=X_1X_2$ for a time-evolved input state $U(\bt) \ket{\phi}$.
	We set $\gamma=5$.
	Note that $p_{\bt}^i$ satisfies the positivity and the conservation of probabilities: $p_{\bt}^i\geq0$ and $p_{\bt}^1+p_{\bt}^2=1$.
	In the SLPA and symmetric ansatz, $[U(\bt),\mcS]=[O,\mcS]=0$ ensures the invariance of $p_{\bt}^i(\ket{\phi})=p_{\bt}^i(S_j\ket{\phi})$.
	To train the models, we use the following cross entropy as a loss function:
	\begin{align}
		L(\bt) 
		= -\frac{1}{N} &\sum_{k=1}^{N} \big[ y_k \log p_{\bt}^1(\ket{\tilde{g}(J_i;\mathcal{\alpha}_i,\mathcal{\beta}_i)}) \notag \\
		&+ (1-y_k) \log p_{\bt}^2(\ket{\tilde{g}(J_i;\mathcal{\alpha}_i,\mathcal{\beta}_i)}) \big].
	\end{align}
	We optimize this loss function using the Adam algorithm~\cite{Kingma2014-db}, where the hyper-parameter values are initial learning rate $=10^{-3}$, $\beta_1=0.9$, $\beta_2=0.999$, and $\eta=10^{-8}$.
	We also employ the stochastic gradient descent~\cite{Robbins1951-ql}, where only one training data is used to measure the gradient at each iteration.

	Figure~\ref{fig: qpr} shows the numerical result of the changes in training loss, test loss, and test accuracy during training.
	As expected, the SLPA exhibits faster convergence of training curves than the other two models in terms of the cumulative number of measurement shots.
	This is because four gradient components can be measured simultaneously in the SLPA ($\ef=4$), whereas only one gradient component can be measured at a time in the symmetric and non-symmetric ansatzes ($\ef=1$).
	Furthermore, the parameter-shift method used in the symmetric and non-symmetric ansatzes needs twice the number of circuits for estimating a gradient component than the linear combination of unitaries used in the SLPA.
	Consequently, the SLPA requires only one-eighth the number of measurement shots per epoch compared to other models, resulting in a significant reduction in training sample complexity.

	Besides high training efficiency, the SLPA (and the symmetric ansatz) demonstrates superior generalization performance compared to the non-symmetric ansatz, exhibiting lower test loss and higher test accuracy in Fig.~\ref{fig: qpr}. 
	This suggests that incorporating the symmetry of the data distribution into the circuit improves generalization performance by imposing reasonable constraints and reducing the model space without sacrificing accuracy. 
	These results support the high training efficiency and generalization capability of SLPA in this task, indicating its potential for broad applications to various problems with symmetry.

	
	%

\end{document}